\documentclass[sigconf]{acmart}

\usepackage{amsmath,amsfonts,amsthm}
\usepackage{algorithmic}
\usepackage{graphicx}
\usepackage{float}
\usepackage{textcomp}
\usepackage{xcolor}
\usepackage{hyperref}
\usepackage{subfigure}
\usepackage[linesnumbered,ruled,vlined]{algorithm2e}
\usepackage{tabularx}
\usepackage{enumitem}
\usepackage{multirow}

\SetKwRepeat{Do}{do}{while}
\SetKw{Continue}{continue}
\SetKwInput{KwProcedure}{Procedure}

\AtBeginDocument{%
  }

\setcopyright{acmlicensed}
\copyrightyear{2018}
\acmYear{2018}
\acmDOI{XXXXXXX.XXXXXXX}

\acmConference[Conference acronym 'XX]{Make sure to enter the correct
  conference title from your rights confirmation emai}{June 03--05,
  2018}{Woodstock, NY}
\acmISBN{978-1-4503-XXXX-X/18/06}




\begin{document}

\title{Maximum $k$-Plex Search: An Alternated Reduction-and-Bound Method}




\author{Shuohao Gao}
\affiliation{%
  \institution{Harbin Institute of Technology, Shenzhen}
  \country{China}
}
\email{200111201@stu.hit.edu.cn}

\author{Kaiqiang Yu}
\affiliation{%
  \institution{Nanyang Technological University}
  \country{Singapore}
}
\email{kaiqiang002@e.ntu.edu.sg}

\author{Shengxin Liu}
\affiliation{%
  \institution{Harbin Institute of Technology, Shenzhen}
  \country{China}
}
\email{sxliu@hit.edu.cn}

\author{Cheng Long}
\affiliation{%
  \institution{Nanyang Technological University}
  \country{Singapore}
}
\email{c.long@ntu.edu.sg}

\renewcommand{\shortauthors}{Trovato et al.}

\begin{abstract}
  $k$-plexes relax cliques by allowing each vertex to disconnect to at most $k$ vertices. Finding a maximum $k$-plex in a graph is a fundamental operator in graph mining and has been receiving significant attention from various domains. 
  The state-of-the-art algorithms all adopt the branch-reduction-and-bound (BRB) framework where a key step, called \emph{reduction-and-bound} (RB), is used for narrowing down the search space. A common practice of RB in existing works is \texttt{SeqRB}, which \emph{sequentially} conducts the reduction process followed by the bounding process \emph{once} at a branch. However, these algorithms suffer from the efficiency issues.
  In this paper, we propose a new \emph{alternated reduction-and-bound} method \texttt{AltRB} for conducting RB. \texttt{AltRB} first partitions a branch into two parts and then \emph{alternatively} and \emph{iteratively} conducts the reduction process and the bounding process at each part of a branch. With newly-designed reduction rules and bounding methods, \texttt{AltRB} is superior to \texttt{SeqRB} in effectively narrowing down the search space \emph{in both theory and practice}.
  Further, to boost the performance of BRB algorithms, we develop efficient and effective pre-processing methods which reduce the size of the input graph and heuristically compute a large $k$-plex as the lower bound.
  We conduct extensive experiments on 664 real and synthetic graphs. 
  The experimental results show that our proposed algorithm \texttt{kPEX} with \texttt{AltRB} and novel pre-processing techniques runs up to two orders of magnitude faster and solves more instances than state-of-the-art algorithms.
  \end{abstract}

\maketitle

\section{Introduction}\label{sec:into}
The graph model serves as a versatile tool for abstracting numerous real-world data which captures relationships between diverse entities in social networks, biological networks, publication networks, and so on. 
Cohesive subgraph mining is one of the central topics in graph analysis and data mining where the objective is to mine those \emph{dense or cohesive} subgraphs that normally bring valuable insights for analysis~\cite{LRJ+10,CQ18,HLX19,FHQ+20survey,FW+21cohesive}.
For example, cohesive subgraph mining has been used to detect a terrorist cell in social networks~\cite{krebs2002mapping}, to identify protein complexes in biological networks~\cite{zhang2014finding}, and to find a group of research collaborators in publication networks~\cite{guo2022maximal}.

The \emph{clique} is arguably the most well-known cohesive subgraph where every pair of distinct vertices is connected by an edge.
In the literature, the study of efficient algorithms for extracting the maximum clique or enumerating maximal cliques is extensive, e.g., \cite{CP90,PardalosX94a,Tomita17,Cha19,conte2016finding,ELS13,NAUDE201628,TOMITA200628}.
Nevertheless, clique, being a tightly interconnected subgraph, is over-restrictive, which limits its practical usefulness.
To circumvent this issue, relaxations of clique have been proposed and studied in the literature, such as $k$-plex~\cite{seidman1978graph}, $k$-core~\cite{seidman1983network}, quasi-clique~\cite{guo2020scalable,yu2023fast}, and $k$-defective clique~\cite{chang2023efficient,dai2023maximal}.
In particular, $k$-plex relaxes clique by allowing each vertex to disconnect to at most $k$ vertices (including the vertex itself). It is clear that 1-plex corresponds to clique.
The research of cohesive subgraph mining in the context of $k$-plex has recently attracted increasing interests~\cite{XLD+17,wang2023fast,GCY+18,zhou2021improving,JZX+21,CXS22,jiang2023refined,DLQ+22,wang2022listing,chang2024maximum}.

In this paper, we study the \emph{maximum $k$-plex search} problem which aims to search the $k$-plex with the largest number of vertices in the given graph.
It is well-known that the maximum $k$-plex search problem is NP-hard for any fixed $k$~\cite{BBH11}.
Thus, existing studies and ours focus on designing practically efficient algorithms. 

\smallskip

\noindent \underline{\bf Existing algorithms.} The state-of-the-art algorithms all (conceptually) adopt the \emph{branch-reduction-and-bound} (BRB) framework~\cite{GCY+18,zhou2021improving,JZX+21,CXS22,jiang2023refined,wang2023fast,chang2024maximum}. The idea is to recursively solve the problem instance (or branch) by solving the subproblem instances (or sub-branches) produced via a process of \emph{branching}.
A branch denoted by $(S,C)$ corresponds to a problem instance of finding the largest $k$-plex from the subgraph (of the input graph) induced by vertex set $S\cup C$, where the partial solution $S$ corresponds to a $k$-plex and the candidate set $C$ corresponds to the set of vertices used to expand the partial solution. We refer the search space of a branch to the set of possible $k$-plexes in the subgraph induced by $S\cup C$. 
At each branch, a key step, named \emph{reduction-and-bound} (RB), is performed for narrowing down the search space. We note that existing studies all follow a sequential framework, called \texttt{SeqRB}, for implementing the RB step. Specifically, \texttt{SeqRB} \emph{sequentially} conducts two processes \emph{once}: 1) the reduction process shrinks the candidate set $C$ by removing some unpromising vertices that cannot appear in the largest $k$-plex; and 2) the bounding process computes the upper bound of the size of the largest $k$-plex in the branch refined by the first step, which is used for pruning unnecessary branches (i.e., with the upper bound no larger than the largest $k$-plex seen so far). The rationale behind is that with some vertices being removed by the reduction process, the bounding process may obtain a smaller upper bound so as to prune more branches.

Existing studies focus solely on sharpening the reduction rules and upper bound computation methods used in \texttt{SeqRB} while devoting little effort to improving the whole RB framework.
We observe that, in \texttt{SeqRB}, the reduction process benefits the bounding process, but not the other way; and thus they are sequentially conducted only once. 
One interesting question is that: \emph{Can we design a new RB framework where the reduction process and the bounding process can benefit each other}?

We remark that some recent studies~\cite{zhou2021improving,CXS22,jiang2023refined,wang2023fast,chang2024maximum} boost the practical performance of BRB algorithms by devising various pre-processing techniques. These techniques include 1) graph reduction algorithms ~\cite{zhou2021improving,CXS22} for reducing the size of the input graph (among which the best one is \texttt{CTCP}~\cite{CXS22}); and 2) heuristic algorithms~\cite{zhou2021improving,CXS22,chang2024maximum} for computing an initial large $k$-plex used for the above-mentioned reduction algorithms (among which the best ones are \texttt{kPlex-Degen}~\cite{CXS22} and \texttt{EGo-Degen}~\cite{chang2024maximum}).  

\smallskip

\noindent \underline{\bf Our new methods.}
In this paper, we first propose a new framework, called \emph{alternated reduction-and-bound} (\texttt{AltRB}), for conducting the RB step at a branch $(S,C)$. 
\texttt{AltRB} differs from \texttt{SeqRB} mainly in the way of conducting the reduction process and the bounding process.
Specifically, \texttt{AltRB} first partitions a branch into two parts (i.e., $S=S_L\cup S_R$ and $C=C_L\cup C_R$). With newly-designed reduction rules and upper bound computation methods on each part, 
\emph{the bounding process on one part will benefit the reduction process on the other} (note that the reduction process still benefits the bounding process on the same part, which is the same as \texttt{SeqRB}). Thus, \texttt{AltRB} \emph{alternatively} and \emph{iteratively} conducts the reduction process and the bounding process at each part of a branch (e.g., bounding on $S_L\cup C_L$ $\rightarrow$ reduction on $S_R\cup C_R$ $\rightarrow$ bounding on $S_R\cup C_R$ $\rightarrow$ reduction on $S_L\cup C_L$ $\rightarrow$ ...). In this manner, the bounding process and the reduction process could mutually benefit from each other. We show that \texttt{AltRB} is superior to \texttt{SeqRB} in narrowing down the search space in both theory (as will be shown in Equation~(\ref{eq:AltRB_theory})) and practice (as will be shown in Table~\ref{table:comparing-algorithms-k=5}).
We further design efficient pre-processing techniques for boosting the practical performance of BRB algorithms: 1) a new method \texttt{CF-CTCP}, which differs with \texttt{CTCP} in the way of conducting different reductions at each iteration, and 2) a heuristic algorithm \texttt{KPHeuris} that iteratively compute a large initial maximal $k$-plex.

With all the above newly-designed techniques, we develop a new BRB algorithm called \texttt{kPEX}, which runs up to two orders of magnitude faster and solves more instances than state-of-the-art algorithms \texttt{kPlexT}~\cite{chang2024maximum}, \texttt{kPlexS}~\cite{CXS22}, \texttt{KPLEX}~\cite{wang2023fast}, and \texttt{DiseMKP}~\cite{jiang2023refined}.

\smallskip

\noindent \underline{\bf Our contributions.} 
Our main contributions are as follows.
\begin{itemize}[leftmargin=*]
    \item We propose a new BRB algorithm called \texttt{kPEX}, which incorporates the proposed \emph{alternated reduction-and-bound} method \texttt{AltRB} (Section~\ref{sec:existing}). With our newly devised reduction rules and bounding methods, \texttt{AltRB} is superior to \texttt{SeqRB} in narrowing down the search space (Section~\ref{sec:altrb}).
    
    \item We design efficient pre-processing techniques for boosting the performance of BRB algorithms, namely a new method \texttt{CF-CTCP} for reducing the size of the input graph and a heuristic \texttt{KPHeuris} for computing a large initial $k$-plex (Section~\ref{sec:pre-process}).

    \item We conduct extensive experiments on 664 real and synthetic graphs to verify the effectiveness and efficiency of our algorithms. Compared with the state-of-the-art algorithms, our \texttt{kPEX} 1) solves most number of graph instances within the time limit and 2) runs up to two orders of magnitude faster than existing algorithms (Section~\ref{sec:exp}).
\end{itemize}

\section{Preliminaries}\label{sec:preliminary}
Let $G=(V,E)$ be a simple graph with $|V|=n$ vertices and $|E|=m$ edges. 
A vertex $v$ is said to be a neighbor of (or adjacent to) vertex $u$ if there is an edge between $u$ and $v$, i.e., $(u,v) \in E$.
Denote by $N_G(u)=\{v \in V~|~(u,v) \in E \}$ and $d_G(u)=|N_G(u)|$ the neighbor set and the degree of the vertex $u$ in $G$, respectively.
Given a vertex subset $S \subseteq V$, 
we use $G[S]$ to denote the subgraph induced by $S$, i.e., 
$G[S] = (S, \{  (u,v)\in E~|~u,v \in S \})$, and use $N_G(u,S)$ (resp. $\overline{N}_G(u,S)$) to denote sets of neighbors (resp. non-neighbors that include $u$ itself) of $u$ in $G[S]$. We omit the subscript $G$ when the context is clear. Given a graph $g$, we use $V(g)$ and $E(g)$ to denote the sets of vertices and edges in $g$, respectively.

In this paper, we focus on the cohesive subgraph of \emph{$k$-plex}.
\begin{definition}[$k$-plex~\cite{seidman1978graph}]
    Given a positive integer $k$, a graph $g$ is said to be a \emph{$k$-plex} if $d_{g}(u)\geq |V(g)|-k$ for each vertex $u \in V(g)$. 
\end{definition}
Obviously, a 1-plex is a clique where each two vertices are adjacent. 
Note also that $k$-plex has the \emph{hereditary} property, i.e., any induced subgraph of a $k$-plex is also a $k$-plex~\cite{seidman1978graph}.

\smallskip

\noindent \underline{\textbf{Problem statement.}}
Given a graph $G=(V,E)$ and an integer $k\geq 2$, the {\em maximum $k$-plex search problem} aims to find the largest $k$-plex $G[S]$ with $|S|\geq 2k-1$ in $G$. 

Following the previous studies~\cite{CXS22,wang2023fast}, we focus on finding $k$-plexes with at least $2k-1$ vertices for the following considerations. \underline{First}, the value of $k$ is usually small in real applications, e.g., $k \leq 6$ in~\cite{XLD+17,GCY+18,zhou2021improving,JZX+21}. Hence, a $k$-plex with at most $2k-2$ vertices is less informative in practice. \underline{Second}, a $k$-plex with at least $2k-1$ vertices has the diameter of at most 2~\cite{zhou2021improving}, which is more cohesive.

\smallskip

We next introduce some useful concepts used in this paper.

\noindent {\bf $k$-core/$k$-truss.} We review useful cohesive subgraph definitions.
\begin{definition}
    Given a positive integer $k$, a graph $g$ is said to be 
    \begin{itemize}
        \item a \emph{$k$-core} if $d_{g}(u)\geq k$ for each vertex $u \in V(g)$~\cite{seidman1983network};
        \item a \emph{$k$-truss} if each edge $(u,v) \in E(g)$ belongs to at least $k-2$ triangles, i.e., $|N_g(u) \cap N_g(v)| \geq k-2$ for each edge $(u,v) \in E(g)$~\cite{cohen2008trusses}.
    \end{itemize}
\end{definition}
Clearly, a $k$-core $g$ is a $\big(|V(g)|-k\big)$-plex and a $k$-truss $g'$ is a $\big(|V(g')|-k+1\big)$-plex.

\smallskip

\noindent {\bf Degeneracy order.}
The sequence of vertices $v_1, v_2, ..., v_n$ in $G = (V,E)$ is called the {\em degeneracy order} of $G$ 
if $v_i$ has the minimum degree in $G[ \{v_i, v_{i+1}, ..., v_n\} ]$ for each $v_i$ in $V$~\cite{BVZ03m}.
Further, the {\em degeneracy} of $G$, denoted by $\delta(G)$ (or $\delta$ if the context is clear), is defined as the smallest number such that every induced subgraph of $G$ has a vertex of degree at most $\delta(G)$. 
In other words, $G$ does not have an induced subgraph that is a $(\delta+1)$-core.
The degeneracy order and the value of $\delta$ can be obtained by iteratively peeling the vertex with minimum degree in the current induced subgraph with time complexity of $O(m)$~\cite{BVZ03m}. Also, it is known that $\delta \leq \sqrt{n+2m}$~\cite{Cha19}.

\section{The Framework of \texttt{kPEX}}
\label{sec:existing}

Our algorithm, named \texttt{kPEX}, follows the \emph{branch-reduction-and-bound} (BRB) framework which is (conceptually) adopted by existing algorithms~\cite{GCY+18,zhou2021improving,JZX+21,CXS22,jiang2023refined,wang2023fast,chang2024maximum}. The idea is to recursively partition the current problem instance of finding the largest $k$-plex into two subproblem instances via a process of \emph{branching}. Specifically, a problem instance (or branch) is denoted by $(G,S,C)$ (or, simply $(S,C)$ when the context is clear) where the \emph{partial solution} $S$ induces a $k$-plex (i.e., $G[S]$) and the \emph{candidate set} $C$ is a set of vertices that will be used to expand $S$. Solving the branch $(S,C)$ refers to finding the largest $k$-plex $G[H]$ in the branch; \emph{a $k$-plex is in the branch $(S,C)$ if and only if $S\subseteq H\subseteq S\cup C$}. To solve a branch $(S,C)$, it recursively solves two sub-branches formed based on a \emph{branching vertex} $v$ selected from $C$: one branch $(S\cup\{v\},C\setminus\{v\})$ includes $v$ to the partial solution $S$ (which finds the largest $k$-plex containing $v$ in $(S,C)$), and the other $(S,C\setminus\{v\})$ discards $v$ from the candidate set $C$ (which finds the largest $k$-plex excluding $v$ in $(S,C)$). Clearly, solving two formed sub-branches solves branch $(S,C)$, and solving the branch $(\emptyset,V)$ finds the largest $k$-plex in $G$.

\begin{algorithm}[t]
    \caption{Our framework: \texttt{kPEX}}
    \label{alg:basic}
    \KwIn{A graph $G=(V,E)$ and an integer $k$}
    \KwOut{The largest $k$-plex $G[S^*]$}
    \tcc{Stage-I.1: Heuristic\&Preprocessing (Sec.~\ref{sec:pre-process})}
    $S^*\gets $ a large $k$-plex via a heuristic process \texttt{KPHeuris}\; 
    $G\gets$apply reduction method \texttt{CF-CTCP} to reduce $G$\;
    \tcc{Stage-I.2: Divide-and-conquer framework}
    \While{$V(G)\neq \emptyset$}{
        $v\gets$ the vertex with the minimum degree in $G$\;
        $g\gets$ the subgraph of $G$ induced by $N^{\leq 2}(v)$\;
        \tcc{Stage-II:branch-reduction-bound (Sec.~\ref{sec:altrb})}
        \texttt{BRB\_Rec}$(g,\{v\},V(g)\setminus\{v\},k)$\;
        $G\gets$apply reduction method \texttt{CF-CTCP} to reduce $G$\;
    }
    \textbf{return} $G[S^*]$\;
    
    \SetKwBlock{Enum}{Procedure \texttt{BRB\_Rec}$(G,S,C,k)$}{}
    \smallskip
    \Enum{
        $C^{\star}, UB^{\star}\gets$ \texttt{AltRB}$(G,S,C,k)$\;
        \lIf{$UB^{\star}\leq |S^*|$}{\textbf{return}}
        \lIf{$S\!\cup\! C^{\star}$is a $k$-plex}{update $S^*$ by $S\!\cup\! C^{\star}$ and \textbf{return}}
        $v^*\gets$ a branching vertex selected from $C^{\star}$\;
        \texttt{BRB\_Rec}$(G,S\cup\{v^*\},C^{\star}\setminus\{v^*\},k)$\;
        \texttt{BRB\_Rec}$(G,S,C^{\star}\setminus\{v^*\},k)$\;
    }
\end{algorithm}

Our \texttt{kPEX} adopts a similar framework in~\cite{CXS22}, which is summarized in Algorithm~\ref{alg:basic} and involves two stages. 
\textbf{Stage-I} first includes, in Stage-I.1, a heuristic method called \texttt{KPHeuris} for computing a large $k$-plex $G[S^*]$ (maintained globally as the largest $k$-plex seen so far), which will be used to narrow down the search space (Line 1), and a reduction method called \texttt{CF-CTCP} for reducing the input graph $G$ by removing unpromising vertices/edges that will not appear in any $k$-plex larger than $|S^*|$ (Line 2). Besides, \texttt{kPEX} employs a widely-used \emph{divide-and-conquer} strategy in Stage-I.2, which divides the problem of finding the largest $k$-plex in $G$ into several sub-problems (Lines 3-7). Each sub-problem corresponds to a vertex $v$ in $G$ and aims to find the largest $k$-plex that (1) includes vertex $v$ and (2) is in a subgraph of $G$ induced by $v$'s two-hop neighbours $N^{\leq 2}(v)$, i.e., the set of vertices that have distance at most 2 from $v$ (note that a $k$-plex with at least $2k-1$ vertices has the diameter of at most 2~\cite{zhou2021improving} and thus the largest $k$-plex containing $v$ is a subset of $N^{\leq 2}(v)$). Clearly, the largest $k$-plex in $G$ is the largest one among those returned by all sub-problems.
\textbf{Stage-II} corresponds to the recursive process of solving a branch (Lines 9-15). Specifically, \texttt{BRB\_Rec} recursively branches as discussed above (Lines 13-15).
Besides, \texttt{BRB\_Rec} conducts the newly proposed \emph{alternated reduction-and-bound} process (\texttt{AltRB}) on a branch $(S,C)$ for \emph{narrowing down} the search space (Line 10). Specifically, it refines $C$ to $C^{\star}$ by removing some unpromising vertices and computes an upper bound $UB^{\star}$ of (the size of) the largest $k$-plex in $(S,C)$ for terminating the branch.
Finally, we can terminate the branch when (1) $UB^{\star}\leq |S^*|$ since no larger $k$-plex is in the branch and (2) $S\cup C^{\star}$ is a $k$-plex since $G[S\cup C^{\star}]$ is the largest $k$-plex in the branch. 

\smallskip

\noindent\textbf{Novelty.} Our framework differs from the state-of-the-art one~\cite{CXS22} in the following aspects. 
\underline{First}, in Stage-II, \texttt{kPEX} is based on the newly proposed \texttt{AltRB} for narrowing down the search space. Recall that existing methods conduct the reduction-and-bound (RB) process using a sequential method called \texttt{SeqRB} at Line 10 instead.  
We will show that \texttt{AltRB} performs better than \texttt{SeqRB} in Section~\ref{sec:altrb}. Specifically, it refines $C$ to a \emph{smaller} set $C^{\star}$ (i.e., $|C^{\star}|\leq |C|$) and obtains a \emph{tighter} upper bound $UB^{\star}$ (i.e., $UB^{\star}\leq UB$).
\underline{Second}, in Stage-I.1, \texttt{kPEX} employs the novel \texttt{KPHeuris} and \texttt{CF-CTCP} which are more effective and efficient than existing competitors in Section~\ref{sec:pre-process}.

\section{our reduction\&bound method: \texttt{AltRB}}\label{sec:altrb}

\subsection{An Alternated Reduction-and-Bound Method}\label{sec:framework-altrb}
Recall that existing algorithms conduct the reduction-and-bound (RB) step using the sequential method \texttt{SeqRB} on a branch $B=(S,C)$ for narrowing down the search space.
Specifically, \texttt{SeqRB} has two sequential procedures: 1) the \emph{reduction process} refines the candidate set $C$ to $C'$ based on $|S^*|$ (i.e., the lower bound of the branch), i.e., removing from $C$ those vertices that cannot appear in a $k$-plex larger than $|S^*|$; and 2) the \emph{bounding process} obtains the upper bound of the largest $k$-plex in the refined branch $(S,C')$, i.e., the upper bound of the branch denoted by $UB(S,C')$.
In this paper, we propose a new alternated reduction-and-bound method, called \texttt{AltRB}, which is based on a binary partition of a branch $B=(S,C)$ as below.
\begin{equation}
    S=S_L\cup S_R,\ \ C=C_L\cup C_R.
\end{equation}
Let $G[H]$ be a $k$-plex in the branch $B$ such that $G[H]$ is larger than the  largest $k$-plex $G[S^*]$ seen so far, i.e., $|H|\geq |S^*|+1$ (note that other $k$-plexes have the size at most $|S^*|$ and thus can be ignored during the exploration of the branch). 
Based on the above partition, a $k$-plex $G[H]$ in $B$ can be divided into three parts as below.
\begin{equation}
    H=S\cup (C_L\cap H) \cup (C_R\cap H).
\end{equation}

We denote by $LB_L$ and $UB_L$ (resp. $LB_R$ and $UB_R$) the lower and upper bounds of the size of $C_L\cap H$ (resp. $C_R\cap H$), respectively. Formally, we have
\begin{eqnarray}
\label{eq:lower_uppper_def}
    |C_L\cap H| \leq UB_L,\ |C_R\cap H|\leq UB_R.
\end{eqnarray}
Besides, we have the following lemma on the above partition.
\begin{lemma}
\label{lemma:upper_lower}
    Given a branch $(S,C)$ with a partition, we have
    \begin{equation}
        \label{eq:upper_lower}
        |C_L\cap H|\geq (|S^*|+1)-|S|-UB_R,\ |C_R\cap H|\geq (|S^*|+1)-|S|-UB_L.
    \end{equation}
\end{lemma}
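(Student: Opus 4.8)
The plan is to derive the inequality in \eqref{eq:upper_lower} directly from the size requirement $|H|\geq |S^*|+1$ combined with the partition of $H$ in Equation~(2) and the upper-bound definition in \eqref{eq:lower_uppper_def}. First I would write out the cardinality identity induced by Equation~(2): since $S$, $C_L\cap H$, and $C_R\cap H$ are pairwise disjoint (as $S\cap C=\emptyset$ and $C_L\cap C_R=\emptyset$, assuming the partition of $C$ is a partition in the literal sense), we have
\begin{equation}
    |H| = |S| + |C_L\cap H| + |C_R\cap H|.
\end{equation}
Then, because $G[H]$ is assumed to be strictly larger than the incumbent, $|H|\geq |S^*|+1$, substituting gives $|S| + |C_L\cap H| + |C_R\cap H| \geq |S^*|+1$, i.e.\ $|C_L\cap H| + |C_R\cap H| \geq (|S^*|+1) - |S|$.

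Next I would isolate each term. Rearranging, $|C_L\cap H| \geq (|S^*|+1) - |S| - |C_R\cap H|$, and now I would apply the upper bound $|C_R\cap H|\leq UB_R$ from \eqref{eq:lower_uppper_def} to replace $|C_R\cap H|$ by $UB_R$ on the right-hand side (which only weakens the bound, preserving the inequality), yielding $|C_L\cap H| \geq (|S^*|+1) - |S| - UB_R$. The symmetric argument — isolating $|C_R\cap H|$ instead and bounding $|C_L\cap H|\leq UB_L$ — gives $|C_R\cap H| \geq (|S^*|+1) - |S| - UB_L$. This establishes both inequalities of \eqref{eq:upper_lower}.

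Honestly, there is no real obstacle here: the statement is essentially a one-line consequence of the disjoint decomposition of $H$ and the definition of $UB_L, UB_R$. The only points worth being careful about are (i) confirming that the three parts $S$, $C_L\cap H$, $C_R\cap H$ are genuinely disjoint so that cardinalities add — this uses $S\subseteq H\subseteq S\cup C$ together with $S\cap C=\emptyset$ and the fact that $C=C_L\cup C_R$ is meant as a disjoint partition; and (ii) noting that the bounds are only meaningful (non-trivial) when the right-hand sides are positive, but they hold vacuously otherwise since $|C_L\cap H|$ and $|C_R\cap H|$ are non-negative. I would state the proof in two or three sentences making the disjointness explicit and then performing the substitution, with the $R$-side following by symmetry.
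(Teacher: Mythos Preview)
Your proposal is correct and matches the paper's argument essentially line for line: both rely on the disjoint decomposition $|H|=|S|+|C_L\cap H|+|C_R\cap H|$, the assumption $|H|\geq |S^*|+1$, and the bound $|C_R\cap H|\leq UB_R$ (resp.\ $|C_L\cap H|\leq UB_L$). The only cosmetic difference is that the paper phrases it as a proof by contradiction whereas you argue directly, but the logical content is identical.
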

\begin{proof}
    This can be easily verified since otherwise if $|C_L\cap H|< (|S^*|+1)-|S|-UB_R$, we have $|H|=|S|+|C_L\cap H|+|C_R\cap H|<|S|+(|S^*|+1)-|S|-UB_R+UB_R = |S^*|+1$, which contradicts with $|H| \geq |S^*|+1$. A similar contradiction can be derived for the other case $|C_R\cap H|< (|S^*|+1)-|S|-UB_L$.
\end{proof}
Based on Lemma~\ref{lemma:upper_lower}, we define $LB_L$ and $LB_R$ as follows.
\begin{equation}\label{eq:lb_L-def}
    (|S^*|+1)-|S|-UB_R\leq LB_L\leq |C_L\cap H|
\end{equation}
\begin{equation}\label{eq:lb_R-def}
    (|S^*|+1)-|S|-UB_L\leq LB_R\leq |C_R\cap H|
\end{equation}

\begin{algorithm}[t]
    \caption{Alternated reduction-and-bound: \texttt{AltRB}}
    \label{alg:altbasic}
    \KwIn{A graph $G=(V,E)$, a branch $(S,C)$ and an integer $k$}
    \KwOut{Refined candidate set $C^{\star}$ and upper bound $UB^{\star}$}
    $S_L,S_R,C_L,C_R\gets$ \texttt{Partition}$(G,S,C,k)$\; 
    $UB_L\gets |C_L|$, $LB_L \gets 0$\;
    \While{$UB_L$ is not equal to \texttt{ComputeUB}$(S_L,C_L)$}{
    $UB_L\gets$\texttt{ComputeUB}$(S_L,C_L)$\;
    $LB_R\gets(|S^*|+1)-|S|-UB_L$; $C_R\gets$ \textbf{RR1}\&\textbf{RR2} on $C_R$\;
    $UB_R\gets$\texttt{ComputeUB}$(S_R,C_R)$\;
    $LB_L\gets(|S^*|+1)-|S|-UB_R$; $C_L\gets$ \textbf{RR1}\&\textbf{RR2} on $C_L$\;
    }
    \textbf{return} $C^{\star}\gets C_L\cup C_R$ and $UB^{\star}\gets |S|+UB_L+UB_R$\;
\end{algorithm}

We note that Lemma~\ref{lemma:upper_lower}
and Equations~(\ref{eq:lb_L-def}) and (\ref{eq:lb_R-def})
indicate the relation between the lower bound of one part and the upper bound of the other, which enables \texttt{AltRB}. We summarize \texttt{AltRB} in Algorithm~\ref{alg:altbasic}, which \emph{iteratively and alternatively} conducts the reduction-and-bound step on the two partitions obtained via \texttt{Partition} (Line 1). Specifically, after initializing $UB_L$ and $LB_L$ in Line 2, \texttt{AltRB} involves the following steps (the details of the two procedures \texttt{Partition} and \texttt{ComputeUB} are provided in Section~\ref{subsec:upper_bound}).
\begin{itemize}[leftmargin=*]
    \item \textbf{Step 1 (Bound on $C_L$).} Compute the upper bound for $C_L$ (i.e., $UB_L$) via a procedure \texttt{ComputeUB} (Line 4). 
    \item \textbf{Step 2 (Reduction on $C_R$).} Update the lower bound for $C_R$ (i.e., $LB_R$) by $(|S^*|+1)-|S|-UB_L$ according to \textbf{Lemma~\ref{lemma:upper_lower}} and then refine $C_R$ based on the updated bounds via reduction rules \textbf{RR1} and \textbf{RR2} (Line 5).
    \item \textbf{Step 3 (Bound on $C_R$).} Compute the upper bound for the refined $C_R$ (i.e., $UB_R$) via a procedure \texttt{ComputeUB} (Line 6). 
    \item \textbf{Step 4 (Reduction on $C_L$).} Update the lower bound for $C_L$ (i.e., $LB_L$) by $(|S^*|+1)-|S|-UB_R$ according to \textbf{Lemma~\ref{lemma:upper_lower}} and then refine $C_L$ based on the updated bounds via reduction rules \textbf{RR1} and \textbf{RR2} (Line 7).
\end{itemize}
Finally, we repeat Steps 1-4 until $UB_L$ remains unchanged (Line 3). We remark that once tighter upper bounds are obtained at Step 1 and Step 3, tighter lower bounds can be derived via Lemma~\ref{lemma:upper_lower} at Step 2 and Step 4 which will be used to boost the performance of \textbf{RR1} and \textbf{RR2}.
Below find the details of reduction rules.

\begin{itemize}[itemindent=0.4cm]
    \item [\textbf{RR1.}]  Given a branch $(S,C)$ with $LB_L$ and $LB_R$, 
    1) for a vertex $v$ in $C_L$, we remove $v$ from $C$ if $|N(v,S\cup C_L)|<LB_L+|S|-k$ or $|N(v,S\cup C_R)|<LB_R+|S|-k+1$; and 2) for a vertex $v$ in $C_R$, we remove $v$ from $C$ if $|N(v,S\cup C_L)|<LB_L+|S|-k+1$ or $|N(v,S\cup C_R)|<LB_R+|S|-k$.
    
    \item [\textbf{RR2.}]  Given a branch $(S,C)$ with $UB_L$ and $UB_R$, 
    1) if $UB_L+UB_R+|S|=|S^*|+1$ and $UB_L=|C_L|$, we move all vertices in $C_L$ from $C$ to $S$ if $G[S\cup C_L]$ is a $k$-plex; otherwise, i.e., it is not a $k$-plex, we terminate the branch $(S,C)$;
    2) if $UB_L+UB_R+|S|=|S^*|+1$ and $UB_R=|C_R|$, we move all vertices in $C_R$ from $C$ to $S$ if $G[S\cup C_R]$ is a $k$-plex; otherwise, i.e., it is not a $k$-plex, we terminate the branch $(S,C)$.
\end{itemize}

\smallskip
\noindent\textbf{Benefits.} Before proving the correctness, we show that \texttt{AltRB} better narrows down the search space than the existing \texttt{SeqRB}.
The rationale behind is based on the following observations. \underline{First}, at \textbf{Step 2} and \textbf{Step 4}, \textbf{RR1}, and \textbf{RR2} (which are based on $UB_L$, $UB_R$, $LB_L$ and $LB_R$) will remove from $C$ more vertices when the lower bounds $LB_L$ and $LB_R$ become larger and/or the upper bounds $UB_L$ and $UB_R$ become smaller; \underline{Second}, at \textbf{Step 1} and \textbf{Step 3}, with some vertices being removed from $C_L$ and $C_R$, smaller upper bound $UB_L$ and $UB_R$ can be derived via \texttt{ComupteUB} (details refer to Section~\ref{subsec:upper_bound}), and larger lower bounds $LB_L$ and $LB_R$ can also be obtained via Lemma~\ref{lemma:upper_lower}; 
\underline{Third}, as \texttt{AltRB} iteratively proceeds, the bounding process and the reduction process will benefit each other (since the former will derive smaller upper bounds and larger lower bounds after the latter while the latter will remove more vertices from $C$ after the former). 
In contrast, \texttt{SeqRB} cannot be conducted iteratively since (1) its reduction rules are only based on $|S^*|$, which will not be changed after \texttt{SeqRB} and (2) thus repeating it multiple times cannot result in either a smaller candidate set $C$ or a smaller upper bound.
We remark that the refined set $C^{\star}$ and the upper bound $UB^{\star}$ obtained by \texttt{AltRB} is potentially \emph{smaller} than those obtained by \texttt{SeqRB} (which will be proved in Section~\ref{subsec:upper_bound}). Thus, with the proposed \texttt{AltRB}, our algorithm \texttt{kPEX} runs up to \emph{two orders of magnitude faster} than the state-of-the-arts, as verified in our experiments.   
 
\smallskip

\noindent\textbf{Correctness.} We then show the correctness of \texttt{AltRB}. Note that \texttt{AltRB} admits an arbitrary partition on $(S,C)$ and any possible procedure for computing $UB_L$ and $UB_R$ that satisfy Equation~(\ref{eq:lower_uppper_def}).

The correctness of \textbf{RR1} can be proved by contradiction. Consider a $k$-plex $G[H]$ in branch $B$ with $|H|\geq |S^*|+1$. Note that if such a $k$-plex does not exist, \textbf{RR1} is obviously correct since all $k$-plexes in branch $B$ are no larger than $|S^*|$ and thus branch $B$ can be terminated. In general, there are two cases. \underline{First}, assume that $G[H]$ contains a vertex $v$ in $C_L$ such that $|N(v,S\cup C_L)|<LB_L+|S|-k$. We get the contradiction by showing that $v$ has more than $k$ non-neighbours in $H$ and thus $G[H]$ is not a $k$-plex since $|N(v,H)|=|N(v,H\cap (S\cup C_L))|+|N(v,H\cap C_R)|\leq (LB_L+|S|-k-1)+|H\cap C_R|\leq (|S|+|H\cap C_R|+|H\cap C_L|)-(k+1)= |H|-(k+1)$. \underline{Second}, assume that $G[H]$ contains a vertex $v$ in $C_L$ such that $|N(v,S\cup C_R)|<LB_R+|S|-k+1$. Similarly, we derive the contradiction by showing that $v$ has more than $k$ non-neighbours in $H$ and thus $G[H]$ is not a $k$-plex since $|N(v,H)|=|N(v,H\cap (S\cup C_R))|+|N(v,H\cap C_L)|\leq (LB_R+|S|-k)+(|H\cap C_L|-1)\leq (|S|+|H\cap C_R|+|H\cap C_L|)-(k+1)= |H|-(k+1)$ (note that $|N(v,H\cap C_L)|\leq |H\cap C_L|-1$ since $v$ is in $C_L$ and is not adjacent to itself). Symmetrically, we can prove the correctness for the reduction rules on $C_R$.

The correctness of \textbf{RR2} is easy to verify. Consider a branch $(S,C)$ with  $UB_L+UB_R+|S|=|S^*|+1$ and $UB_L=|C_L|$, and a $k$-plex $G[H]$ in $(S,C)$ with $|H|\geq |S^*|+1$ (note that if such a $k$-plex does not exist, \textbf{RR2} is obviously correct on this branch). We note that $G[H]$ must contain all vertices in $C_L$, i.e., $C_L\subseteq H$, since otherwise $|H|= |H\cap S|+|H\cap C_L|+|H\cap C_R|\leq |S|+(|C_L|-1)+|H\cap C_R|\leq |S|+UB_L+UB_R-1=|S^*|$. Therefore, $G[S\cup C_L]$ must be a $k$-plex due to the hereditary property; otherwise, such a $k$-plex cannot exist in $(S,C)$ and we can terminate the branch.  

The correctness of \texttt{AltRB} can then be easily verified.

\subsection{Upper Bound Computation and Greedy Partition Strategy}
\label{subsec:upper_bound}
In this part, we first introduce the method \texttt{ComputeUB} used at \textbf{Step 1} and \textbf{Step 3} for obtaining $UB_L$ and $UB_R$ in Section~\ref{sec:framework-altrb}. To boost the performance of \texttt{ComputeUB} as well as the reduction rules on $C_L$ and $C_R$, we then propose a greedy strategy \texttt{Partition} for partitioning $C$ (resp. $S$) into $C_L$ and $C_R$ (resp. $S_L$ and $S_R$). Finally, with all carefully-designed techniques above, we show that the resulted upper bound $UB^{\star}$ will be potentially \emph{smaller} than the existing one $UB$. 

\smallskip

\noindent \underline{\textbf{Upper bound computation.}} We adapt an existing upper bound computation~\cite{jiang2023refined}, which we call \texttt{ComputeUB}, for obtaining $UB_L$ and $UB_R$. Note that it can handle an arbitrary partition on a branch $(S,C)$. Consider \textbf{Step 1} for computing $UB_L$. \texttt{ComputeUB}$(S_L,C_L)$ first iteratively partitions $C_L$ into $(|S_L|+1)$ disjoint subsets. The $i$-th ($1\leq i\leq |S_L|$) subset $\Pi_i(S_L,C_L)$ contains all non-neighbours of a vertex $u_i\in S_L$ in $C_L - \{\Pi_{1}(S_L,C_L),...,\Pi_{i-1}(S_L,C_L)\}$, formally,
\begin{equation}
    \Pi_i(S_L,C_L)=\overline{N}(u_i,C_L^i),\ C_L^i=C_L-\cup_{j=1}^{i-1}\Pi_j(S_L,C_L),
\end{equation}
where $u_i$ is the vertex in $S_L\setminus\{u_1,u_2,...,u_{i-1}\}$ with the largest ratio of $|\overline{N}(u_i,C_L^i)|/(k-|\overline{N}(u_i,S)|)$. Note that the strategy of selecting $u_i$ from $S_L$ has been shown to boost the practical performance of \texttt{ComputeUB} (details refer to~\cite{jiang2023refined}). Besides, we have $\Pi_0(S_L,C_L)=C_L-\{\Pi_{1}(S_L,C_L),...,\Pi_{|S_L|}(S_L,C_L)\}$. Thus, vertices in $\Pi_i(S_L,C_L)$ ($1\leq i\leq |S_L|$) are the non-neighbours of $u_i$ in $C_L$, and vertices in $\Pi_0(S_L,C_L)$ are common neighbours of vertices in $S_L$. The key observation is that for a $k$-plex $G[H]$ in the branch, \emph{$C_L\cap H$ contains at most $\min\{|\Pi_i(S_L,C_L)|, k-|\overline{N}(u_i,S)|\}$ vertices from $\Pi_i(S_L,C_L)$ for $1\leq i \leq |S_L|$} since otherwise $u_i$ (in $H$) will have more than $k$ non-neighbours in $G[H]$ and thus $G[H]$ is not a $k$-plex. Thus, the upper bound $UB_L$ returned by \texttt{ComputeUB}$(S_L,C_L)$ gives as below:
\begin{equation}
    \label{eq:ub_partition}
    |\Pi_0(S_L,C_L)|+\sum_{i=1}^{|S_L|}\min\{|\Pi_i(S_L,C_L)|, k-|\overline{N}(u_i,S)|\}.
\end{equation}
We note that with some vertices being removed from $C_L$ during \texttt{AltRB}, $\Pi_i(S_L,C_L)$ will get smaller and thus a smaller upper bound can be derived. 
Similarly, we can obtain $UB_R$ by \texttt{ComputeUB}$(S_R,C_R)$. 
Besides, we remark that the state-of-the-art upper bound of $k$-plex in the branch $(S,C)$ (used in \texttt{SeqRB}) is $|S|$+\texttt{ComputeUB}$(S,C)$~\cite{jiang2023refined}.

\smallskip

\noindent \underline{\textbf{Greedy partition.}} Consider the upper bound computation at $C_L$, i.e., Equation~(\ref{eq:ub_partition}). We observe that \emph{all vertices in $\Pi_0(S_L,C_L)$ contributes to the upper bound \texttt{ComputeUB}$(S_L,C_L)$} since each of them is adjacent to all vertices in $S_L$ and thus they could appear in a $k$-plex in branch $(S,C)$. The similar observation can be derived on other subsets $\Pi_i(S_L,C_L)$ such that $|\overline{N}(u_i,C_L^i)| \leq k-|\overline{N}(u_i,S)|$ and $1\leq i \leq |S_L|$ (note that there are \emph{fewer} missing edges between $S_L$ and those subsets). Therefore, the adapted upper bound computation performs worse on those subsets.

Motivated by the above observation, we propose to divide $S$ and $C$ into the one ($S_L$ and $C_L$) with \emph{more} missing edges and the other ($S_R$ and $C_R$) with \emph{fewer} missing edges. We summarize the proposed strategy in Algorithm~\ref{alg:partition}. Specifically, we iteratively remove from $S$ to $S_L$ (resp. from $C$ to $C_L$) the vertex $v$ with the greatest value of $|\overline{N}(v,C)|/(k-|\overline{N}(v,S)|)$ (resp. the set of $v$'s non-neighbours in $C$, i.e., $\overline{N}(v,C)$) until the greatest value of $|\overline{N}(v,C)|/(k-|\overline{N}(v,S)|)$ is not greater than 1 or $S$ becomes empty (Lines 2-6). Then, all remaining vertices in $S$ and $C$ will be removed to $S_R$ and $C_R$ (Line 7). We observe that (1) \texttt{ComputeUB}$(S_L,C_L)$ will return a tighter bound since $|\overline{N}(u_i,C_L^i)|>k-|\overline{N}(u_i,S)|$ holds for $1\leq i\leq |S_L|$ and $\Pi_0(S_L,C_L)=\emptyset$, and (2) \texttt{ComputeUB}$(S_R,C_R)$ is always equal to $|C_R|$. 

Consider a branch $(S,C)$ (which has been refined by  \texttt{SeqRB}) with the upper bound $UB=|S|$+\texttt{ComputeUB}$(S,C)$. With the proposed techniques, \texttt{AltRB} will further narrow down the search space of $(S,C)$ by the following observation.
\begin{equation}
\label{eq:AltRB_theory}
    UB^{\star}\leq UB\ \text{and}\ |C^{\star}|\leq |C|.
\end{equation}
We note that $|C^{\star}|\leq |C|$ is obvious since some vertices in $C$ could be removed via \textbf{RR1} and \textbf{RR2}. Besides, $UB^{\star}\leq UB$ holds since (1) \texttt{ComputeUB}$(S,C)=$\texttt{ComputeUB}$(S_L,C_L)+$\texttt{ComputeUB}$(S_R,C_R)$ before \texttt{AltRB} (which can be verified based on the definitions) and (2) as \texttt{AltRB} proceeds, $C_L$ and $C_R$ are refined via \textbf{RR1} and \textbf{RR2}, and thus \texttt{ComputeUB}$(S_L,C_L)$ and \texttt{ComputeUB}$(S_R,C_R)$ get smaller.

\smallskip
\noindent
\textbf{Benefits of greedy partition.}
Compared with a random partition, the greedy partition in Algorithm~\ref{alg:partition} has the following advantageous properties. (1) A tight upper bound of $C_L$ leads to a larger $LB_R$, which enhances the effectiveness of \textbf{RR1}. (2) $UB_R = |C_R|$ is always satisfied, which means that \textbf{RR2} is applicable as long as $UB_L + UB_R + |S| = |S^*| + 1$. In other words, the conditions for \textbf{RR2} are more relaxed. Moreover, computing $UB_R$ as $|C_R|$ is easy to implement and requires less computation.

\begin{algorithm}[t]
    \caption{\texttt{Partition}$(G, S, C, k)$}
    \label{alg:partition}
    \KwIn{Branch $(S,C)$, a graph $G=(V,E)$, and an integer $k$}
    \KwOut{The greedy partition $S_L$, $S_R$, $C_L$ and $C_R$}
    $S_L\gets \emptyset$, $S_R\gets \emptyset$, $C_L\gets \emptyset$, $C_R\gets \emptyset$\;
    \While{$S\neq \emptyset$}{
        $v^*\gets \arg\max_{v\in S}|\overline{N}(v,C)|/(k-|\overline{N}(v,S)|)$ \;
        \lIf{$|\overline{N}(v^*,C)|/(k-|\overline{N}(v^*,S)|) \leq 1$}{\textbf{break}}
        $S_L\gets S_L\cup \{v^*\}$, $C_L\gets C_L\cup \overline{N}(v^*,C)$\;
        $S\gets S\setminus \{v^*\}$, $C\gets C\setminus \overline{N}(v^*,C)$\;
    }
    $S_R\gets S$, $C_R\gets C$\;
    \Return $S_L$, $S_R$, $C_L$ and $C_R$\;
\end{algorithm}

\subsection{Time Complexity Analysis}

\noindent\textbf{Time complexity.}
We analyze the time complexity of \texttt{AltRB} as follows.
(1) \texttt{AltRB} first invokes \texttt{Partition} (Algorithm~\ref{alg:partition}). Specifically, Lines 2-6 of Algorithm~\ref{alg:partition} will be conducted at most $|S|$ times, and each iteration needs to compute $|\overline{N}(v,C)|$ for each $v\in S$, which can be done in $O(|S|\times|C|)$. Thus, the time complexity of \texttt{Partition} is $O(|S|^2|C|)$. 
(2) \texttt{AltRB} then iteratively processes Lines 3-7 of Algorithm~\ref{alg:altbasic}.
We note that \texttt{ComputeUB}$(S,C)$ can be computed in $O(|S|^2|C|)$~\cite{jiang2023refined} (Lines 4 and 6). 
For reductions rules in Lines 5 and 7, \textbf{RR1} iteratively removes the vertex in $C_R$ with minimum $|N(v, S\cup C_L)|$ (or $|N(v,S\cup C_R)|$),
and \textbf{RR2} checks whether $S\cup C_R$ is a $k$-plex. Both rules can be done in $O(|C|\times (|S|+|C|))$. 
(3) We also know that $|S|$ is bounded by $\delta(G)+k$; otherwise, we have a $k$-plex $G[S]$ with $|S|>\delta(G)+k$, which will form a $(\delta(G)+1)$-core and thus contradict the definition of $\delta(G)$; 
$|C|$ is bounded by $\delta(G) d$ since $V(g)$ at Line 5 of Algorithm~\ref{alg:basic} is bounded by $\delta(G) d$~\cite{CDD+18,wang2022listing}.
(4) Let $r$ be the number of iterations of Lines 3-7 of Algorithm~\ref{alg:altbasic}. The number of $r$ is quite small in practice (e.g., $r=1.13$ on average in our experiments) and is bounded by $|C|$ (i.e., $\delta(G) d$) since at least one vertex is removed from $C$ in each round until $C$ becomes empty.

Thus, \texttt{AltRB} (Algorithm~\ref{alg:altbasic}) runs in $O(r \times (|S|^2 |C|+|C|^2)) = O(\delta(G)^3d^3 + k^2\delta(G)^2d^2 )$, where $\delta(G)$ is much smaller than $d$ and $n$ in real graphs, as shown in Table~\ref{table:graph-statistics} ($\delta(G) \leq d < n$ in theory).

\begin{figure}[t]
    \centering
    \includegraphics[width=0.48\textwidth]{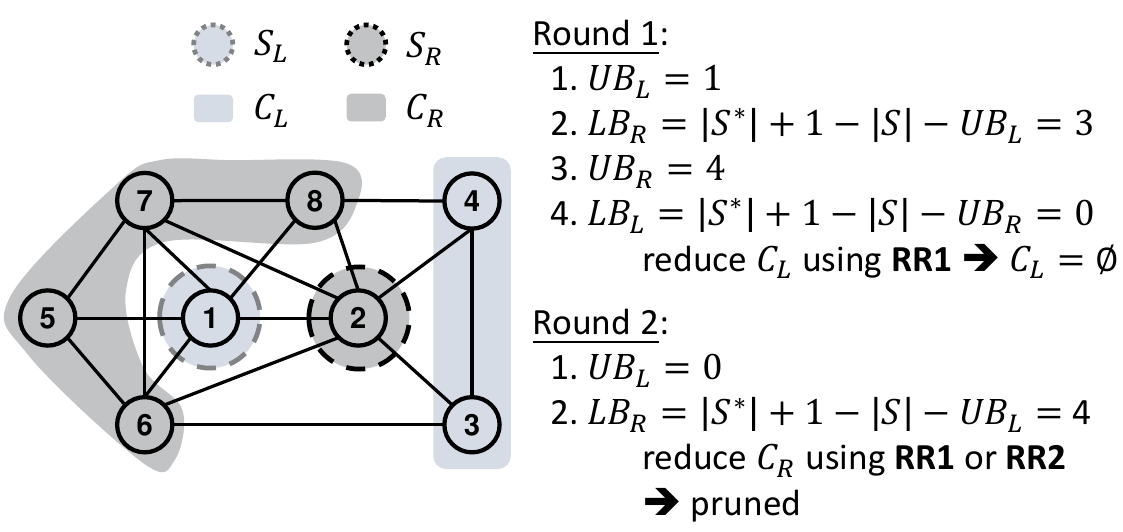}
    \caption{An example of \texttt{AltRB} with $k$=2, $|S^*|$=$5$, $S=\{v_1, v_2\}$, $C=\{v_3, v_4, v_5, v_6, v_7, v_8\}$}
    \label{img-AltRB-example}
\end{figure}

\smallskip
\noindent \textbf{Example.}  To illustrate the proposed \texttt{AltRB}, consider an example in Figure~\ref{img-AltRB-example} with $k$=2, $|S^*|=5$, $S=\{v_1, v_2\}$ and $C=\{v_3, v_4, v_5, v_6, v_7, v_8\}$. 
First, we apply the greedy partition (Algorithm~\ref{alg:partition}) and obtain $S_L = \{v_1\}$, $S_R = \{v_2\}$, $C_L = \{v_3, v_4\}$, and $C_R = \{v_5, v_6, v_7, v_8\}$.
Then, in the first round of \texttt{AltRB} (Lines 4-7 in Algorithm~\ref{alg:altbasic}), we conduct the four steps. (\textbf{Step 1}) Compute the upper bound of $C_L$, i.e., $UB_L = 1$. (\textbf{Step 2}) Update the lower bound of $C_R$ (i.e., $LB_R = 3$) and reduce $C_R$ via \textbf{RR1} and \textbf{RR2} (no vertices are removed). (\textbf{Step 3}) Compute the upper bound of $C_R$, i.e., $UB_R = 4$. (\textbf{Step 4}) Update the lower bound of $C_L$ (i.e., $LB_R = 0$) and reduce $C_L$ via \textbf{RR1} and \textbf{RR2} ($C_L$ is reduced to an empty set).
Next, in the second round with $C_L = \emptyset$, we (1) compute $UB_L = 0$, and (2) update $LB_R = 4$ and reduce $C_R$. If we first apply \textbf{RR1} to $C_R$, 
$v_5, v_6, v_7$ and $v_8$
will be removed, and finally compute the upper bound as $UB^{\star} = |S| + |UB_L| + |UB_R| = 2 + 0 + 0 = 2$, resulting in pruning.
If we first apply \textbf{RR2}, both $UB_L + UB_R +|S| = |S^*| + 1$ and $UB_R = |C_R|$ are satisfied. We then find that $G[S \cup C_R]$ is not a $k$-plex, which means that \textbf{RR2} also leads to pruning.
Actually, the size of maximum 2-plex is 5, indicating that the branch $(S, C)$ cannot find a larger 2-plex, and thus this branch can be pruned by \texttt{AltRB}.
However, without \texttt{AltRB}, the existing method~\cite{jiang2023refined} will compute an upper bound as $UB = UB_L + UB_R + |S| = 1 + 4 + 2 = 7$, which cannot prune the current branch.

\smallskip
\noindent \textbf{Remarks.} We remark that the existing reduction rules proposed in~\cite{CXS22,wang2023fast,chang2024maximum} are all based on $|S^*|$ and thus orthogonal to \texttt{AltRB}. 
We conduct some of these reduction rules to improve practical performance, including (1) additional reduction on subgraph $g$ (Lemma 3.2 in~\cite{CXS22} and Reduction 2 in~\cite{wang2023fast}) in Line 5 of Algorithm~\ref{alg:basic}, and (2) reduction on $C$ before \texttt{AltRB} (\textbf{RR4} in~\cite{CXS22} and Algorithm 3 in~\cite{chang2024maximum}).
Besides, \texttt{AltRB} is also orthogonal to the branching rules for selecting the branching vertex and forming the sub-branches.

\section{Efficient Pre-processing techniques}
\label{sec:pre-process}
In this section, we develop some efficient pre-processing techniques for further boosting the performance of BRB algorithms, namely, \texttt{CF-CTCP} for reducing the size of the input graph in Section~\ref{subsec:CF-CTCP} and \texttt{KPHeuris} for heuristically computing a large $k$-plex in Section~\ref{subsec:kpheuris}.

\subsection{Faster Core-Truss Co-Pruning: \texttt{CF-CTCP}}
\label{subsec:CF-CTCP}

Let $lb$ be the lower bound of the size of the largest $k$-plex (which corresponds to the size of the largest $k$-plex $G[S^*]$ seen so far). We also let $\Delta(u,v)$ be the set of common neighbors of $u$ and $v$ in $G$, i.e., $\Delta(u,v)= N_G(u) \cap N_G(v)$. The idea of refining the input graph $G$ is to \emph{remove from $G$ those vertices and edges that cannot appear in any $k$-plex larger than $lb$ as many as we can}. Existing methods~\cite{zhou2021improving,CXS22,jiang2023refined} are all based on the following lemmas and differ in the implementations (the details of proof is omitted for the ease of presentation).

\begin{lemma}\label{lemma:core-pruning}
    (Core Pruning~\cite{GCY+18}) For each vertex $u \in V(G)$, $u$ cannot appear in a $k$-plex of size $lb+1$ if $d_G(u) \leq lb-k$.
\end{lemma}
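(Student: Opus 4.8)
\textbf{Proof proposal for Lemma~\ref{lemma:core-pruning} (Core Pruning).}

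The plan is to argue by contradiction, exploiting the hereditary property of $k$-plexes together with the degree condition in the definition of a $k$-plex. Suppose, for the sake of contradiction, that some vertex $u \in V(G)$ with $d_G(u) \leq lb - k$ does appear in a $k$-plex $G[H]$ of size $|H| = lb + 1$. Since $u \in H$, its degree inside $G[H]$ is at most its degree in the whole graph restricted to $H$, i.e., $d_{G[H]}(u) = |N_G(u) \cap H| \leq |N_G(u)| = d_G(u)$.

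Next I would plug in the two bounds. On one hand, the $k$-plex condition applied to $u$ in $G[H]$ gives $d_{G[H]}(u) \geq |H| - k = (lb+1) - k = lb - k + 1$. On the other hand, the hypothesis gives $d_{G[H]}(u) \leq d_G(u) \leq lb - k$. Combining the two inequalities yields $lb - k + 1 \leq d_{G[H]}(u) \leq lb - k$, which is a contradiction. Hence no such $k$-plex $G[H]$ exists, and $u$ cannot appear in a $k$-plex of size $lb+1$.

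There is essentially no hard part here: the only thing to be careful about is the direction of the degree monotonicity ($N_{G[H]}(u) \subseteq N_G(u)$ because $G[H]$ is an induced subgraph, so adding more vertices can only add neighbours, never remove them) and the exact off-by-one in the $k$-plex degree bound ($|H| - k$ versus $|H| - k + 1$). I would state the monotonicity explicitly as $d_{G[H]}(u) \leq d_G(u)$ and then let the arithmetic close the argument. The same reasoning immediately shows, more strongly, that $u$ cannot appear in any $k$-plex of size $lb+1$ \emph{or larger}, since a larger $H$ only makes the required lower bound $|H|-k$ bigger while the upper bound $d_G(u)$ stays fixed; this stronger form is what justifies iteratively peeling such low-degree vertices when refining $G$.
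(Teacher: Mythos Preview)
Your argument is correct and is exactly the standard one: the paper itself omits the proof of this lemma (``the details of proof is omitted for the ease of presentation''), and what you wrote is the canonical verification that follows directly from the $k$-plex degree requirement combined with the trivial monotonicity $d_{G[H]}(u)\le d_G(u)$. Your closing remark about the stronger ``size $\ge lb+1$'' version is also accurate and is precisely why core pruning can be applied iteratively.
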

\begin{lemma}\label{lemma:truss-pruning}
    (Truss Pruning~\cite{zhou2021improving}) For each edge $(u,v) \in E(G)$, $(u,v)$ cannot appear in a $k$-plex of size $lb+1$ if $\delta_G(u,v) \leq lb-2k$ where $\delta_G(u,v)$ is the number of common neighbors of $u$ and $v$, i.e., $\delta_G(u,v) = |\Delta(u,v)|$.
\end{lemma}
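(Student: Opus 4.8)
The plan is to argue by contradiction. Suppose, for the sake of contradiction, that an edge $(u,v)\in E(G)$ with $\delta_G(u,v)\leq lb-2k$ nevertheless appears in some $k$-plex $G[H]$ with $|H|=lb+1$; I will show this forces $\delta_G(u,v)\geq lb-2k+1$, a contradiction.

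First I would use the defining inequality of a $k$-plex to bound the non-neighbours of $u$ and of $v$ inside $H$. Since $G[H]$ is a $k$-plex, $d_{G[H]}(u)\geq |H|-k$, so among the $|H|-1$ vertices of $H\setminus\{u\}$ at most $(|H|-1)-(|H|-k)=k-1$ are non-adjacent to $u$; the same bound holds for $v$. Crucially, because $(u,v)\in E(G)$ the vertex $v$ is adjacent to $u$ and hence is not one of those non-neighbours, so all at most $k-1$ non-neighbours of $u$ in $H$ in fact lie in $H\setminus\{u,v\}$, and symmetrically for $v$.

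Next I would count the common neighbours of $u$ and $v$ that lie in $H$. The set $H\setminus\{u,v\}$ has $|H|-2=lb-1$ vertices, and a vertex $w$ in this set fails to be a common neighbour of both $u$ and $v$ only if it is a non-neighbour of $u$ or a non-neighbour of $v$; by the previous step there are at most $(k-1)+(k-1)=2k-2$ such ``bad'' vertices. Hence the number of common neighbours of $u$ and $v$ inside $H$ is at least $(lb-1)-(2k-2)=lb-2k+1$. Since every common neighbour of $u$ and $v$ within $H$ is in particular a common neighbour of $u$ and $v$ in $G$, we obtain $\delta_G(u,v)=|\Delta(u,v)|\geq lb-2k+1>lb-2k$, contradicting the hypothesis. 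Therefore no such $k$-plex exists, and $(u,v)$ can be safely removed. (Equivalently, this shows the edge $(u,v)$ would have to lie in at least $lb-2k+1$ triangles within $G[H]$, which is the ``truss'' reading of the statement.)

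This is essentially a double-counting exercise, so I do not expect a serious obstacle; the only point demanding care is the $\pm 1$ bookkeeping — namely that the $k$-plex condition yields at most $k-1$ genuine non-neighbours per vertex (the definition counts the vertex itself among its $\leq k$ non-neighbours), and that invoking $(u,v)\in E(G)$ lets one place all of those non-neighbours inside $H\setminus\{u,v\}$ rather than merely inside $H\setminus\{u\}$, which is exactly what sharpens the bound to $lb-2k+1$.
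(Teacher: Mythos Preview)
Your proof is correct; the counting is clean and the $\pm 1$ bookkeeping is handled properly. Note that the paper does not actually supply its own proof of this lemma: it cites the result from prior work and explicitly states that ``the details of proof is omitted for the ease of presentation,'' so there is no in-paper argument to compare against---your contradiction argument is the standard one underlying the cited truss-pruning rule.
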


Note that the time complexities of core pruning and truss pruning are $O(m)$~\cite{BVZ03m} and $O(m\times \delta(G))$~\cite{WC12}, respectively.
The above lemmas (namely, core pruning and truss pruning) indicates those unpromising vertices and edges can be removed from $G$. In particular, with some vertices or edges being removed from $G$, the remaining vertices $u$ and edges $(u,v)$ have $d_G(u)$ and $\delta_G(u,v)$ decreases, respectively; and then more vertices and edges can be removed. Therefore, the core pruning (resp. the truss pruning) can be conducted in an iterative way, i.e., iteratively removing unpromising vertices (resp. edges) and updating $d_G(\cdot)$ (resp. $\delta_G(\cdot,\cdot)$) for the remaining until no vertex or edge can be removed. 
We remark that the state-of-the-art method called the core-truss co-pruning (\texttt{CTCP}~\cite{CXS22}) iteratively conducts the truss pruning and then the core pruning in multiple rounds until the graph remains unchanged. However, we observe that \texttt{CTCP} is still inefficient due to potential redundant computations. This is because (1) \texttt{CTCP} performs the truss pruning and the core pruning \emph{separately} at each round (i.e., first remove a set of edges via the truss pruning and then remove one unpromising vertex via core pruning), (2) the truss pruning has the time complexity of $O(m \times \delta(G))$ lager than $O(m)$ for the core pruning, and (3) we note that during the truss pruning, some vertices can be removed via the more efficient core pruning while the truss pruning will iteratively check all their incident edges and then remove some of them (which is very costly).

To improve the practical efficiency of \texttt{CTCP}, we propose a new algorithm called the \emph{core-pruning-first core-truss co-pruning} (or \texttt{CF-CTCP}), which differs from \texttt{CTCP} in the way of conducting pruning at each round. 
Specifically, at each round, it first removes \emph{all} unpromising vertices and then removes \emph{one} unpromising edge (recall that \texttt{CTCP} first removes \emph{all} unpromising edges and then \emph{one} unpromising vertex). The benefit is that unpromising vertices can be removed immediately via efficient core pruning.
Note that our \texttt{CF-CTCP} has the same output but requires less computation compared to \texttt{CTCP}. Given the integer $k$ and the lower bound $lb$, both \texttt{CF-CTCP} and \texttt{CTCP} reduce the input graph $G$ to the maximal subgraph that is $(lb+1-k)$-core and $(lb+3-2k)$-truss.
The difference between \texttt{CF-CTCP} and \texttt{CTCP} is illustrated in Figure~\ref{img-rationale-CTCP}.

\begin{figure}[h]
    \centering
    \subfigure[Rationale of \texttt{CTCP}]{
        \includegraphics[width=0.205\textwidth]{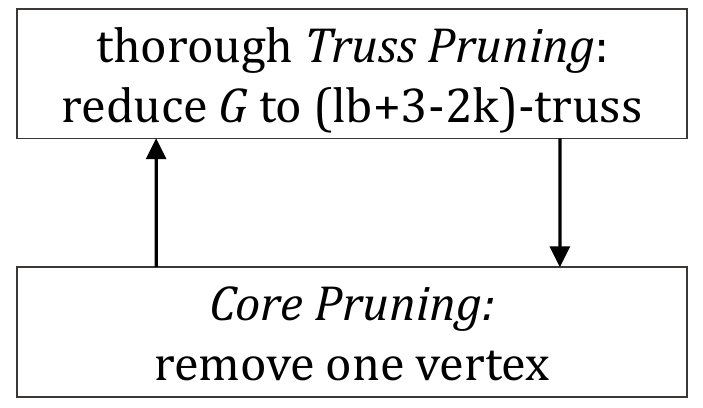}
    }
    \subfigure[Rationale of \texttt{CF-CTCP}]{
        \includegraphics[width=0.205\textwidth]{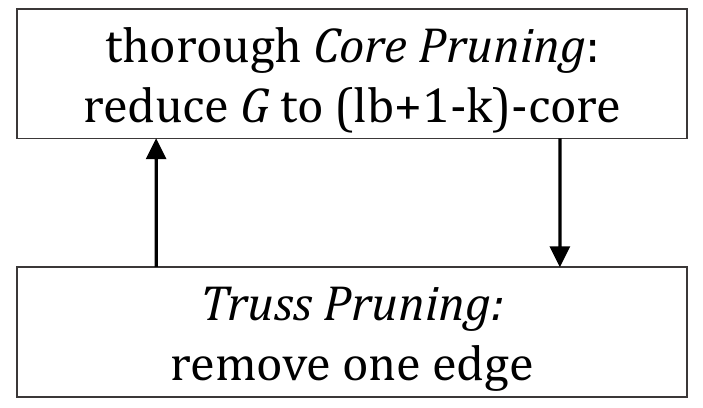}
    }
    \caption{Comparing \texttt{CTCP} and \texttt{CF-CTCP}}
    \label{img-rationale-CTCP}
\end{figure}

The main idea of \texttt{CF-CTCP} is to conduct core pruning thoroughly as follows: 1) if we identify an edge that can be removed, we will immediately remove this edge, even if we have not yet finished computing $\Delta(\cdot,\cdot)$ (i.e., all triangles for each edge); 2) after removing an edge $(u,v)$, we will check whether $u$ or $v$ can be reduced by core pruning. 
Note that after removing an edge $(u,v)$, we postpone the action of updating $\Delta(u, \cdot)$ and $\Delta(v, \cdot)$ since it is time-consuming and there may lead to redundant computations. For example, if both vertices $u$ and $v$ will be removed by core pruning later, updating $\Delta(u, \cdot)$ and $\Delta(v, \cdot)$ is not necessary.

Our proposed \texttt{CF-CTCP} is shown in Algorithm~\ref{algorithm:CF-CTCP}. 
The input of \texttt{CF-CTCP} includes: 1) a set of vertices $Q_v$, which stores the vertices that need to be removed; 2) two integers $\tau_v = lb-k$ and $\tau_e = lb-2k$ that serve as thresholds for the numbers of degrees and triangles for pruning, respectively; 3) a boolean value $lb\_changed$ which is $true$ if a larger $k$-plex is found in \texttt{kPEX} and \texttt{KPHeuris} (Algorithm~\ref{alg:basic} and Algorithm~\ref{algorithm:KPHeuris}). 
We note that both \texttt{kPEX} and \texttt{KPHeuris} (Algorithms~\ref{alg:basic} and~\ref{algorithm:KPHeuris}) invoke \texttt{CF-CTCP} multiple times. 
For example, \texttt{KPHeuris} invokes \texttt{CF-CTCP} by calling \texttt{CF-CTCP}$(G, \emptyset, lb-k, lb-2k, true)$ when it finds a larger heuristic $k$-plex of size $lb$.

We then describe the details of \texttt{CF-CTCP} in steps.
\underline{First}, we design a procedure called \texttt{RemoveEdge} (Lines 21-24) to remove one unpromising edge in Line 21 and all current unpromising vertices in Line 22 via core and truss pruning. The set of removed edges to be considered (due to Lines 21 and 22) is pushed into $Q_e$, which will be used to update $\Delta(\cdot, \cdot)$ later. 
\underline{Second}, Lines 5-6 initialize the sets of common neighbours $\Delta(\cdot, \cdot)$ if \texttt{CF-CTCP} is invoked for the first time. Whenever we find an edge $(u,v)$ that can be reduced, we invoke the procedure \texttt{RemoveEdge} to remove $(u,v)$ immediately in Line 8.
\underline{Third}, we postpone the action of updating $\Delta(\cdot, \cdot)$ to Lines 9-20.
Lines 11-20 consider the effect of each removed edge $(u,v)$ by traversing all the triangles that $(u,v)$ participates in. Specifically, Lines 11-15 traverse each edge $(u,w) \in E$ satisfying $v \in \Delta(u,w)$, i.e., $u,v,w$ can form a triangle, then we update $\Delta(u,w)$ and check whether edge $(u,w)$ can be reduced. Lines 16-20 consider the edges connected to $v$, which is similar to Lines 11-15. Note that in Lines 15 and 20, if we find an edge that can be reduced, we invoke the procedure \texttt{RemoveEdge} to remove the edge immediately.

\begin{algorithm}[t]
    \caption{\texttt{CF-CTCP}$(G=(V,E), Q_v, \tau_v, \tau_e, lb\_changed )$}
    \label{algorithm:CF-CTCP}
   \KwIn{A graph $G=(V,E)$, the set of vertices to be removed $Q_v$, two integral thresholds $\tau_v$ and $\tau_e$, a boolean value $lb\_changed$}
    \KwOut{The reduced graph which is the maximal subgraph in $G$ that is both a $(\tau_v +1)$-core and a $(\tau_e+3)$-truss}
    Remove the vertices in $Q_v$ from $G$ and reduce $G$ to the maximal $(\tau_v+1)$-core by the core pruning\;
     Initialize the set of removed edges to be considered $Q_e \leftarrow \{$\emph{edges removed at Line 1}$\}$\;
    \If{$lb\_changed$}{
        \For{\emph{\bf each} $(u,v) \in E$}{
                \If{\texttt{CF-CTCP} is invoked for the first time}{
                    $\Delta(u,v) \leftarrow  N_{G}(u) \cap N_{G}(v) $\;
                }
                \If{$|\Delta(u,v)| \leq \tau_e$}{
                    $Q_e \leftarrow Q_e \cup$ \texttt{RemoveEdge}$(G, (u,v), \tau_v)$\;
                }
        }
    }
    \While{$Q_e \neq \emptyset$ }{
        $(u,v)\leftarrow $ pop an edge from $Q_e$\;
        \If{$u \in V$}{
            \For{\emph{\textbf{each}} $w \in N_{G}(u)$ satisfying $v \in \Delta(u,w)$}{
                Remove $v$ from $\Delta(u,w)$\;
                \If{$|\Delta(u,w)| \leq \tau_e$}{
                    $Q_e \leftarrow Q_e \cup$ \texttt{RemoveEdge}$(G, (u,w), \tau_v)$\;
                }
            }
        }
        \If{$v \in V$}{
            \For{\emph{\textbf{each}} $w \in N_{G}(v) $ satisfying $u \in \Delta(v,w)$}{
                Remove $u$ from $\Delta(v,w)$\;
                \If{$|\Delta(v,w)| \leq \tau_e$}{
                    $Q_e \leftarrow Q_e \cup$ \texttt{RemoveEdge}$(G, (v,w), \tau_v)$\;
                }
            }
        }
    }

    \BlankLine

    \KwProcedure{\texttt{RemoveEdge}$(G, (u, v), \tau_v)$}
    \KwOut{The set of removed edges to be considered $Q_e$}
    Remove the \emph{unpromising edge} $(u,v)$ from $G$\; 
    Reduce $G$ to the maximal $(\tau_v+1)$-core by the core pruning\;
    Initialize the set of removed edges to be considered $Q_e \leftarrow \{$\emph{edges removed at Lines 21-22}$\}$\;
    \Return $Q_e$\;
\end{algorithm}

\smallskip

\noindent
\textbf{Time complexity.}
We analyze the time complexity of \texttt{CF-CTCP} (Algorithm~\ref{algorithm:CF-CTCP}), including all invocations in \texttt{kPEX}, in the following.

\begin{lemma}\label{lemma:time-complexity-CF-CTCP}
    The total time complexity of all invocations in \texttt{kPEX} (Algorithm~\ref{alg:basic} which includes invocations in the heuristic process \texttt{KPHeuris} in Algorithm~\ref{algorithm:KPHeuris})) to \texttt{CF-CTCP} (Algorithm~\ref{algorithm:CF-CTCP}) is $O(m \times \delta(G))$.
\end{lemma}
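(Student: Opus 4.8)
The plan is to amortize the cost over the whole run of \texttt{kPEX} rather than bound each invocation of \texttt{CF-CTCP} separately. First I would observe that \texttt{CF-CTCP} only ever \emph{removes} vertices and edges from $G$ (it is monotone), so across all invocations the total number of vertices removed is at most $n$ and the total number of edges removed is at most $m$. Hence the core-pruning work — which touches each removed vertex and its incident edges once — sums to $O(m)$ over the entire execution, and likewise the set $Q_v$ passed in, together with the edges pushed into $Q_e$ at Line~2 and inside \texttt{RemoveEdge} (Lines~21--24), has total size $O(m)$ across all calls. The main accounting burden is therefore the truss-pruning part: the initialization of $\Delta(u,v)$ for every edge the first time \texttt{CF-CTCP} is called, and the later updates to $\Delta(\cdot,\cdot)$ driven by popping edges from $Q_e$ in the \textbf{while} loop (Lines~9--20).

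Next I would bound the one-time initialization of $\Delta(u,v)=N_G(u)\cap N_G(v)$ for all $(u,v)\in E$ (Lines~5--6). Using the standard triangle-listing argument via a degeneracy ordering, this costs $O(m\cdot\delta(G))$: orient each edge toward the later endpoint in the degeneracy order so each vertex has out-degree $\le\delta(G)$, and then each intersection is charged against the out-neighborhood of the lower-ordered endpoint. This is exactly the cost quoted for truss pruning ($O(m\times\delta(G))$ via~\cite{WC12}) in the text just before the lemma, so I would cite that bound directly.

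Then I would handle the incremental $\Delta$-updates. The key point is that each edge $(u,v)$ is pushed into $Q_e$ at most once over the whole run (an edge is put into $Q_e$ only when it is removed from $G$, and removals are permanent). When $(u,v)$ is popped, Lines~11--20 scan, for the surviving endpoint, its current neighbors $w$ with the third vertex in the relevant $\Delta$-set — i.e. the triangles through $(u,v)$ — and do $O(1)$ work per such triangle (decrement a counter / set, test against $\tau_e$, possibly call \texttt{RemoveEdge}). So the total incremental work is proportional to $\sum_{(u,v)\text{ removed}} (\text{\# triangles through }(u,v)\text{ at removal time})$, which is at most the total number of (edge, triangle-through-it) incidences ever present, i.e. $O(\sum_{(u,v)\in E}\delta_G(u,v)) = O(\#\text{triangles in }G)=O(m\cdot\delta(G))$ by the same degeneracy-ordering count. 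Summing the three contributions — $O(m)$ for core pruning and queue bookkeeping, $O(m\cdot\delta(G))$ for initialization, $O(m\cdot\delta(G))$ for updates — gives the claimed $O(m\times\delta(G))$ total.

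The part requiring the most care is making the "at most once" claims airtight: that every edge enters $Q_e$ at most once across \emph{all} invocations (including those triggered from \texttt{KPHeuris}), and that the $\Delta(u,w)\gets\Delta(u,w)\setminus\{v\}$ update in Lines~12 and~17 is performed at most once per triangle over the whole run, so that the running count $|\Delta(u,w)|$ stays consistent with the current graph even though updates are deferred. I would argue this from the invariant that a triangle $\{u,v,w\}$ is "charged" only when one of its three edges is deleted, and a deleted edge is never reinserted, so no triangle is charged more than three times. With monotonicity of deletions established, the rest of the amortization is routine.
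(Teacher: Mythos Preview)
Your amortized approach mirrors the paper's: bound core pruning by $O(m)$ globally, bound the one-time $\Delta$ initialization by $O(m\delta(G))$ via triangle listing (the paper states and uses the equivalent inequality $\sum_{(u,v)\in E}\min(d_G(u),d_G(v))\le 2m\,\delta(G)$), and charge the $Q_e$-driven updates in Lines~9--20 to triangles, again $O(m\delta(G))$. The monotonicity argument that each edge enters $Q_e$ at most once is correct and is what the paper uses too.

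There is, however, one genuine gap. You account for Lines~5--6 as a one-time cost, but the enclosing loop at Lines~4--8 runs on \emph{every} invocation with $lb\_changed=true$: even after the first call, it still scans all surviving edges to test $|\Delta(u,v)|\le\tau_e$ at Line~7, which is $\Theta(|E|)$ work per such invocation regardless of how many edges are actually removed. Your proposal never bounds the number of such invocations, so as written the argument does not close. The paper fills this hole by observing that $lb=|S^*|$ satisfies $k\le lb\le \delta(G)+k$ and strictly increases whenever $lb\_changed$ is $true$, so there are at most $\delta(G)$ such invocations, contributing another $O(m\,\delta(G))$. With that one sentence added, your proof is complete and essentially identical to the paper's.
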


The omitted proof, along with an implementation of \texttt{CF-CTCP} with $O(m)$ memory usage, is provided in the appendix.

\smallskip

\noindent
\textbf{Remarks.} \textbf{First}, the time complexity of \texttt{CTCP} is $O(m \times \delta(G) +m\times k)$=$O(m \delta(G))$, requiring that $k$ is a small constant. However, $k$ is up to $n$ in theory and the time complexity of our \texttt{CF-CTCP} is always $O(m \delta(G) )$ for all possible values of $k$.
\textbf{Second}, we do not consider the update of $\Delta(\cdot, \cdot)$ when removing a vertex because removing a vertex is equivalent to first removing all the edges connected to this vertex and then removing this isolated vertex. Therefore, we only consider the removed edges for updating $\Delta(\cdot, \cdot)$.
\textbf{Third}, the acceleration of \texttt{CF-CTCP} can be attributed to two main factors: 1) we do not need to compute the numbers of triangles for the edges that can be removed by core pruning; 2) for an edge $(u,v)$ to be removed such that both $u$ and $v$ are already removed by core pruning, we do not need to traverse related triangles to update $\Delta(\cdot, \cdot)$. Note that if we cannot remove any vertex or edge, the time consumption of \texttt{CF-CTCP} will be the same as \texttt{CTCP} in theory, which is due to the fact that both of them need to compute $\Delta(\cdot, \cdot)$ in $O(m\times \delta(G))$.

\begin{figure}[]
    \centering
    \includegraphics[width=0.45\textwidth]{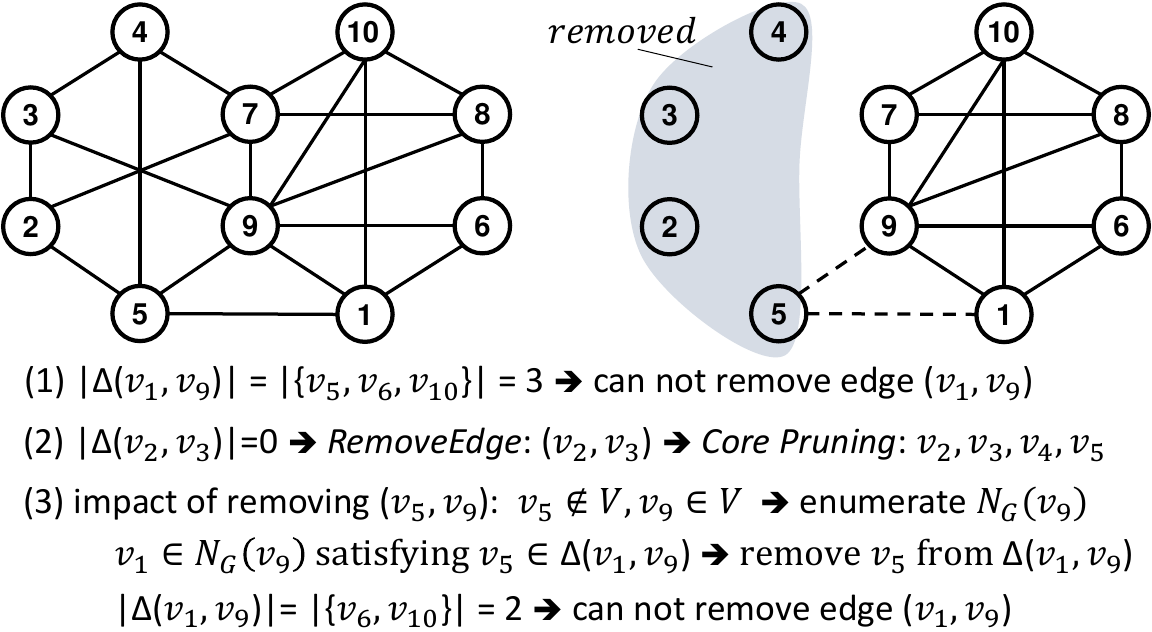}
    \caption{An example for \texttt{CF-CTCP} assuming $lb=4$ and $k=2$}
    \label{img-example-CTCP}
\end{figure}

\smallskip 

\noindent
\textbf{An example of \texttt{CF-CTCP}.} Consider the example of \texttt{CF-CTCP} (Algorithm~\ref{algorithm:CF-CTCP}) in Figure~\ref{img-example-CTCP}, assuming $lb=4$ and $k=2$. According to Lemma~\ref{lemma:core-pruning} and Lemma~\ref{lemma:truss-pruning}, we need to reduce $G$ to the maximal subgraph that is both a $3$-core and a $3$-truss, i.e., we will remove a vertex $u$ if $d_G(u)<3$ and an edge $(u,v)$ if $|\Delta(u,v)| < 1$. 
\underline{First}, we enumerate each edge $(u,v)$ and compute the common neighbors of $u$ and $v$ (Lines 4-6). For those edges connected to $v_1$, we cannot remove them since they have enough common neighbors, e.g., there are 3 common neighbors of $v_1$ and $v_9$. 
However, when we consider edges connected to $v_2$, we find that edge $(v_2, v_3)$ can be removed since $|\Delta(v_2, v_3)|=0$. We then immediately remove edge $(v_2, v_3)$ and conduct core pruning, which removes vertices $v_2, v_3, v_4$, and $v_5$ (Lines 21-24). After this process, we continue to compute common neighbors for the remaining edges, but none of these edges can be removed. 
\underline{Second}, we begin to consider those removed edges in $Q_e$. We focus on the edges $(v_5, v_9)$ and $(v_1, v_5)$ since the other removed edges cannot form a triangle with the remaining edges in $G$. 
For the removal of the edge $(v_5, v_9)$, according to Lines 11-20, we update $\Delta(v_1, v_9)$, and the triangle $(v_1, v_5, v_9)$ is 
destroyed.
Then, for the edge $(v_1, v_5)$, since the triangle $(v_1, v_5, v_9)$ no longer exists after removing the edge $(v_5, v_9)$, the edge $(v_1, v_5)$ cannot constitute any triangle with other vertices. Thus, the procedure of \texttt{CF-CTCP} terminates.
Finally, we reduce $G$ to $G[\{v_1, v_6, v_7, v_8, v_9, v_{10}\}]$ where $G[\{v_1, v_6, v_8, v_9, v_{10}\}]$ is a $2$-plex of size $5$.

\subsection{Compute a large $k$-plex: \texttt{KPHeuris}}
\label{subsec:kpheuris}

We introduce a heuristic method \texttt{KPHeuris} for computing a large initial $k$-plex. Note that such an initial $k$-plex offers a lower bound $lb$, which helps to narrow the search space; and the larger the lower bound is, the more search space can be refined. Therefore, \texttt{KPHeuris} is designed for obtaining a large $k$-plex \emph{efficiently and effectively}.

\begin{algorithm}[tbhp]
    \caption{\texttt{KPHeuris}$(G, k)$}
    \label{algorithm:KPHeuris}
   \KwIn{A graph $G=(V,E)$ and an integer $k>1$}
    \KwOut{The vertex set $S$ of a heuristic initial $k$-plex $G[S]$}
    
    $S \leftarrow $ \texttt{Degen}$(G, k)$, $lb \leftarrow |S|$\;
    Apply \texttt{CF-CTCP} for refining $G$ based on $lb$\;
    \For{\emph{\textbf{each}} $v_i \in V(G)$}{
        $g \leftarrow$ $G[\{v_i, v_{i+1}, ..., v_n\} \cap N^{\leq 2}(v_i)]$;
        $S' \leftarrow$ \texttt{Degen}$(g, k)$\;
        \If{$|S'| > |S|$}{
            $S \leftarrow S'$, $lb \leftarrow |S|$\;
            Apply \texttt{CF-CTCP} for refining $G$ based on $lb$\;
        }
    }
    \Return $S$\;

    \BlankLine

    \KwProcedure{\texttt{Degen}$(G, k)$}
    \KwOut{The vertex set $S$ of a heuristic maximal $k$-plex in $G$}
    $v_1, v_2, ..., v_n\gets$ the degeneracy order of vertices in $V(G)$\;
    $S\gets \emptyset$\; 
    \For{$i=n$ to $1$}{
        \lIf{$G[S \cup \{v_i\}]$ is a $k$-plex}{
            $S \leftarrow S \cup \{v_i\}$
        }
    }
    \Return $S$\;
\end{algorithm}

We summarize \texttt{KPHeuris} in Algorithm~\ref{algorithm:KPHeuris}, which  relies on a sub-procedure (called \texttt{Degen}) for computing a large $k$-plex. Specifically, \texttt{Degen} iteratively includes to an empty set $S$ a vertex in a graph $G$ based on the degeneracy ordering while retaining the $k$-plex property of $G[S]$ until we cannot continue (Lines 9-12). To compute a larger $k$-plexes, \texttt{KPHeuris} further generate $n$ subgraphs from $G$, each of which corresponds to a vertex in $G$ (Lines 3-4); it then invokes \texttt{Degen} on each of them to obtain a $k$-plex (Line 4); it finally returns the largest one among $n+1$ found $k$-plexes. Note that the subgraph related to $v_i$ is the subgraph induced by $\{v_i, v_{i+1}, ..., v_n\} \cap N^{\leq 2}(v_i)$ where $N^{\leq 2}(u)$ denotes $u$'s neighbors and $u$'s neighbors' neighbors, and the rationale is that it can make the subgraph smaller and denser, which tends to find a larger $k$-plex easier.  The time complexity of \texttt{Degen} is $O(m)$, and we will invoke it at most $n+1$ times, thus the total time complexity of computing heuristic solutions in Algorithm~\ref{algorithm:KPHeuris} is $O(nm)$.  
We remark that the total time complexity of all invocations of \texttt{CF-CTCP} is $O(m\delta(G))$ because we invoke \texttt{CF-CTCP} in \texttt{KPHeuris} only when we find a larger $k$-plex, i.e., $lb\_changed=true$, as shown in Lemma~\ref{lemma:time-complexity-CF-CTCP}. Thus, the time complexity of \texttt{KPHeuris} is $O(m\delta(G) + nm) = O(nm)$.

\smallskip
\noindent\textbf{Compared with existing heuristic methods.}
There are two state-of-the-art heuristic methods: \texttt{kPlex-Degen} (\cite{CXS22}) and \texttt{EGo-Degen} (\cite{chang2024maximum}).
\texttt{kPlex-Degen} computes a large $k$-plex by iteratively removing a vertex from the input graph $G$ based on a certain ordering until the remaining graph becomes a $k$-plex. \texttt{KPHeuris} differs from \texttt{kPlex-Degen} in two aspects. First, \texttt{Degen} computes a large $k$-plex by iteratively including a vertex, which is more efficient since the size of the largest $k$-plex is usually much smaller than the size of the input graph (especially for real-world graphs) and can always return a maximal $k$-plex, while \texttt{kPlex-Degen} cannot. Second, \texttt{KPHeuris} further explores $n$ subgraphs instead of the input graph $G$, which tends to obtain a larger $k$-plex as empirically verified in our experiments.
\texttt{EGo-Degen} extracts a subgraph $g_v$ for each vertex $v$ and invokes \texttt{kPlex-Degen} to compute a $k$-plex in $g_v$. Then, \texttt{EGo-Degen} selects the largest $k$-plex among those computed on $n$ subgraphs as the initial heuristic $k$-plex. \texttt{KPHeuris} differs from \texttt{EGo-Degen} in three aspects. 
First, the method of subgraph extraction is different. For a vertex $v \in V(G)$, \texttt{EGo-Degen} extracts $g_v = G[\{v_i, v_{i+1}, ..., v_n\} \cap N_G(v_i)]$, while our \texttt{KPHeuris} generates a subgraph $g_v' = G[\{v_i, v_{i+1}, ..., v_n\} \cap N^{\leq 2}(v_i)]$. It is easy to verify that $g_v \subseteq g_v'$ due to $N_G(v_i) \subseteq N^{\leq 2}(v_i)$. Additionally, a larger subgraph tends to contain a larger $k$-plex, as verified in Table~\ref{table:compare-prepro-k=5}.
Second, \texttt{EGo-Degen} computes $k$-plexes by invoking \texttt{kPlex-Degen}, which implies that it may find a non-maximal $k$-plex as mentioned above.
Third, once a larger $k$-plex is found, \texttt{KPHeuris} updates $lb$ and removes unpromising vertices/edges immediately, while \texttt{EGo-Degen} does not reduce the graph until $n$ heuristic $k$-plexes are computed.

\section{Experimental Studies}
\label{sec:exp}

We test the efficiency and effectiveness of our algorithm \texttt{kPEX} by comparing with the state-of-the-art BRB algorithms:
\begin{itemize}
    \item \textbf{kPlexS}~\footnote{\url{https://github.com/LijunChang/Maximum-kPlex}}: the existing algorithm proposed in~\cite{CXS22}.
    \item \textbf{KPLEX}~\footnote{\url{https://github.com/joey001/kplex\_degen\_gap}}: the existing algorithm proposed in~\cite{wang2023fast}.
    \item \textbf{DiseMKP}~\footnote{\url{https://github.com/huajiang-ynu/ijcai23-kpx}}: the existing algorithm proposed in~\cite{jiang2023refined}.
    \item \textbf{kPlexT}~\footnote{\url{https://github.com/LijunChang/Maximum-kPlex-v2}}: the existing algorithm proposed in~\cite{chang2024maximum}.
\end{itemize}

\smallskip
\noindent
\textbf{Setup.}
All algorithms are written in C++ and compiled with -O3 optimization by g++ 9.4.0.  
Moreover, all algorithms are initialized with a lower bound of $2k-2$ to focus on finding $k$-plexes with at least $2k-1$ vertices.
All experiments are conducted in the single-thread mode on a machine with an Intel(R) Xeon(R) Platinum 8358P CPU@2.60GHz and 256GB main memory. 
The CPU frequency is fixed at 3.3GHz.
We set the time limit as 3600 seconds and use \textbf{OOT} (Out Of Time limit) to indicate the time exceeds the limit. 
We consider six different values of $k$, i.e., $2,3,5,10,15$, and $20$. 
We focus on the case of $k=5$, and defer the experiments for other values of $k$ to the appendix. We also note that the major findings for $k=5$ hold for other values of $k$.

\smallskip
\noindent\textbf{Datasets.}
We consider the following two collections of graphs.
\begin{itemize}
    \item \textbf{Network Repository~\cite{NetworkRepo}.}
    The dataset contains 584 graphs with up to $5.87 \times 10 ^7$ vertices,
    including biological networks (36), dynamic networks (85), labeled networks (104), road networks (15), interaction (29),  scientific computing (11), social networks (75), facebook (114), web (31), and DIMACS-10 graphs (84).
    Most of them are real-world graphs.
    
    \item \textbf{2nd-DIMACS (DIMACS-2) Graphs~\cite{DIMACS}.}
    The dataset contains 80 synthetic dense graphs with up to 4000 vertices
    and the densities ranging from 0.03 to 0.99. Most graphs in the dataset are synthetic graphs, which are often hard to be solved~\cite{JZX+21,wang2023fast,jiang2023refined}.
\end{itemize}

For better comparisons, we select 30 representative graphs from the above 664 graphs and report the statistics in Table~\ref{table:graph-statistics}, where the graph density is $\frac{2m}{n(n-1)}$ and the maximum degree is $d_{max}$. 
The criteria of selecting these representative graphs are as follows. First, following~\cite{wang2023fast}, we do not select extremely easy or hard graphs, i.e., those graphs that can be solved within 5 seconds by all five solvers or cannot be solved within 3600 seconds by any solver when $k=5$. Second, the representative graphs cover a wide range of sizes. Among the selected graphs, 10 small dense graphs (G1-G10) are synthetic graphs from \textbf{DIMACS-2 Graphs}, 10 medium graphs (G11-G20) with at most $10^6$ vertices, and 10 large sparse graphs (G21-G30) with at least $10^6$ vertices are real-world graphs from \textbf{Network Repository}. Third, most of the representative graphs have also been selected in previous studies~\cite{JZX+21,wang2023fast,XLD+17}.

\begin{table}[t]
    \centering
    \small
    \caption{Statistics of 30 representative graphs}
    \label{table:graph-statistics}
    \scalebox{0.82}{
    \begin{tabular}{l|l|c|c|c|c|c}
    \hline
        \textbf{ID} & \textbf{Graph} & \textbf{$n$} & \textbf{$m$} & \textbf{density} & \textbf{$d_{max}$} & \textbf{$\delta(G)$} \\ \hline
        G1 & johnson8-4-4 & 70 & 1855 & $7.68 \cdot 10^{-1}$ & 53 & 53 \\ \hline
        G2 & C125-9 & 125 & 6963 & $8.98 \cdot 10^{-1}$ & 119 & 102 \\ \hline
        G3 & keller4 & 171 & 9435 & $6.49 \cdot 10^{-1}$ & 124 & 102 \\ \hline
        G4 & brock200-2 & 200 & 9876 & $4.96 \cdot 10^{-1}$ & 114 & 84 \\ \hline
        G5 & san200-0-9-1 & 200 & 17910 & $9.00 \cdot 10^{-1}$ & 191 & 162 \\ \hline
        G6 & san200-0-9-2 & 200 & 17910 & $9.00 \cdot 10^{-1}$ & 188 & 169 \\ \hline
        G7 & san200-0-9-3 & 200 & 17910 & $9.00 \cdot 10^{-1}$ & 187 & 169 \\ \hline
        G8 & p-hat300-1 & 300 & 10933 & $2.44 \cdot 10^{-1}$ & 132 & 49 \\ \hline
        G9 & p-hat300-2 & 300 & 21928 & $4.89 \cdot 10^{-1}$ & 229 & 98 \\ \hline
        G10 & p-hat500-1 & 500 & 31569 & $2.53 \cdot 10^{-1}$ & 204 & 86 \\ \hline
        G11 & soc-BlogCatalog-ASU & 10312 & 333983 & $6.28 \cdot 10^{-3}$ & 3992 & 114 \\ \hline
        G12 & socfb-UIllinois & 30795 & 1264421 & $2.67 \cdot 10^{-3}$ & 4632 & 85 \\ \hline
        G13 & soc-themarker & 69413 & 1644843 & $6.83 \cdot 10^{-4}$ & 8930 & 164 \\ \hline
        G14 & soc-BlogCatalog & 88784 & 2093195 & $5.31 \cdot 10^{-4}$ & 9444 & 221 \\ \hline
        G15 & soc-buzznet & 101163 & 2763066 & $5.40 \cdot 10^{-4}$ & 64289 & 153 \\ \hline
        G16 & soc-LiveMocha & 104103 & 2193083 & $4.05 \cdot 10^{-4}$ & 2980 & 92 \\ \hline
        G17 & soc-wiki-conflict & 116836 & 2027871 & $2.97 \cdot 10^{-4}$ & 20153 & 145 \\ \hline
        G18 & soc-google-plus & 211187 & 1141650 & $5.12 \cdot 10^{-5}$ & 1790 & 135 \\ \hline
        G19 & soc-FourSquare & 639014 & 3214986 & $1.57 \cdot 10^{-5}$ & 106218 & 63 \\ \hline
        G20 & rec-epinions-user-ratings & 755760 & 13667951 & $4.79 \cdot 10^{-5}$ & 162179 & 151 \\ \hline
        G21 & soc-wiki-Talk-dir & 1298165 & 2288646 & $2.72 \cdot 10^{-6}$ & 100025 & 119 \\ \hline
        G22 & soc-pokec & 1632803 & 22301964 & $1.67 \cdot 10^{-5}$ & 14854 & 47 \\ \hline
        G23 & tech-ip & 2250498 & 21643497 & $8.55 \cdot 10^{-6}$ & 1833161 & 253 \\ \hline
        G24 & ia-wiki-Talk-dir & 2394385 & 4659565 & $1.63 \cdot 10^{-6}$ & 100029 & 131 \\ \hline
        G25 & sx-stackoverflow & 2584164 & 28183518 & $8.44 \cdot 10^{-6}$ & 44065 & 198 \\ \hline
        G26 & web-wikipedia\_link\_it & 2790239 & 86754664 & $2.23 \cdot 10^{-5}$ & 825147 & 894 \\ \hline
        G27 & socfb-A-anon & 3097165 & 23667394 & $4.93 \cdot 10^{-6}$ & 4915 & 74 \\ \hline
        G28 & soc-livejournal-user-groups & 7489073 & 112305407 & $4.00 \cdot 10^{-6}$ & 1053720 & 116 \\ \hline
        G29 & soc-bitcoin & 24575382 & 86063840 & $2.85 \cdot 10^{-7}$ & 1083703 & 325 \\ \hline
        G30 & soc-sinaweibo & 58655849 & 261321033 & $1.52 \cdot 10^{-7}$ & 278489 & 193 \\ \hline
    \end{tabular}
    }
\end{table}

\subsection{Comparing with State-of-the-art Algorithms}
\noindent
\textbf{Number of solved instances on two collections of graphs.} 
We compare \texttt{kPEX} with four baselines by reporting the numbers of solved instances. 
The results for {\bf Network Repository} are shown in Table~\ref{table:num-instance-Network-Repository} and Figure~\ref{img-number-instances-Network-Repository}. We observe that \texttt{kPEX} outperforms all baselines for all tested values of $k$. For example, \texttt{kPEX} solves 12 instances more than the best baseline \texttt{KPLEX} for $k=10$ within 3600 seconds. 
In addition, our \texttt{kPEX} is more stable than baselines when varying $k$. In contrast, there is an obvious drop in solved instances for \texttt{kPlexT} and \texttt{DiseMKP} as $k$ increases from 2 to 20.
This demonstrates the superiority of \texttt{kPEX}, which employs the \texttt{AltRB} strategy (with novel reduction and bounding techniques) and efficient pre-processing methods.
The results on the collection of {\bf DIMACS-2 Graphs} are shown in Table~\ref{table:num-instance-artificial} and Figure~\ref{img-number-instances-artificial}. \texttt{kPEX} outperforms all baselines by solving the most instances with 3600 seconds for all tested $k$ values, 
e.g., \texttt{kPEX} solves 9 instances more than the second best solver \texttt{KPLEX} for $k=5$.
Besides, we note that \texttt{kPEX} is comparable with \texttt{DiseMKP} when $k=2$. This is because the proposed reduction rules and upper bounding method are less effective for small values of $k$.

\begin{table}[t]
    \centering
    \caption{Number of solved instances on Network Repository within 3600 seconds}
    \label{table:num-instance-Network-Repository}
    \begin{tabular}{cccccc}
    \toprule
        \textbf{$k$} & \textbf{\texttt{kPEX} (ours)} & \textbf{\texttt{KPLEX}} & \textbf{\texttt{kPlexT}} & \textbf{\texttt{kPlexS}} & \textbf{\texttt{DiseMKP}} \\ 
    \midrule
        2 & \textbf{567} & 564 & 559 & 559 & 542 \\ 
        3 & \textbf{564} & 553 & 557 & 553 & 527 \\ 
        5 & \textbf{565} & 557 & 554 & 547 & 516 \\ 
        10 & \textbf{564} & 552 & 537 & 549 & 495 \\ 
        15 & \textbf{559} & 548 & 507 & 547 & 452 \\ 
        20 & \textbf{559} & 546 & 471 & 539 & 439 \\
    \bottomrule
    \end{tabular}
\end{table}
\begin{figure}[t]
    \centering
    \subfigure[$k$=2]{
        \includegraphics[width=0.225\textwidth]{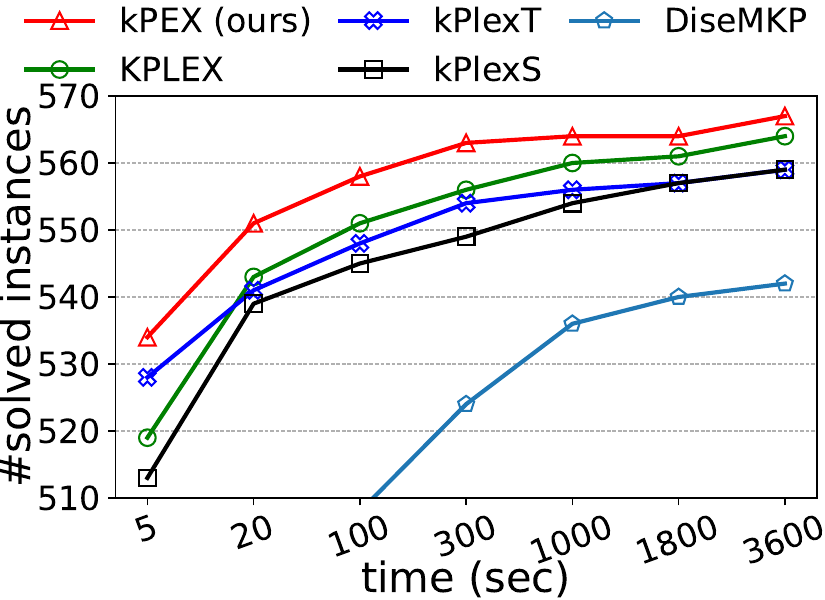}
    }
    \subfigure[$k$=3]{
        \includegraphics[width=0.225\textwidth]{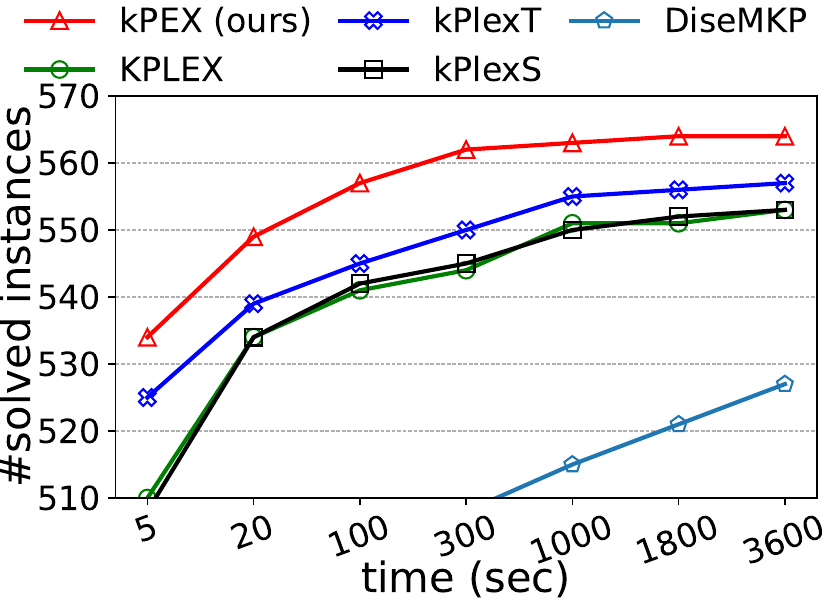}
    }
    \subfigure[$k$=5]{
        \includegraphics[width=0.225\textwidth]{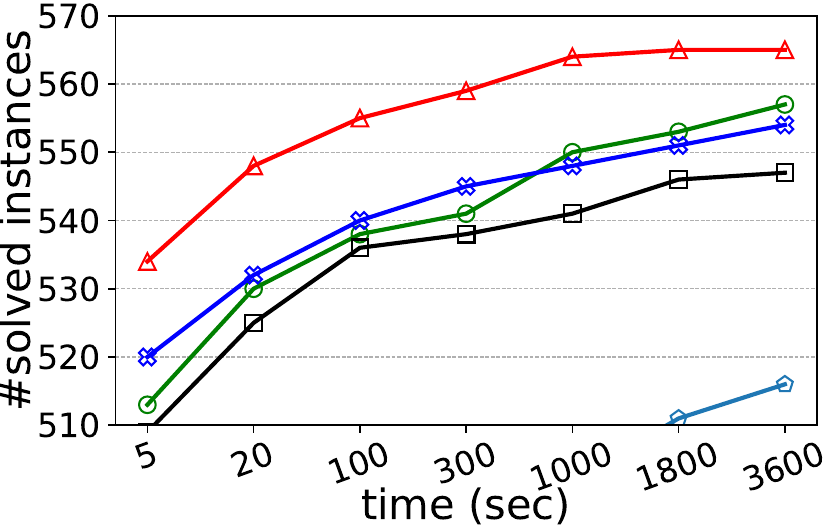}
    }
    \subfigure[$k$=10]{
        \includegraphics[width=0.225\textwidth]{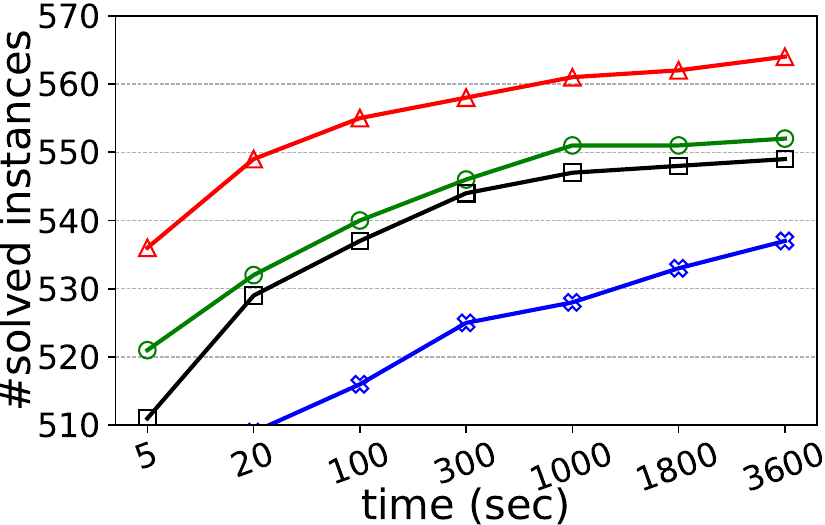}
    }
    \subfigure[$k$=15]{
        \includegraphics[width=0.225\textwidth]{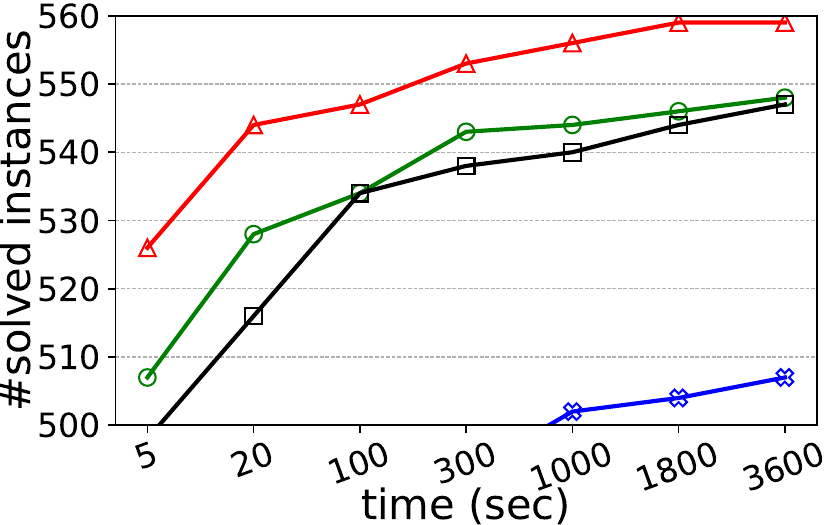}
    }
    \subfigure[$k$=20]{
        \includegraphics[width=0.225\textwidth]{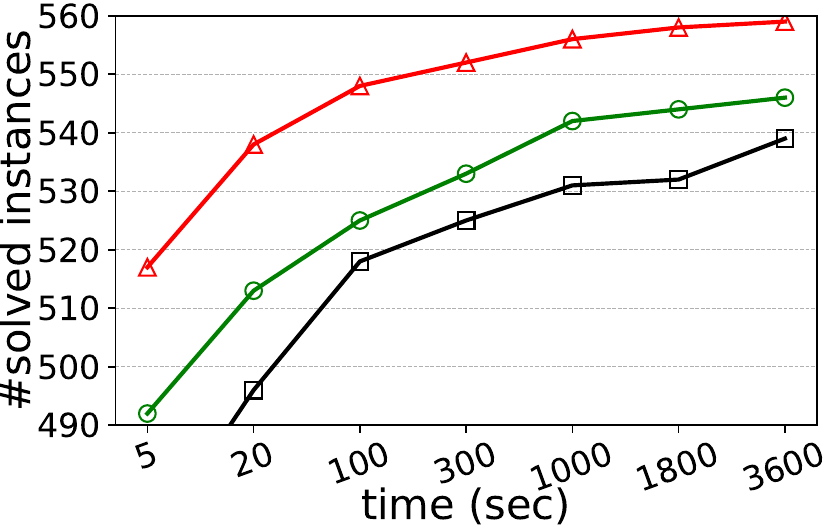}
    }
    \caption{Number of solved instances on Network Repository (The lines corresponding to \texttt{DiseMKP} and \texttt{kPlexT} may not appear in the figures, as they are slow under certain settings and thus cannot reach the bottom lines within 3600 seconds.)}
    \label{img-number-instances-Network-Repository}
\end{figure}

\begin{table}[t]
    \centering
    \caption{Number of solved instances on DIMACS-2 within 3600 seconds}
    \label{table:num-instance-artificial}
    \begin{tabular}{cccccc}
    \toprule
        \textbf{$k$} & \textbf{\texttt{kPEX} (ours)} & \textbf{\texttt{KPLEX}} & \textbf{\texttt{kPlexT}} & \textbf{\texttt{kPlexS}} & \textbf{\texttt{DiseMKP}} \\ 
    \midrule
        2 & \textbf{29} & 27 & 25 & 22 & 27 \\ 
        3 & \textbf{28} & 23 & 24 & 20 & 25 \\ 
        5 & \textbf{27} & 18 & 17 & 15 & 17 \\ 
        10 & \textbf{22} & 17 & 14 & 15 & 16 \\ 
        15 & \textbf{23} & 22 & 20 & 20 & 21 \\ 
        20 & \textbf{26} & 21 & 21 & 21 & 18 \\
    \bottomrule
    \end{tabular}
\end{table}
\begin{figure}[t]
    \centering
    \subfigure[$k$=2]{
        \includegraphics[width=0.225\textwidth]{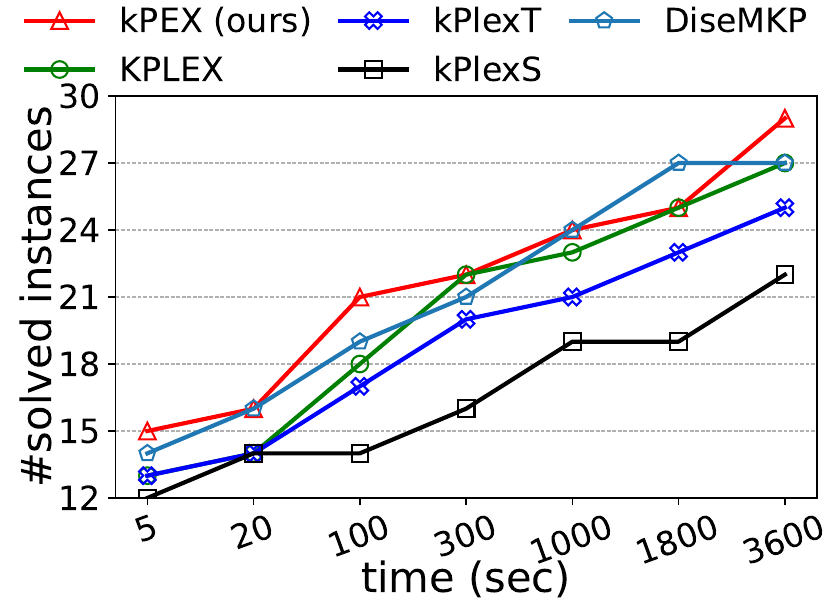}
    }
    \subfigure[$k$=3]{
        \includegraphics[width=0.225\textwidth]{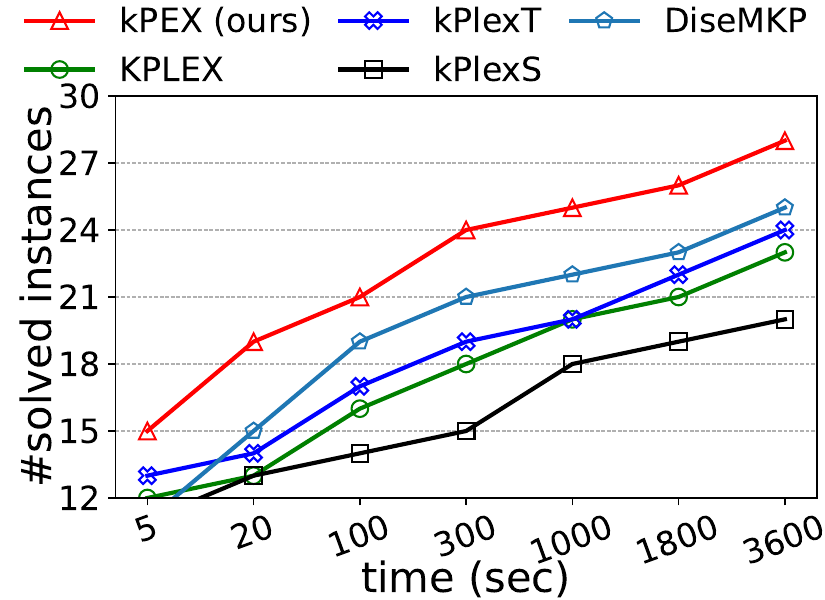}
    }
    \subfigure[$k$=5]{
        \includegraphics[width=0.225\textwidth]{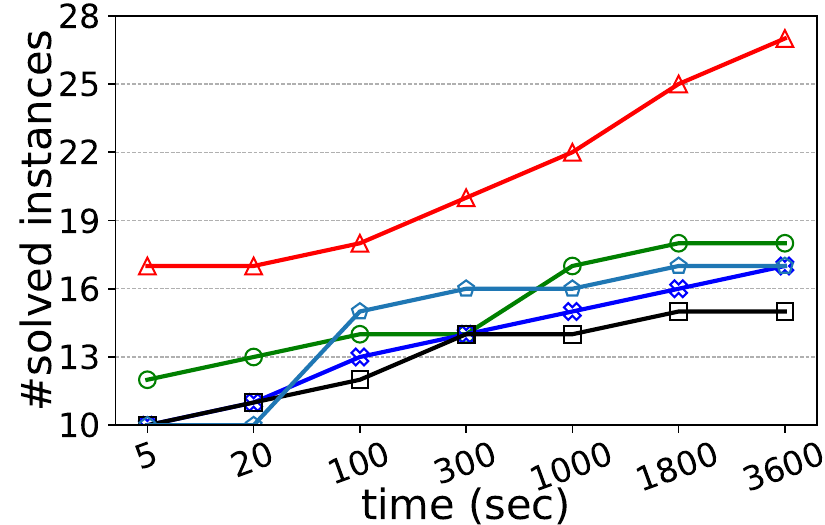}
    }
    \subfigure[$k$=10]{
        \includegraphics[width=0.225\textwidth]{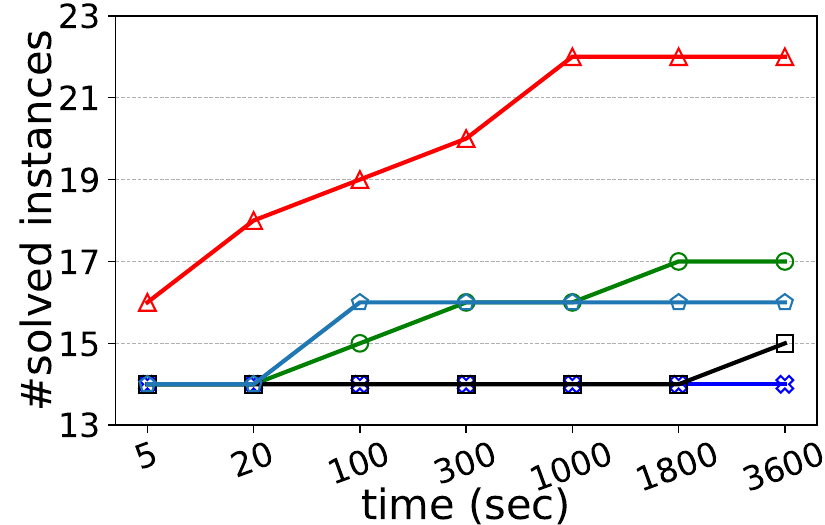}
    }
    \subfigure[$k$=15]{
        \includegraphics[width=0.225\textwidth]{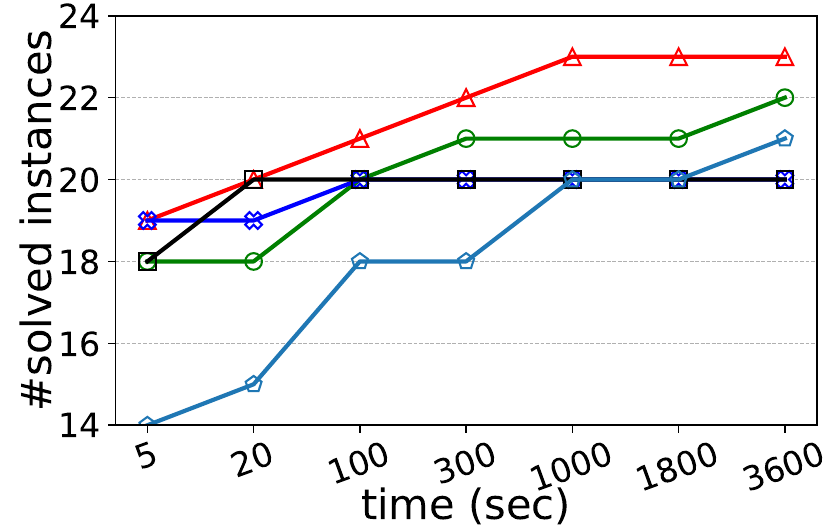}
    }
    \subfigure[$k$=20]{
        \includegraphics[width=0.225\textwidth]{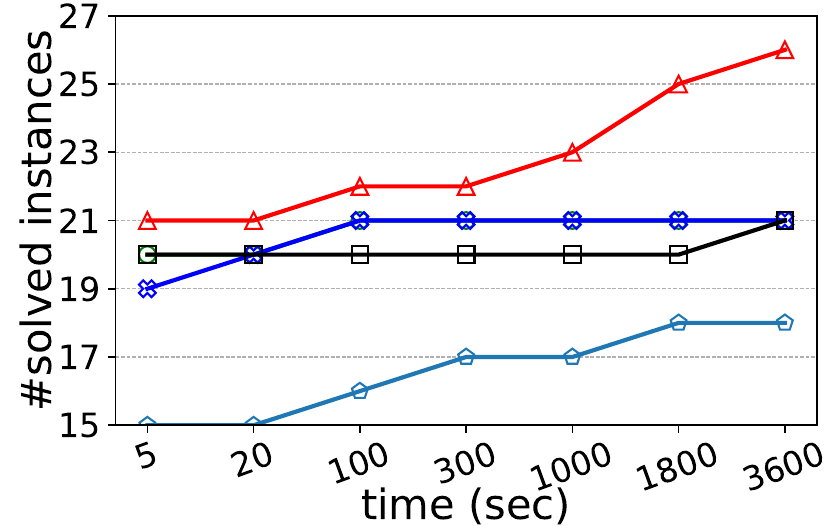}
    }
    \caption{Number of solved instances on DIMACS-2}
    \label{img-number-instances-artificial}
\end{figure}

\smallskip
\noindent
\textbf{Running times on representative graphs.}
We report the running times of all algorithms on 30 representative graphs with $k=5$ in Table~\ref{table:comparing-algorithms-k=5}.
We observe that \texttt{kPEX} outperforms all baselines by achieving significant speedups on the majority graphs. For example, \texttt{kPEX} runs at least 5 times faster than \texttt{KPLEX} on 25 out of 30 graphs and at least 5 times faster than \texttt{kPlexT} on 21 out of 30 graphs. Moreover, there are 7 out of 30 graphs where \texttt{kPEX} runs at least 100 times faster than all baselines. 
Note that \texttt{kPEX} may exhibit slower performance compared to baselines on rare occasions. For instance, the baseline \texttt{kPlexT} runs faster than our \texttt{kPEX} on G23 with $k=5$. The possible reasons include: 1) All algorithms rely on some heuristic procedures, e.g., the heuristic method for finding a large initial $k$-plex. The performance of these heuristic methods varies across different graphs and settings; 2) Compared with baselines, \texttt{kPEX} incorporates newly proposed reduction techniques, which may introduce additional time costs.

\begin{table}[t]
    \centering
    \small
    \caption{Running time in seconds of \texttt{kPEX} and state-of-the-arts on 30 graphs with $k=5$}
    \begin{tabular}{r|r|rrrr}
    \hline
        \textbf{ID} & \textbf{\texttt{kPEX} (ours)} & \textbf{\texttt{KPLEX}} & \textbf{\texttt{kPlexT}} & \textbf{\texttt{kPlexS}} & \textbf{\texttt{DiseMKP}} \\ \hline
        G1 & \textbf{3.88} & 1154.76 & 120.63 & 187.88 & 55.56 \\ 
        G2 & \textbf{1360.94} & OOT & OOT & OOT & OOT \\ 
        G3 & \textbf{1527.41} & OOT & OOT & OOT & OOT \\ 
        G4 & \textbf{306.14} & OOT & 3160.85 & OOT & OOT \\ 
        G5 & \textbf{0.10} & 0.29 & 34.09 & 1356.45 & 0.16 \\ 
        G6 & \textbf{24.63} & OOT & OOT & OOT & OOT \\ 
        G7 & \textbf{433.86} & OOT & OOT & OOT & OOT \\ 
        G8 & \textbf{2.22} & 567.84 & 20.91 & OOT & 27.92 \\ 
        G9 & \textbf{2.82} & 411.26 & OOT & OOT & OOT \\ 
        G10 & \textbf{191.31} & OOT & 1339.31 & OOT & 1133.52 \\ \hline
        G11 & \textbf{3.97} & 1766.68 & 2318.35 & OOT & OOT \\ 
        G12 & \textbf{0.39} & 1.30 & 0.53 & 1.36 & 440.90 \\ 
        G13 & \textbf{55.29} & OOT & OOT & OOT & OOT \\ 
        G14 & \textbf{927.11} & OOT & OOT & OOT & OOT \\ 
        G15 & \textbf{21.17} & OOT & OOT & OOT & OOT \\ 
        G16 & \textbf{1.68} & 58.82 & 27.28 & 1655.40 & 900.80 \\ 
        G17 & \textbf{1.68} & 3022.10 & 123.86 & OOT & OOT \\ 
        G18 & \textbf{0.72} & 2804.46 & 1818.90 & 1099.39 & OOT \\ 
        G19 & \textbf{0.64} & 3.59 & 2.15 & 2.06 & 1695.05 \\ 
        G20 & \textbf{4.39} & 795.20 & 204.00 & 72.03 & 1347.27 \\ 
        G21 & \textbf{2.82} & 961.06 & 1515.40 & OOT & OOT \\ 
        G22 & \textbf{2.42} & 13.58 & 3.72 & 11.77 & 18.25 \\ 
        G23 & 13.16 & 136.61 & \textbf{4.80} & 11.39 & OOT \\ 
        G24 & \textbf{5.61} & 2979.37 & 3055.73 & OOT & OOT \\ 
        G25 & \textbf{3.80} & 92.05 & 203.26 & OOT & OOT \\ 
        G26 & \textbf{4.84} & 700.66 & 7.25 & 39.25 & 8.41 \\ 
        G27 & \textbf{2.60} & 14.52 & 3.50 & 15.39 & 51.94 \\ 
        G28 & \textbf{139.52} & OOT & OOT & 1703.49 & OOT \\ 
        G29 & \textbf{6.08} & 312.93 & OOT & OOT & OOT \\ 
        G30 & \textbf{593.26} & OOT & OOT & OOT & OOT \\   \hline
    \end{tabular}
    \label{table:comparing-algorithms-k=5}
\end{table}

\subsection{Effectiveness of Proposed Techniques}
We compare the running time of \texttt{kPEX} with its variants:

\begin{itemize}
    \item  \textbf{\texttt{kPEX-SeqRB}}: \texttt{kPEX} replaces \texttt{AltRB} with \texttt{SeqRB}(Section~\ref{sec:altrb}).
    \item \textbf{\texttt{kPEX-CTCP}}: \texttt{kPEX} replaces \texttt{CF-CTCP} with \texttt{CTCP} (\cite{CXS22}).
    \item \textbf{\texttt{kPEX-EGo}}: \texttt{kPEX} replaces \texttt{KPHeuris} with the existing heuristic method \texttt{EGo-Degen} in~\cite{chang2024maximum}.
    \item \textbf{\texttt{kPEX-Degen}}: \texttt{kPEX} replaces \texttt{KPHeuris} with the existing heuristic method \texttt{kPlex-Degen} in~\cite{CXS22}.
\end{itemize}
In other words, \texttt{kPEX-SeqRB} is the version without \texttt{AltRB}; \texttt{kPEX-CTCP} is the version without \texttt{CF-CTCP}; \texttt{kPEX-EGo} and \texttt{kPEX-Degen} are the versions without \texttt{KPHeuris}.

\begin{table}[!ht]
    \centering
    \small
    \caption{Running time in seconds of \texttt{kPEX} and its variants on 30 graphs with $k=5$}
    \label{table:ablation-k=5}
    \begin{tabular}{c|c|cccc}
    \hline
        \textbf{ID} & \textbf{\texttt{kPEX}} & \textbf{\texttt{kPEX-SeqRB}} & \textbf{\texttt{kPEX-CTCP}} & \textbf{\texttt{kPEX-EGo}} & \textbf{\texttt{kPEX-Degen}} \\ 
    \hline
        G1 & \textbf{3.88} & 18.66 & 3.89 & 3.91 & 3.92 \\ 
        G2 & \textbf{1360.94} & OOT & 1364.70 & 1371.95 & 1365.94 \\ 
        G3 & 1527.41 & OOT & \textbf{1525.00} & 1539.98 & 1538.84 \\ 
        G4 & 306.14 & 3220.90 & \textbf{305.95} & 307.76 & 307.69 \\ 
        G5 & 0.10 & 0.11 & 0.10 & 0.10 & \textbf{0.09} \\ 
        G6 & \textbf{24.63} & 557.07 & 24.63 & 33.37 & 57.68 \\ 
        G7 & \textbf{433.86} & OOT & 434.65 & 528.89 & 674.42 \\ 
        G8 & 2.22 & 18.54 & 2.22 & \textbf{2.19} & 2.20 \\ 
        G9 & 2.82 & 20.47 & \textbf{2.81} & 4.21 & 4.18 \\ 
        G10 & \textbf{191.31} & 3472.65 & 191.72 & 202.50 & 202.49 \\ \hline
        G11 & 3.97 & 23.51 & 4.63 & 4.19 & \textbf{3.87} \\ 
        G12 & 0.39 & \textbf{0.39} & 2.24 & 0.88 & 0.71 \\ 
        G13 & \textbf{55.29} & 555.36 & 59.38 & 97.39 & 103.23 \\ 
        G14 & \textbf{927.11} & OOT & 936.52 & 1113.52 & 1197.87 \\ 
        G15 & \textbf{21.17} & 128.27 & 35.76 & 29.85 & 24.84 \\ 
        G16 & \textbf{1.68} & 2.19 & 5.84 & 2.35 & 2.18 \\ 
        G17 & \textbf{1.68} & 8.49 & 8.27 & 3.88 & 1.90 \\ 
        G18 & 0.72 & 6.76 & 1.34 & 0.90 & \textbf{0.66} \\ 
        G19 & \textbf{0.64} & 0.66 & 17.41 & 8.36 & 93.70 \\ 
        G20 & \textbf{4.39} & 4.46 & 222.02 & 150.66 & 260.41 \\ 
        G21 & \textbf{2.82} & 8.03 & 3.55 & 3.68 & 3.32 \\ 
        G22 & 2.42 & \textbf{2.26} & 16.82 & 8.28 & 2.40 \\ 
        G23 & 13.16 & \textbf{12.62} & 1618.38 & 461.19 & 117.33 \\ 
        G24 & \textbf{5.61} & 15.80 & 6.97 & 9.83 & 9.03 \\ 
        G25 & 3.80 & \textbf{3.56} & 66.56 & 22.87 & 3.69 \\ 
        G26 & 4.84 & 4.77 & 4.64 & 5.61 & \textbf{4.54} \\ 
        G27 & 2.60 & \textbf{2.48} & 24.83 & 11.07 & 2.63 \\ 
        G28 & \textbf{139.52} & 139.96 & OOT & OOT & OOT \\ 
        G29 & 6.08 & 17.57 & \textbf{5.99} & 11.18 & 10.85 \\ 
        G30 & \textbf{593.26} & OOT & 1628.00 & 2094.29 & 2020.66 \\ 
    \hline
    \end{tabular}
\end{table}

\smallskip
\noindent
\textbf{Effectiveness of \texttt{AltRB}.} 
We compare \texttt{kPEX} with \texttt{kPEX-SeqRB} and report the running times in Table~\ref{table:ablation-k=5}. 
We observe that \texttt{kPEX} performs better than \texttt{kPEX-SeqRB} by achieving at least a 5$\times$ speedup on 12 out of 30 graphs and running up to 20$\times$ faster on G6. 
This indicates the effectiveness of \texttt{AltRB} in narrowing down the search space. Besides, \texttt{AltRB} contributes more speedups on synthetic graphs G1-G10 since the running time is dominated by the branch-reduction-and-bound stage on these graphs.

\begin{table}[!ht]
    \caption{Pre-processing time in seconds on 20 graphs with $k$=5
    ($lb$ denotes the size of the computed heuristic $k$-plex)
    }
    \label{table:compare-prepro-k=5}
    \centering
    \small
    \begin{tabular}{r|cc|cc|cc|cc}
    \hline
    \multirow{2}{*}{\textbf{ID}} & \multicolumn{2}{c|}{\textbf{\texttt{kPEX}}} & \multicolumn{2}{c|}{\textbf{\texttt{kPlexT}}} & \multicolumn{2}{c|}{\textbf{\texttt{kPlexS}}} & \multicolumn{2}{c}{\textbf{\texttt{DiseMKP}}} \\
     & \textbf{time} & \textbf{$lb$} & \textbf{time} & \textbf{$lb$} & \textbf{time} & \textbf{$lb$} & \textbf{time} & \textbf{$lb$} \\ \hline
        G11 & 0.43 & \textbf{51} & 0.41 & 50 & \textbf{0.30} & 50 & 0.53 & 49 \\ 
        G12 & \textbf{0.39} & \textbf{73} & 0.58 & 69 & 1.30 & 34 & 1.71 & 34 \\ 
        G13 & 4.35 & \textbf{39} & \textbf{1.69} & 37 & 2.44 & 36 & 3.85 & 35 \\ 
        G14 & 9.24 & \textbf{70} & 5.01 & 67 & \textbf{3.08} & 64 & 6.05 & 66 \\ 
        G15 & 3.06 & \textbf{49} & \textbf{2.06} & 48 & 4.08 & 48 & 10.59 & 47 \\ 
        G16 & 1.29 & \textbf{27} & \textbf{0.82} & 26 & 1.84 & 14 & 3.85 & 14 \\ 
        G17 & 0.68 & \textbf{39} & \textbf{0.62} & 37 & 1.69 & 37 & 5.53 & 37 \\ 
        G18 & \textbf{0.14} & \textbf{87} & 0.34 & \textbf{87} & 0.27 & \textbf{87} & 0.51 & \textbf{87} \\ 
        G19 & \textbf{0.63} & \textbf{44} & 2.08 & 42 & 0.93 & 37 & 9.60 & 37 \\ 
        G20 & \textbf{3.48} & \textbf{21} & 33.60 & 19 & 32.56 & 10 & 118.24 & 10 \\ 
        G21 & 0.64 & \textbf{44} & \textbf{0.37} & 43 & 0.43 & 43 & 0.66 & 42 \\ 
        G22 & \textbf{2.42} & \textbf{34} & 4.51 & 32 & 15.46 & 26 & 15.55 & 27 \\ 
        G23 & 13.16 & \textbf{11} & \textbf{5.91} & \textbf{11} & 13.05 & 10 & 1708.75 & 10 \\ 
        G24 & 1.28 & \textbf{44} & \textbf{0.67} & 41 & 0.94 & 41 & 1.40 & 41 \\ 
        G25 & \textbf{3.14} & \textbf{77} & 7.90 & 76 & 34.82 & 76 & 45.58 & \textbf{77} \\ 
        G26 & \textbf{2.90} & \textbf{881} & 3.57 & \textbf{881} & 3.94 & \textbf{881} & 3.90 & 880 \\ 
        G27 & \textbf{2.60} & \textbf{37} & 4.79 & 35 & 18.02 & 32 & 21.06 & 33 \\ 
        G28 & \textbf{102.57} & \textbf{17} & 719.41 & 15 & 1210.59 & 12 & OOT & - \\ 
        G29 & \textbf{2.74} & \textbf{296} & 3.17 & 292 & 3.31 & 292 & 8.52 & 292 \\ 
        G30 & \textbf{102.73} & \textbf{65} & 134.74 & 62 & 252.29 & 17 & 1217.99 & 16 \\   \hline
    \end{tabular}
\end{table}

\smallskip
\noindent
\textbf{Effectiveness of \texttt{CF-CTCP}.}
We compare \texttt{kPEX} with \texttt{kPEX-CTCP}, and the running times are reported in Table~\ref{table:ablation-k=5}. First, \texttt{kPEX} and \texttt{kPEX-CTCP} have similar performance on G1-G10 because the pre-processing techniques take little time (e.g., less than 1 second) on these synthetic graphs. Second, \texttt{kPEX} runs at least 5 times faster than \texttt{kPEX-CTCP} on 8 out of 20 real-world graphs. Moreover, \texttt{CF-CTCP} provides at least 50$\times$ speedup on G20 and G23. These results show the effectiveness of \texttt{CF-CTCP} on large sparse graphs.

\smallskip
\noindent
\textbf{Effectiveness of \texttt{KPHeuris}.} 
We compare \texttt{kPEX} with its variants \texttt{kPEX-EGo} and \texttt{kPEX-Degen} (note that \texttt{CF-CTCP} is not replaced). The running times are shown in Table~\ref{table:ablation-k=5}. 
We have the following observations. First, the running time of \texttt{kPEX} is less than that of both variants on the majority of graphs (i.e., on 23 out of 30 graphs). Then, \texttt{kPEX} runs at least 
5 times faster than \texttt{kPEX-EGo} on 5 out of 30 graphs and faster than \texttt{kPEX-Degen} on 4 out of 30 graphs. In addition, \texttt{kPEX} runs at least 25 times faster than both \texttt{kPEX-EGo} and \texttt{kPEX-Degen} on G20 and G28.
This shows that making more effort to finding a larger initial $k$-plex benefits \texttt{kPEX} by narrowing down the search space.
Second, although \texttt{kPEX} may be slightly slower than the two variants, the extra time consumption is small and can be ignored compared to the total running time. For example, \texttt{kPEX} is 0.1 seconds slower than \texttt{kPEX-Degen} on G11 due to the extra computation, while the total running time of \texttt{kPEX} is 3.97 seconds, which means that the extra time consumption is negligible.
Third, the performance of \texttt{kPEX-EGo} and \texttt{kPEX-Degen} is better than \texttt{kPEX-SeqRB} on G1-G10. This means that the variant of \texttt{kPEX} without \texttt{AltRB} is slower than the variant without \texttt{KPHeuris}.
This indicates that \texttt{AltRB} provides a greater performance boost than heuristic techniques on those graphs where branch-reduction-and-bound stage dominates the running time.

\smallskip
\noindent
\textbf{Effectiveness of \texttt{KPHeuris} and \texttt{CF-CTCP}.}
We also compare the total pre-processing time and the size of the $k$-plex (i.e., $lb$) obtained by different heuristic methods in \texttt{kPEX}, \texttt{kPlexT}, \texttt{kPlexS}, and \texttt{DiseMKP} (note that \texttt{KPLEX} uses the same pre-processing method as \texttt{kPlexS}).
The results are reported in Table~\ref{table:compare-prepro-k=5}.
Note that we exclude the results on synthetic graphs G1-G10 since they have only hundreds of vertices and can be handled within 1 second by all methods.
We have the following observations.
First, \texttt{kPEX} consistently obtains the largest $lb$ (or matches the largest obtained by others) while the pre-processing time remains comparable to other algorithms.
Second, \texttt{KPHeuris} outperforms the other pre-processing algorithms by obtaining a lager $k$-plex while costing much less time on G20 and G28. This also verifies the effectiveness of \texttt{CF-CTCP} and \texttt{KPHeuris}.

\section{Related work}\label{sec:related-work}
\noindent
\textbf{Maximum $k$-plex search.}
The \emph{maximum $k$-plex search} problem has garnered significant attention in social network analysis~\cite{MH12,MNS12} since the concept of $k$-plex was first proposed in~\cite{seidman1978graph}. 
Balasundaram et al.~\cite{BBH11} showed the NP-hardness of the problem with any fixed $k$.
Consequently, the major algorithmic design paradigm for exact solution is based on the {\em branch-reduction-and-bound} (BRB) framework~\cite{XLD+17,wang2023fast,GCY+18,zhou2021improving,JZX+21,CXS22,jiang2023refined,chang2024maximum}.
In particular, Xiao et al.~\cite{XLD+17} proposed a branching strategy, which improves theoretical time complexity from the trivial bound of $O^*(2^n)$ to $O^*(c^n)$ where $c<2$ and $O^*$ ignores polynomial factors. Later, Wang et al.~\cite{wang2023fast} designed \textbf{KPLEX} which is parameterized by the degeneracy gap (bounded empirically by $O(\log n)$).
Very recently, Chang and Yao~\cite{chang2024maximum} proposed \textbf{kPlexT}, which improves the worst-case time complexity with newly proposed branching and reduction techniques.
Additionally, several reduction and bounding techniques have been designed in the BRB framework to boost the practical performance.
Gao et al.~\cite{GCY+18} developed reduction methods and a dynamic vertex selection strategy.
Later, Zhou et al.~\cite{zhou2021improving} proposed a stronger reduction method and designed a coloring-based bounding method.
Jiang et al.~\cite{JZX+21} designed a partition-based bounding method, and later in~\cite{jiang2023refined}, their algorithm \textbf{DiseMKP} is equipped with a better upper bound. 
Chang et al.~\cite{CXS22} designed an efficient algorithm \textbf{kPlexS} with a novel reduction method \texttt{CTCP} and a heuristic method.
We note that the algorithms designed by Xiao et al.~\cite{XLD+17} and Chang and Yao~\cite{chang2024maximum} also work for the case when there is no requirement for the found $k$-plex to be of size at least $2k-1$.
We remark that existing works mainly focus on the BRB framework that conducts the reduction and the bounding sequentially, and our solution \texttt{kPEX} firstly adopts a new BRB framework that alternatively and iteratively conducts the reduction and the bounding.

\smallskip
\noindent
\textbf{Maximal $k$-plex enumeration.}
Another related problem is \emph{maximal $k$-plex enumeration}, which aims to list all all maximal $k$-plexes in the input graph; Here, a $k$-plex is \emph{maximal} if it cannot be contained in other $k$-plexes.
Many efficient algorithms are proposed for enumerating maximal $k$-plexes, including Bron-Kerbosch-based algorithms~\cite{WP07,WANG2017,CDD+18,CFM+17,DLQ+22,wang2022listing} and reverse-search-based algorithms~\cite{BCK15}. 
We remark that existing algorithms for enumerating maximal $k$-plexes  can be utilized to solve the studied problem by listing all maximal $k$-plexes and then returning the largest one among them (note that the maximum $k$-plex is the maximal $k$-plex with largest number of vertices).
However, the resulting solutions are not efficient due to the limited pruning and bounding techniques, as verified in ~\cite{chang2024maximum}.

\smallskip
\noindent
\textbf{Other cohesive subgraph models.}
$k$-plexes reduce to cliques when $k = 1$. There have been lines of work focusing on the maximum clique search and maximal clique enumeration problems~\cite{CP90,PardalosX94a,Tomita17,Cha19,conte2016finding,ELS13,NAUDE201628,TOMITA200628}.
Further, the concept of $k$-plex is also explored in other kinds of graphs, e.g., bipartite graphs~\cite{yu2022efficient,yu2022maximum,luo2022maximum,chen2021efficient,dai2023hereditary}, directed graphs~\cite{GYL+24}, temporal graphs~\cite{BHM+19}, uncertain graphs~\cite{dai2022uncertain}, and so on.
Besides $k$-plex, various cohesive subgraph models have been studied, including $k$-core~\cite{BVZ03m,CKC+11efficient}, $k$-truss~\cite{cohen2008trusses,huang2014querying,WC12}, $\gamma$-quasi-clique~\cite{pei2005mining,zeng2006coherent,khalil2022parallel,yu2023fast}, $k$-defective clique~\cite{chang2023efficient,dai2023maximal,gao2022exact,chen2021computing}, densest subgraph~\cite{ma2021directed,XMF+24}, and so on.
For an overview on cohesive subgraph search, we refer to excellent books and surveys~\cite{LRJ+10,CQ18,HLX19,FHQ+20survey,FW+21cohesive}.

\section{Conclusion}\label{sec:conclusion}
In this paper, we studied the maximum $k$-plex search problem. We proposed a new branch-reduction-and-bound method, called \texttt{kPEX}, which includes a new alternated reduction-and-bound process \texttt{AltRB}. In addition, we also designed efficient pre-processing techniques for boosting the performance, which includes \texttt{KPHeuris} for computing a large heuristic $k$-plex and \texttt{CF-CTCP} for efficiently removing unpromising vertices/edges. Extensive experiments on 664 graphs verified \texttt{kPEX}'s superiority over state-of-the-art algorithms. In the future, we will explore the possibility of adapting \texttt{kPEX} to mining other cohesive subgraphs.

\bibliographystyle{ACM-Reference-Format}
\bibliography{references}

\appendix

\section{Additional Descriptions of \texttt{CF-CTCP}}
\subsection{Time Complexity of \texttt{CF-CTCP}}

Before analyzing the time complexity of \texttt{CF-CTCP} (Algorithm~\ref{algorithm:CF-CTCP}), we first prove the following lemma.
\begin{lemma}\label{lemma:sum-min-deg-for-edges}
    Given a graph $G=(V, E)$, we have 
    \begin{equation*}
        \sum_{(u,v) \in E}\min(d_G(u), d_G(v)) \leq 2m \times \delta(G).
    \end{equation*}
\end{lemma}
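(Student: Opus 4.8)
The plan is to exploit the degeneracy order of $G$ to orient every edge with bounded out-degree, and then charge each term $\min(d_G(u), d_G(v))$ to the tail of the corresponding oriented edge.

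First I would fix a degeneracy order $v_1, v_2, \ldots, v_n$ of $G$. By the definition recalled in Section~\ref{sec:preliminary}, $v_i$ has the minimum degree in the induced subgraph $G_i := G[\{v_i, v_{i+1}, \ldots, v_n\}]$; since $G_i$ is an induced subgraph of $G$, it must contain a vertex of degree at most $\delta(G)$, and therefore $d_{G_i}(v_i) \le \delta(G)$. In other words, each $v_i$ has at most $\delta(G)$ neighbours among $\{v_{i+1}, \ldots, v_n\}$.

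Next I would orient each edge $(v_i, v_j) \in E$ with $i < j$ from $v_i$ (its tail) to $v_j$; by the previous step every vertex then has out-degree at most $\delta(G)$. For each edge with tail $v_i$ we trivially have $\min(d_G(v_i), d_G(v_j)) \le d_G(v_i)$, so summing over all edges grouped by their tail gives
\begin{equation*}
    \sum_{(u,v) \in E} \min(d_G(u), d_G(v)) \;\le\; \sum_{i=1}^{n} \mathrm{outdeg}(v_i)\cdot d_G(v_i) \;\le\; \delta(G) \sum_{i=1}^{n} d_G(v_i) \;=\; 2m\,\delta(G),
\end{equation*}
where the last equality is the handshaking lemma. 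This is exactly the claimed inequality.

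I do not expect a genuine obstacle: this is a short counting/charging argument. The only point that needs a little care is the justification that the degeneracy orientation yields out-degree at most $\delta(G)$, which follows directly from the definitions of the degeneracy order and of $\delta(G)$. It is worth noting that charging each edge instead to its smaller-degree endpoint would fail, since a vertex could then be charged up to $d_G(v)$ times and the $\delta(G)$ factor would be lost; using the degeneracy order is the essential ingredient.
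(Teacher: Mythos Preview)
Your proposal is correct and is essentially identical to the paper's proof: the paper also fixes a degeneracy order, writes the sum as $\sum_{v_i}\sum_{v_j\in N_G^+(v_i)}\min(d_G(v_i),d_G(v_j))$, bounds each summand by $d_G(v_i)$, uses $|N_G^+(v_i)|\le\delta(G)$, and finishes with the handshaking lemma. Your orientation language is just a rephrasing of the paper's $N_G^+$ notation.
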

\begin{proof}
    Assume that vertices $v_1, v_2, ..., v_n$ in $G$ are sorted according to the degeneracy order, indicating that $|N_G^+(v_i)|=|N_G(v_i) \cap \{v_{i+1}, v_{i+2}, ..., v_n\}| \leq \delta(G)$.
    Thus we have
    \begin{align*}
        & \sum_{(u,v) \in E}\min(d_G(u), d_G(v)) \\
        &= \sum_{v_i \in V} \sum_{v_j \in N_G^+(v_i)}\min(d_G(v_i), d_G(v_j)) \\
        &\leq \sum_{v_i \in V} \sum_{v_j \in N_G^+(v_i)} d_G(v_i) 
        \leq \sum_{v_i \in V} d_G(v_i) \times \delta(G) = 2m \times \delta(G).
    \end{align*}
\end{proof}

We can derive from Lemma~\ref{lemma:sum-min-deg-for-edges} that 
$$O(\sum_{(u,v) \in E}\min(d_G(u), d_G(v)))=O(m\times \delta(G)).$$ 
Now we are ready to prove the total time complexity of \texttt{CF-CTCP} (Lemma~\ref{lemma:time-complexity-CF-CTCP}).
\begin{proof}
    Note that we invoke \texttt{CF-CTCP} only when $Q_v \neq \emptyset$ or $lb\_changed=true$ as in \texttt{CTCP}~\cite{CXS22}.
    First, for the first invocation, Line 6 of Algorithm~\ref{algorithm:CF-CTCP} computes the common neighbors $\Delta(u,v)$ for each edge $(u,v)$, and the time complexity is $$O(\sum_{(u,v) \in E}\min(d_G(u), d_G(v))) = O(m \times \delta(G)),$$ according to Lemma~\ref{lemma:sum-min-deg-for-edges}.
    Second, the total time consumption of core pruning is $O(m)$~\cite{BVZ03m} and the total time cost of Procedure \texttt{RemoveEdge} is also $O(m)$ since we can implement Line 21 for at most $m$ times.
    Third, for all invocations, there are at most $\delta(G)$ times when $lb\_changed = true$ since $k \leq lb = |S^*| \leq \delta(G)+k$ ($S^*$ denoting the largest $k$-plex seen so far), which indicates that we will perform Lines 4-8 at most $\delta(G)$ times. Thus the total time complexity of Lines 1-8 is $O(m \times \delta(G))$. 
    We next consider Lines 9-20. We will pop at most $m$ edges, and for each edge, we need to find all the triangles that it participates in, which can be done in $O(\sum_{(u,v) \in E}\min(d_G(u), d_G(v))) = O(m \times \delta(G))$. Therefore, the total time complexity of all invocations to \texttt{CF-CTCP} is $O(m \times \delta(G))$, which completes our proof.
\end{proof}

\subsection{An Implementation of \texttt{CF-CTCP} with $O(m)$ Memory} 
A direct implementation of \texttt{CF-CTCP} requires storing the common neighbors $\Delta(\cdot, \cdot)$ for all edges, which needs $O(m\times \delta(G))$ memory. In the following, we propose a novel implementation that requires only $O(m)$ memory without changing the time complexity of \texttt{CF-CTCP}. 
In particular, we need three auxiliary arrays $A_1$, $A_2$, and $A_3$, each of length $m$, to store additional information for each edge: 1) array $A_1$ records the number of triangles, 2) array $A_2$ records the timestamp (e.g., system time) when the triangle count is computed in Line 6, and 3) array $A_3$ records the timestamp (e.g., system time) when an edge is removed in Lines 1, 21 and 22.
Based on these three arrays, we correspondingly modify Algorithm~\ref{algorithm:CF-CTCP} as follows. 
First, we only record $|\Delta(u,v)|$ using $A_1$ instead of storing the whole vertex set $\Delta(u,v)$ in Line 6. The correspond triangle count in $A_1$ is decreased by 1 when \texttt{CF-CTCP} modifies $\Delta(\cdot, \cdot)$ in Lines 13 and 18. 
Second, when we traverse all triangles that edge $(u,v)$ belongs to in Line 12, we enumerate such a vertex $w$ that satisfies: 1) both $(u,w)$ and $(v,w)$ are in $E \cup Q_e$, i.e., $(u,v,w)$ forms a triangle; 2) the timestamp of computing the triangle count for edge $(u,w)$ is before the timestamp of removing edge $(u,v)$ using arrays $A_2$ and $A_3$, i.e., when we compute $|\Delta(u,w)|$ in Line 6, edge $(u,v)$ has not yet been removed. 
The modification to Line 17 follows the same fashion as Line 12.
Finally, it is easy to verify the correctness of the above modification of \texttt{CF-CTCP} with $O(m)$ memory usage.

\section{Additional Experimental Results}
We provide additional experimental results for $k=2, 3, 10, 15, 20$.

\subsection{Comparing with State-of-the-art Algorithms}
\smallskip
\noindent
\textbf{Running times on representative graphs.}
We report the running times of all algorithms on 30 representative graphs with $k=$2, 3, 10, 15, and 20 in Tables~\ref{table:comparing-algorithms-k=2}, ~\ref{table:comparing-algorithms-k=3},~\ref{table:comparing-algorithms-k=10},~\ref{table:comparing-algorithms-k=15}, and~\ref{table:comparing-algorithms-k=20}, respectively. 
We observe that \texttt{kPEX} outperforms all baselines by achieving significant speedups on the majority graphs. For example, \texttt{kPEX} runs at least 5 times faster than \texttt{KPLEX} on 23 out of 30 graphs and at least 5 times faster than \texttt{kPlexT} on 20 out of 30 graphs when $k=3$.

\begin{table}[t]
    \centering
    \small
    \caption{Running time in seconds of \texttt{kPEX} and state-of-the-arts on 30 graphs with $k=2$}
    \scalebox{0.86}{
    \begin{tabular}{r|r|rrrr}
    \hline
        \textbf{ID} & \textbf{\texttt{kPEX} (ours)} & \textbf{\texttt{KPLEX}} & \textbf{\texttt{kPlexT}} & \textbf{\texttt{kPlexS}} & \textbf{\texttt{DiseMKP}} \\ \hline
        G1 & \textbf{0.95} & 1.77 & 1.56 & 2.73 & 1.23 \\ 
        G2 & 1982.78 & \textbf{1847.75} & OOT & OOT & OOT \\ 
        G3 & 51.67 & 105.06 & 128.29 & 178.26 & \textbf{21.63} \\ 
        G4 & \textbf{1.35} & 9.46 & 8.02 & 18.98 & 2.08 \\ 
        G5 & 77.13 & \textbf{27.08} & OOT & OOT & OOT \\ 
        G6 & OOT & OOT & OOT & OOT & OOT \\ 
        G7 & OOT & OOT & OOT & OOT & OOT \\ 
        G8 & \textbf{0.33} & 2.60 & 2.47 & 18.91 & 0.39 \\ 
        G9 & \textbf{39.77} & 86.99 & 87.66 & 637.09 & 497.46 \\ 
        G10 & 4.07 & 33.39 & 30.03 & 479.11 & \textbf{3.30} \\ \hline
        G11 & \textbf{20.89} & 127.32 & 121.80 & 1009.49 & OOT \\ 
        G12 & \textbf{0.54} & 1.45 & 0.70 & 1.56 & 13.80 \\ 
        G13 & \textbf{289.18} & 2505.66 & 2797.56 & OOT & 2401.74 \\ 
        G14 & \textbf{3173.74} & OOT & OOT & OOT & OOT \\ 
        G15 & \textbf{142.05} & 1420.48 & 1627.85 & OOT & OOT \\ 
        G16 & \textbf{1.96} & 7.69 & 8.05 & 31.33 & 7.98 \\ 
        G17 & \textbf{6.30} & 39.70 & 51.66 & 220.65 & 101.81 \\ 
        G18 & \textbf{2.28} & 8.16 & 71.92 & 61.06 & OOT \\ 
        G19 & \textbf{8.71} & 11.07 & 28.43 & 9.39 & OOT \\ 
        G20 & \textbf{5.68} & 23.69 & 77.29 & 19.64 & 328.39 \\ 
        G21 & \textbf{13.00} & 118.76 & 123.49 & 942.62 & 337.35 \\ 
        G22 & \textbf{3.36} & 16.27 & 3.79 & 15.74 & 21.88 \\ 
        G23 & 8.66 & 10.12 & \textbf{3.87} & 10.54 & 1035.93 \\ 
        G24 & \textbf{24.66} & 258.56 & 245.02 & 2345.22 & 550.65 \\ 
        G25 & \textbf{13.90} & 79.16 & 77.10 & 277.82 & OOT \\ 
        G26 & 11.59 & 41.78 & \textbf{5.98} & 107.83 & 232.15 \\ 
        G27 & \textbf{3.18} & 19.32 & 4.39 & 17.14 & 26.33 \\ 
        G28 & \textbf{145.64} & 849.77 & OOT & 1347.36 & OOT \\ 
        G29 & \textbf{6.02} & 7.71 & 2563.26 & 535.31 & OOT \\ 
        G30 & \textbf{147.23} & 514.32 & 721.71 & 1492.62 & OOT \\ \hline
    \end{tabular}
    }
    \label{table:comparing-algorithms-k=2}
\end{table}
\begin{table}[t]
    \centering
    \small
    \caption{Running time in seconds of \texttt{kPEX} and state-of-the-arts on 30 graphs with $k=3$}
    \scalebox{0.86}{
    \begin{tabular}{r|r|rrrr}
    \hline
        \textbf{ID} & \textbf{\texttt{kPEX} (ours)} & \textbf{\texttt{KPLEX}} & \textbf{\texttt{kPlexT}} & \textbf{\texttt{kPlexS}} & \textbf{\texttt{DiseMKP}} \\ \hline
        G1 & \textbf{5.38} & 32.14 & 23.32 & 22.39 & 7.30 \\ 
        G2 & OOT & OOT & OOT & OOT & OOT \\ 
        G3 & 60.32 & 1112.70 & 2071.59 & 1461.59 & \textbf{22.36} \\ 
        G4 & \textbf{9.17} & 269.58 & 69.18 & 705.19 & 10.98 \\ 
        G5 & 0.10 & 0.46 & 1.22 & 15.65 & \textbf{0.08} \\ 
        G6 & \textbf{2.62} & 28.08 & 1552.41 & OOT & 49.09 \\ 
        G7 & OOT & OOT & OOT & OOT & OOT \\ 
        G8 & \textbf{0.16} & 73.99 & 2.78 & 322.77 & 1.02 \\ 
        G9 & \textbf{21.19} & 528.72 & 953.50 & 3592.43 & 2197.53 \\ 
        G10 & 15.04 & 2622.38 & 164.85 & OOT & \textbf{10.05} \\ \hline
        G11 & \textbf{6.55} & OOT & 374.45 & OOT & OOT \\ 
        G12 & \textbf{0.41} & 1.46 & 0.61 & 1.54 & 15.69 \\ 
        G13 & \textbf{163.63} & OOT & OOT & OOT & OOT \\ 
        G14 & OOT & OOT & OOT & OOT & OOT \\ 
        G15 & \textbf{46.87} & OOT & 2702.41 & OOT & OOT \\ 
        G16 & \textbf{1.46} & 167.99 & 7.66 & 184.59 & 20.50 \\ 
        G17 & \textbf{2.14} & 987.10 & 35.01 & 926.33 & 245.60 \\ 
        G18 & \textbf{1.73} & 577.38 & 292.77 & 314.95 & OOT \\ 
        G19 & 0.98 & 3.36 & \textbf{0.85} & 1.34 & 2921.83 \\ 
        G20 & \textbf{4.28} & 102.50 & 26.06 & 25.77 & 317.76 \\ 
        G21 & \textbf{8.88} & OOT & 349.83 & OOT & 1712.85 \\ 
        G22 & \textbf{3.23} & 15.34 & 3.63 & 14.41 & 21.18 \\ 
        G23 & 13.17 & 150.23 & \textbf{4.55} & 10.65 & 1761.28 \\ 
        G24 & \textbf{15.21} & OOT & 643.10 & OOT & OOT \\ 
        G25 & \textbf{6.92} & 2877.27 & 181.87 & 1930.20 & OOT \\ 
        G26 & \textbf{7.07} & 368.98 & 10.68 & 251.34 & 50.96 \\ 
        G27 & \textbf{3.01} & 17.59 & 3.73 & 17.10 & 27.71 \\ 
        G28 & \textbf{117.66} & OOT & OOT & 1163.18 & OOT \\ 
        G29 & \textbf{115.24} & 825.59 & OOT & OOT & OOT \\ 
        G30 & \textbf{200.21} & OOT & OOT & OOT & OOT \\ \hline
    \end{tabular}
    }
    \label{table:comparing-algorithms-k=3}
\end{table}
\begin{table}[t]
    \centering
    \small
    \caption{Running time in seconds of \texttt{kPEX} and state-of-the-arts on 30 graphs with $k=10$}
    \scalebox{0.86}{
    \begin{tabular}{r|r|rrrr}
    \hline
        \textbf{ID} & \textbf{\texttt{kPEX} (ours)} & \textbf{\texttt{KPLEX}} & \textbf{\texttt{kPlexT}} & \textbf{\texttt{kPlexS}} & \textbf{\texttt{DiseMKP}} \\ \hline
        G1 & \textbf{462.04} & OOT & OOT & 3142.77 & OOT \\ 
        G2 & \textbf{38.55} & OOT & OOT & OOT & OOT \\ 
        G3 & OOT & OOT & OOT & OOT & OOT \\ 
        G4 & OOT & OOT & OOT & OOT & OOT \\ 
        G5 & \textbf{14.63} & OOT & OOT & OOT & OOT \\ 
        G6 & OOT & OOT & OOT & OOT & OOT \\ 
        G7 & OOT & OOT & OOT & OOT & OOT \\ 
        G8 & OOT & OOT & OOT & OOT & OOT \\ 
        G9 & \textbf{309.46} & OOT & OOT & OOT & OOT \\ 
        G10 & OOT & OOT & OOT & OOT & OOT \\ \hline
        G11 & \textbf{6.00} & OOT & OOT & OOT & OOT \\ 
        G12 & \textbf{0.50} & 1.12 & 0.58 & 1.40 & OOT \\ 
        G13 & \textbf{859.77} & OOT & OOT & OOT & OOT \\ 
        G14 & \textbf{3201.23} & OOT & OOT & OOT & OOT \\ 
        G15 & \textbf{23.07} & OOT & OOT & OOT & OOT \\ 
        G16 & \textbf{2.01} & 23.08 & OOT & 293.97 & OOT \\ 
        G17 & \textbf{2.03} & 314.96 & OOT & OOT & OOT \\ 
        G18 & \textbf{0.51} & 2679.72 & OOT & 1017.07 & OOT \\ 
        G19 & \textbf{0.29} & 3.04 & 5.21 & 2.65 & 1388.96 \\ 
        G20 & \textbf{5.17} & 409.33 & OOT & 31.76 & OOT \\ 
        G21 & \textbf{20.86} & OOT & OOT & OOT & OOT \\ 
        G22 & \textbf{2.07} & 11.16 & 2.87 & 10.38 & 18.09 \\ 
        G23 & 13.47 & 163.83 & \textbf{4.81} & 12.85 & OOT \\ 
        G24 & \textbf{34.49} & OOT & OOT & OOT & OOT \\ 
        G25 & \textbf{2.81} & 34.57 & OOT & 587.10 & OOT \\ 
        G26 & \textbf{1.10} & 3.61 & 3.73 & 3.90 & 4.04 \\ 
        G27 & \textbf{2.47} & 15.41 & 3.37 & 15.06 & OOT \\ 
        G28 & \textbf{263.01} & OOT & OOT & OOT & OOT \\ 
        G29 & \textbf{2.89} & 101.16 & OOT & OOT & 11.28 \\ 
        G30 & OOT & OOT & OOT & OOT & OOT \\ \hline
    \end{tabular}
    }
    \label{table:comparing-algorithms-k=10}
\end{table}
\begin{table}[t]
    \centering
    \small
    \caption{Running time in seconds of \texttt{kPEX} and state-of-the-arts on 30 graphs with $k=15$}
    \scalebox{0.86}{
    \begin{tabular}{r|r|rrrr}
    \hline
        \textbf{ID} & \textbf{\texttt{kPEX} (ours)} & \textbf{\texttt{KPLEX}} & \textbf{\texttt{kPlexT}} & \textbf{\texttt{kPlexS}} & \textbf{\texttt{DiseMKP}} \\ \hline
        G1 & \textbf{7.84} & 44.29 & 35.84 & 17.20 & 418.76 \\ 
        G2 & \textbf{0.06} & 34.53 & 3.58 & 8.27 & 22.95 \\ 
        G3 & OOT & OOT & OOT & OOT & OOT \\ 
        G4 & OOT & OOT & OOT & OOT & OOT \\ 
        G5 & OOT & OOT & OOT & OOT & OOT \\ 
        G6 & OOT & OOT & OOT & OOT & OOT \\ 
        G7 & OOT & OOT & OOT & OOT & OOT \\ 
        G8 & OOT & OOT & OOT & OOT & OOT \\ 
        G9 & \textbf{115.99} & OOT & OOT & OOT & OOT \\ 
        G10 & OOT & OOT & OOT & OOT & OOT \\ \hline
        G11 & \textbf{18.38} & OOT & OOT & OOT & OOT \\ 
        G12 & \textbf{0.47} & 1.08 & 0.57 & 1.07 & OOT \\ 
        G13 & \textbf{1338.78} & OOT & OOT & OOT & OOT \\ 
        G14 & OOT & OOT & OOT & OOT & OOT \\ 
        G15 & \textbf{141.05} & OOT & OOT & OOT & OOT \\ 
        G16 & \textbf{1.78} & 2.02 & OOT & 2.63 & OOT \\ 
        G17 & \textbf{6.66} & 14.37 & OOT & 321.01 & OOT \\ 
        G18 & \textbf{0.12} & 0.33 & 1144.04 & 0.28 & 2621.37 \\ 
        G19 & \textbf{0.39} & 4.43 & 3.07 & 4.23 & OOT \\ 
        G20 & 594.86 & 194.44 & OOT & \textbf{139.13} & OOT \\ 
        G21 & \textbf{122.70} & OOT & OOT & OOT & OOT \\ 
        G22 & \textbf{1.68} & 8.63 & 33.25 & 9.01 & OOT \\ 
        G23 & 13.21 & 132.50 & \textbf{4.81} & 11.67 & OOT \\ 
        G24 & \textbf{288.71} & OOT & OOT & OOT & OOT \\ 
        G25 & \textbf{5.74} & 168.11 & OOT & OOT & OOT \\ 
        G26 & \textbf{1.12} & 3.63 & 3.57 & 3.88 & 4.02 \\ 
        G27 & \textbf{1.79} & 11.92 & OOT & 9.44 & OOT \\ 
        G28 & OOT & OOT & OOT & OOT & OOT \\ 
        G29 & \textbf{2.46} & 4.08 & 23.15 & 3.69 & 36.37 \\ 
        G30 & OOT & OOT & OOT & OOT & OOT \\ \hline
    \end{tabular}
    }
    \label{table:comparing-algorithms-k=15}
\end{table}
\begin{table}[t]
    \centering
    \small
    \caption{Running time in seconds of \texttt{kPEX} and state-of-the-arts on 30 graphs with $k=20$}
    \scalebox{0.86}{
    \begin{tabular}{r|r|rrrr}
    \hline
        \textbf{ID} & \textbf{\texttt{kPEX} (ours)} & \textbf{\texttt{KPLEX}} & \textbf{\texttt{kPlexT}} & \textbf{\texttt{kPlexS}} & \textbf{\texttt{DiseMKP}} \\ \hline
        G1 & \textbf{0.00} & 0.00 & 0.00 & 0.00 & \textbf{0.00} \\ 
        G2 & 0.00 & 0.00 & 0.00 & 0.00 & \textbf{0.00} \\ 
        G3 & OOT & OOT & OOT & OOT & OOT \\ 
        G4 & OOT & OOT & OOT & OOT & OOT \\ 
        G5 & OOT & OOT & OOT & OOT & OOT \\ 
        G6 & OOT & OOT & OOT & OOT & OOT \\ 
        G7 & OOT & OOT & OOT & OOT & OOT \\ 
        G8 & OOT & OOT & OOT & OOT & OOT \\ 
        G9 & \textbf{36.06} & OOT & OOT & OOT & OOT \\ 
        G10 & OOT & OOT & OOT & OOT & OOT \\ \hline
        G11 & \textbf{35.88} & OOT & OOT & OOT & OOT \\ 
        G12 & \textbf{0.30} & 1.08 & 0.53 & 0.99 & OOT \\ 
        G13 & \textbf{1022.03} & OOT & OOT & OOT & OOT \\ 
        G14 & \textbf{3067.04} & OOT & OOT & OOT & OOT \\ 
        G15 & \textbf{409.76} & OOT & OOT & OOT & OOT \\ 
        G16 & \textbf{9.53} & 311.28 & OOT & 2457.82 & OOT \\ 
        G17 & 47.76 & \textbf{41.72} & OOT & 1801.99 & OOT \\ 
        G18 & \textbf{0.19} & 0.83 & OOT & 0.99 & OOT \\ 
        G19 & \textbf{0.44} & 4.50 & 2.61 & 5.01 & OOT \\ 
        G20 & 114.36 & 1398.04 & OOT & \textbf{59.27} & OOT \\ 
        G21 & \textbf{409.39} & OOT & OOT & OOT & OOT \\ 
        G22 & \textbf{1.77} & 7.64 & OOT & 8.22 & OOT \\ 
        G23 & 12.89 & 130.41 & \textbf{5.05} & 28.77 & OOT \\ 
        G24 & \textbf{1045.59} & OOT & OOT & OOT & OOT \\ 
        G25 & \textbf{10.56} & 334.13 & OOT & OOT & OOT \\ 
        G26 & \textbf{1.05} & 3.80 & 3.42 & 3.67 & 2.70 \\ 
        G27 & \textbf{2.10} & 11.52 & OOT & 11.18 & OOT \\ 
        G28 & OOT & OOT & OOT & OOT & OOT \\ 
        G29 & \textbf{2.47} & 2.99 & 2.95 & 3.28 & 12.19 \\ 
        G30 & OOT & OOT & OOT & OOT & OOT \\ \hline
    \end{tabular}
    }
    \label{table:comparing-algorithms-k=20}
\end{table}

\subsection{Effectiveness of Proposed Techniques}
\smallskip
\noindent
\textbf{Effectiveness of \texttt{AltRB}.} 
We compare \texttt{kPEX} with \texttt{kPEX-SeqRB} and report the running times for $k$=2, 3, 10, 15, and 20 in Tables~\ref{table:ablation-k=2},~\ref{table:ablation-k=3},~\ref{table:ablation-k=10},~\ref{table:ablation-k=15}, and~\ref{table:ablation-k=20}, respectively. 
We observe that \texttt{kPEX} performs better than \texttt{kPEX-SeqRB} at most times, and \texttt{AltRB} can bring at least 60$\times$ speedup on G5 when $k$=10.
In addition, we observe that the gap between \texttt{kPEX} and \texttt{kPEX-AltRB} narrows when $k\geq 15$. A possible reason may be that finding a larger heuristic $k$-plex (i.e., \texttt{KPHeuris}) is more important than \texttt{AltRB} for large values of $k$.

\smallskip
\noindent
\textbf{Effectiveness of \texttt{CF-CTCP}.}
We compare \texttt{kPEX} with \texttt{kPEX-CTCP}, and the running times for $k$=2, 3, 10, 15, and 20 are reported in Tables~\ref{table:ablation-k=2},~\ref{table:ablation-k=3},~\ref{table:ablation-k=10},~\ref{table:ablation-k=15}, and ~\ref{table:ablation-k=20}, respectively. First, \texttt{kPEX} and \texttt{kPEX-CTCP} still have similar performance on G1-G10 because the pre-processing techniques take little time on these small synthetic graphs. 
Second, \texttt{kPEX} runs stably at least 50 times faster than \texttt{kPEX-CTCP} on G23 for all tested values of $k$. 
These results show the effectiveness of \texttt{CF-CTCP} on large sparse graphs.

\smallskip
\noindent
\textbf{Effectiveness of \texttt{KPHeuris}.} 
We compare \texttt{kPEX} with its variants \texttt{kPEX-EGo} and \texttt{kPEX-Degen} (note that \texttt{CF-CTCP} is not replaced). The running times for $k$=2, 3, 10, 15, and 20 are shown in Tables~\ref{table:ablation-k=2},~\ref{table:ablation-k=3},~\ref{table:ablation-k=10},~\ref{table:ablation-k=15}, and ~\ref{table:ablation-k=20}, respectively. 
We have the following observations. First, the running time of \texttt{kPEX} is less than that of both variants on the majority of graphs. Then, \texttt{kPEX} runs up to three orders of magnitude faster than both \texttt{kPEX-EGo} and \texttt{kPEX-Degen} on G19 when $k$=20.
This shows that making more effort to finding a larger initial $k$-plex benefits \texttt{kPEX} by narrowing down the search space.

\smallskip
\noindent
\textbf{Effectiveness of \texttt{KPHeuris} and \texttt{CF-CTCP}.}
We also compare the total pre-processing time and the size of the $k$-plex (i.e., $lb$) obtained by different heuristic methods in \texttt{kPEX}, \texttt{kPlexT}, \texttt{kPlexS}, and \texttt{DiseMKP} (note that \texttt{KPLEX} uses the same pre-processing method as \texttt{kPlexS}).
The results for $k$=2, 3, 10, 15, and 20 are reported in Tables~\ref{table:compare-prepro-k=2},~\ref{table:compare-prepro-k=3},~\ref{table:compare-prepro-k=10},~\ref{table:compare-prepro-k=15}, and~\ref{table:compare-prepro-k=20}, respectively.
Note that we exclude the results on synthetic graphs G1-G10 since they have only hundreds of vertices and can be handled within 1 second by all methods.
We have the following observations.
First, \texttt{kPEX} obtains the largest $lb$ (or matches the largest obtained by others) at most time while the pre-processing time remains comparable to other algorithms.
Second, \texttt{KPHeuris} outperforms the other pre-processing algorithms by obtaining a lager $k$-plex while costing much less time on G20 and G22 for all tested values of $k$ . This also verifies the effectiveness of \texttt{CF-CTCP} and \texttt{KPHeuris}.

\begin{table}[!ht]
    \centering
    \small
    \caption{Running time in seconds of \texttt{kPEX} and its variants on 30 graphs with $k=2$}
    \label{table:ablation-k=2}
    \scalebox{0.86}{
        \begin{tabular}{c|c|cccc}
        \hline
            \textbf{ID} & \textbf{\texttt{kPEX}} & \textbf{\texttt{kPEX-SeqRB}} & \textbf{\texttt{kPEX-CTCP}} & \textbf{\texttt{kPEX-EGo}} & \textbf{\texttt{kPEX-Degen}} \\ 
        \hline
        G1 & \textbf{0.95} & 1.23 & 0.95 & 0.95 & 0.95 \\ 
        G2 & \textbf{1982.78} & 2021.27 & 1988.00 & 1992.98 & 1994.67 \\ 
        G3 & 51.67 & 75.39 & \textbf{51.66} & 52.01 & 52.18 \\ 
        G4 & 1.35 & 2.90 & \textbf{1.34} & 1.36 & 1.36 \\ 
        G5 & \textbf{77.13} & 118.45 & 77.47 & 77.41 & 77.25 \\ 
        G6 & OOT & OOT & OOT & OOT & OOT \\ 
        G7 & OOT & OOT & OOT & OOT & OOT \\ 
        G8 & 0.33 & 0.29 & 0.33 & \textbf{0.29} & 0.31 \\ 
        G9 & \textbf{39.77} & 50.32 & 39.81 & 39.97 & 41.21 \\ 
        G10 & 4.07 & 5.16 & 4.05 & \textbf{3.81} & 3.88 \\ \hline
        G11 & \textbf{20.89} & 33.06 & 21.47 & 23.02 & 25.65 \\ 
        G12 & \textbf{0.54} & 0.55 & 2.41 & 0.96 & 1.12 \\ 
        G13 & \textbf{289.18} & 407.48 & 291.83 & 345.93 & 343.31 \\ 
        G14 & \textbf{3173.74} & OOT & 3186.76 & OOT & OOT \\ 
        G15 & \textbf{142.05} & 189.20 & 158.75 & 172.91 & 277.23 \\ 
        G16 & \textbf{1.96} & 2.00 & 6.23 & 3.34 & 3.33 \\ 
        G17 & \textbf{6.30} & 7.06 & 13.16 & 9.55 & 7.25 \\ 
        G18 & \textbf{2.28} & 2.50 & 2.89 & 2.39 & 2.56 \\ 
        G19 & \textbf{8.71} & 8.85 & 33.79 & 16.26 & OOT \\ 
        G20 & 5.68 & \textbf{4.91} & 245.87 & 190.31 & 316.79 \\ 
        G21 & \textbf{13.00} & 22.35 & 13.87 & 20.14 & 19.76 \\ 
        G22 & 3.36 & \textbf{3.20} & 22.43 & 11.40 & 3.43 \\ 
        G23 & 8.66 & \textbf{8.11} & 1418.09 & 469.20 & 8.54 \\ 
        G24 & 24.66 & 41.83 & 26.23 & \textbf{24.27} & 36.39 \\ 
        G25 & 13.90 & 17.56 & 86.75 & 35.87 & \textbf{13.80} \\ 
        G26 & 11.59 & 21.19 & \textbf{11.03} & 12.31 & 11.56 \\ 
        G27 & \textbf{3.18} & 3.31 & 30.24 & 13.21 & 3.58 \\ 
        G28 & 145.64 & \textbf{137.51} & OOT & OOT & OOT \\ 
        G29 & \textbf{6.02} & 7.65 & 6.46 & 10.13 & 10.28 \\ 
        G30 & \textbf{147.23} & 179.48 & 1453.76 & 1437.70 & 251.08 \\ 
        \hline
        \end{tabular}
    }
\end{table}
\begin{table}[!ht]
    \centering
    \small
    \caption{Running time in seconds of \texttt{kPEX} and its variants on 30 graphs with $k=3$}
    \label{table:ablation-k=3}
    \scalebox{0.86}{
        \begin{tabular}{c|c|cccc}
        \hline
            \textbf{ID} & \textbf{\texttt{kPEX}} & \textbf{\texttt{kPEX-SeqRB}} & \textbf{\texttt{kPEX-CTCP}} & \textbf{\texttt{kPEX-EGo}} & \textbf{\texttt{kPEX-Degen}} \\ 
        \hline
        G1 & \textbf{5.38} & 16.99 & 5.40 & 5.41 & 5.42 \\ 
        G2 & OOT & OOT & OOT & OOT & OOT \\ 
        G3 & \textbf{60.32} & 530.53 & 60.48 & 60.65 & 60.88 \\ 
        G4 & \textbf{9.17} & 59.21 & 9.20 & 9.53 & 9.52 \\ 
        G5 & 0.10 & 0.11 & 0.10 & 0.10 & \textbf{0.09} \\ 
        G6 & 2.62 & 21.44 & \textbf{2.62} & 2.85 & 3.18 \\ 
        G7 & OOT & OOT & OOT & OOT & OOT \\ 
        G8 & 0.16 & 0.18 & \textbf{0.15} & 0.17 & 0.17 \\ 
        G9 & \textbf{21.19} & 149.41 & 21.22 & 25.29 & 28.53 \\ 
        G10 & 15.04 & 82.46 & 15.03 & \textbf{14.80} & 15.20 \\ \hline
        G11 & \textbf{6.55} & 33.43 & 7.15 & 11.34 & 12.64 \\ 
        G12 & \textbf{0.41} & 0.41 & 2.29 & 0.92 & 0.91 \\ 
        G13 & \textbf{163.63} & 985.84 & 167.20 & 195.70 & 208.35 \\ 
        G14 & OOT & OOT & OOT & OOT & OOT \\ 
        G15 & \textbf{46.87} & 204.47 & 62.51 & 63.44 & 57.47 \\ 
        G16 & \textbf{1.46} & 1.68 & 5.91 & 2.26 & 1.87 \\ 
        G17 & \textbf{2.14} & 3.31 & 8.94 & 4.53 & 2.76 \\ 
        G18 & \textbf{1.73} & 3.96 & 2.30 & 2.36 & 2.49 \\ 
        G19 & \textbf{0.98} & 1.00 & 24.22 & 8.57 & 49.43 \\ 
        G20 & \textbf{4.28} & 4.40 & 226.14 & 171.87 & 271.25 \\ 
        G21 & \textbf{8.88} & 48.31 & 9.66 & 9.64 & 9.53 \\ 
        G22 & 3.23 & \textbf{2.81} & 20.21 & 9.99 & 3.34 \\ 
        G23 & 13.17 & \textbf{12.77} & 1617.71 & 468.21 & 118.89 \\ 
        G24 & \textbf{15.21} & 80.36 & 16.67 & 16.34 & 16.61 \\ 
        G25 & \textbf{6.92} & 13.36 & 76.74 & 28.93 & 7.86 \\ 
        G26 & 7.07 & 7.14 & \textbf{6.34} & 7.57 & 6.75 \\ 
        G27 & 3.01 & \textbf{2.88} & 27.60 & 12.35 & 3.18 \\ 
        G28 & 117.66 & \textbf{116.05} & OOT & OOT & 3271.82 \\ 
        G29 & 115.24 & 455.48 & 115.38 & \textbf{113.24} & 118.54 \\ 
        G30 & \textbf{200.21} & 833.98 & 1502.80 & 1356.16 & 353.15 \\ 
        \hline
        \end{tabular}
    }
\end{table}
\begin{table}[!ht]
    \centering
    \small
    \caption{Running time in seconds of \texttt{kPEX} and its variants on 30 graphs with $k=10$}
    \label{table:ablation-k=10}
    \scalebox{0.86}{
        \begin{tabular}{c|c|cccc}
        \hline
            \textbf{ID} & \textbf{\texttt{kPEX}} & \textbf{\texttt{kPEX-SeqRB}} & \textbf{\texttt{kPEX-CTCP}} & \textbf{\texttt{kPEX-EGo}} & \textbf{\texttt{kPEX-Degen}} \\ 
        \hline
        G1 & \textbf{462.04} & 1238.20 & 463.16 & 464.53 & 464.13 \\ 
        G2 & \textbf{38.55} & 50.23 & 38.79 & 38.71 & 38.58 \\ 
        G3 & OOT & OOT & OOT & OOT & OOT \\ 
        G4 & OOT & OOT & OOT & OOT & OOT \\ 
        G5 & \textbf{14.63} & 912.36 & 14.65 & 14.72 & 14.67 \\ 
        G6 & OOT & OOT & OOT & OOT & OOT \\ 
        G7 & OOT & OOT & OOT & OOT & OOT \\ 
        G8 & OOT & OOT & OOT & OOT & OOT \\ 
        G9 & \textbf{309.46} & 837.83 & 310.32 & 625.51 & 623.68 \\ 
        G10 & OOT & OOT & OOT & OOT & OOT \\ \hline
        G11 & \textbf{6.00} & 26.39 & 6.66 & 17.61 & 17.30 \\ 
        G12 & \textbf{0.50} & 0.52 & 2.22 & 0.83 & 0.60 \\ 
        G13 & \textbf{859.77} & OOT & 866.51 & 1206.11 & 1200.79 \\ 
        G14 & \textbf{3201.23} & OOT & 3207.34 & 3366.27 & 3346.56 \\ 
        G15 & \textbf{23.07} & 234.55 & 35.02 & 35.29 & 31.29 \\ 
        G16 & \textbf{2.01} & 3.71 & 6.10 & 4.78 & 4.75 \\ 
        G17 & \textbf{2.03} & 13.22 & 7.91 & 6.87 & 5.07 \\ 
        G18 & \textbf{0.51} & 21.17 & 1.09 & 3.77 & 3.57 \\ 
        G19 & \textbf{0.29} & 0.29 & 2.81 & 44.58 & OOT \\ 
        G20 & \textbf{5.17} & 7.01 & 204.72 & 108.26 & 225.67 \\ 
        G21 & \textbf{20.86} & 125.24 & 21.52 & 50.77 & 50.31 \\ 
        G22 & 2.07 & \textbf{1.97} & 15.59 & 7.08 & 2.05 \\ 
        G23 & 13.47 & \textbf{12.81} & 1613.00 & 462.58 & 116.71 \\ 
        G24 & \textbf{34.49} & 195.37 & 35.84 & 52.85 & 51.96 \\ 
        G25 & 2.81 & \textbf{2.73} & 58.16 & 17.78 & 2.91 \\ 
        G26 & 1.10 & \textbf{1.09} & 2.87 & 2.55 & 1.48 \\ 
        G27 & 2.47 & \textbf{2.33} & 23.40 & 9.59 & 2.38 \\ 
        G28 & 263.01 & \textbf{261.12} & OOT & OOT & OOT \\ 
        G29 & 2.89 & 3.18 & 2.86 & 2.76 & \textbf{2.58} \\ 
        G30 & OOT & OOT & OOT & OOT & OOT \\ 
        \hline
        \end{tabular}
    }
\end{table}
\begin{table}[!ht]
    \centering
    \small
    \caption{Running time in seconds of \texttt{kPEX} and its variants on 30 graphs with $k=15$}
    \label{table:ablation-k=15}
    \scalebox{0.86}{
        \begin{tabular}{c|c|cccc}
        \hline
            \textbf{ID} & \textbf{\texttt{kPEX}} & \textbf{\texttt{kPEX-SeqRB}} & \textbf{\texttt{kPEX-CTCP}} & \textbf{\texttt{kPEX-EGo}} & \textbf{\texttt{kPEX-Degen}} \\ 
        \hline
        G1 & 7.84 & 10.63 & 7.85 & \textbf{7.72} & 7.94 \\ 
        G2 & 0.06 & 0.06 & 0.05 & 0.06 & \textbf{0.05} \\ 
        G3 & OOT & OOT & OOT & OOT & OOT \\ 
        G4 & OOT & OOT & OOT & OOT & OOT \\ 
        G5 & OOT & OOT & OOT & OOT & OOT \\ 
        G6 & OOT & OOT & OOT & OOT & OOT \\ 
        G7 & OOT & OOT & OOT & OOT & OOT \\ 
        G8 & OOT & OOT & OOT & OOT & OOT \\ 
        G9 & 115.99 & 120.17 & 115.87 & \textbf{113.30} & 118.43 \\ 
        G10 & OOT & OOT & OOT & OOT & OOT \\ \hline
        G11 & 18.38 & 19.24 & \textbf{18.37} & 71.93 & 74.15 \\ 
        G12 & \textbf{0.47} & 0.47 & 2.11 & 0.78 & 0.62 \\ 
        G13 & 1338.78 & 1497.62 & \textbf{1329.66} & 2796.92 & 2867.16 \\ 
        G14 & OOT & OOT & OOT & OOT & OOT \\ 
        G15 & \textbf{141.05} & 167.43 & 148.50 & 170.58 & 173.08 \\ 
        G16 & \textbf{1.78} & 1.86 & 5.55 & 2.48 & 2.98 \\ 
        G17 & 6.66 & 7.52 & 12.19 & 7.16 & \textbf{5.70} \\ 
        G18 & \textbf{0.12} & 0.13 & 0.67 & 0.34 & 0.18 \\ 
        G19 & \textbf{0.39} & 0.40 & 0.70 & 728.27 & OOT \\ 
        G20 & \textbf{594.86} & 1240.20 & 785.65 & 943.65 & 1486.61 \\ 
        G21 & 122.70 & 131.19 & \textbf{121.98} & 225.87 & 233.15 \\ 
        G22 & 1.68 & 1.68 & 12.05 & 5.75 & \textbf{1.66} \\ 
        G23 & 13.21 & \textbf{13.15} & 1605.10 & 458.65 & 116.44 \\ 
        G24 & 288.71 & 316.40 & 286.53 & \textbf{281.00} & 293.41 \\ 
        G25 & 5.74 & 5.74 & 55.64 & 19.29 & \textbf{5.72} \\ 
        G26 & 1.12 & \textbf{1.10} & 2.72 & 2.49 & 1.48 \\ 
        G27 & 1.79 & \textbf{1.75} & 17.87 & 7.54 & 1.81 \\ 
        G28 & OOT & OOT & OOT & OOT & OOT \\ 
        G29 & 2.46 & 2.57 & 2.54 & 2.15 & \textbf{2.10} \\ 
        G30 & OOT & OOT & OOT & OOT & OOT \\ 
        \hline
        \end{tabular}
    }
\end{table}
\begin{table}[!ht]
    \centering
    \small
    \caption{Running time in seconds of \texttt{kPEX} and its variants on 30 graphs with $k=20$}
    \label{table:ablation-k=20}
    \scalebox{0.86}{
        \begin{tabular}{c|c|cccc}
        \hline
            \textbf{ID} & \textbf{\texttt{kPEX}} & \textbf{\texttt{kPEX-SeqRB}} & \textbf{\texttt{kPEX-CTCP}} & \textbf{\texttt{kPEX-EGo}} & \textbf{\texttt{kPEX-Degen}} \\ 
        \hline
        G1 & \textbf{0.00} & 0.00 & \textbf{0.00} & 0.00 & \textbf{0.00} \\ 
        G2 & 0.00 & 0.00 & 0.00 & \textbf{0.00} & \textbf{0.00} \\ 
        G3 & OOT & OOT & OOT & OOT & OOT \\ 
        G4 & OOT & OOT & OOT & OOT & OOT \\ 
        G5 & OOT & OOT & OOT & OOT & OOT \\ 
        G6 & OOT & OOT & OOT & OOT & OOT \\ 
        G7 & OOT & OOT & OOT & OOT & OOT \\ 
        G8 & OOT & OOT & OOT & OOT & OOT \\ 
        G9 & 36.06 & \textbf{35.42} & 35.87 & 71.18 & 74.99 \\ 
        G10 & OOT & OOT & OOT & OOT & OOT \\ \hline
        G11 & 35.88 & 35.96 & \textbf{35.82} & 58.23 & 59.62 \\ 
        G12 & \textbf{0.30} & 0.30 & 1.88 & 0.81 & 0.56 \\ 
        G13 & 1022.03 & \textbf{1010.12} & 1020.25 & OOT & OOT \\ 
        G14 & 3067.04 & 3450.98 & \textbf{3055.50} & OOT & OOT \\ 
        G15 & \textbf{409.76} & 416.74 & 412.44 & 1627.16 & 1688.80 \\ 
        G16 & 9.53 & 10.78 & 10.71 & \textbf{9.37} & 24.72 \\ 
        G17 & \textbf{47.76} & 58.52 & 52.69 & 48.09 & 49.33 \\ 
        G18 & \textbf{0.19} & 0.19 & 0.67 & 0.35 & 0.21 \\ 
        G19 & \textbf{0.44} & 0.45 & 0.57 & OOT & OOT \\ 
        G20 & \textbf{114.36} & 122.56 & 274.94 & 625.36 & OOT \\ 
        G21 & 409.39 & 408.68 & \textbf{407.72} & 572.13 & 600.41 \\ 
        G22 & 1.77 & \textbf{1.76} & 12.48 & 5.96 & 2.10 \\ 
        G23 & 12.89 & \textbf{12.65} & 1602.86 & 452.09 & 116.28 \\ 
        G24 & 1045.59 & 1060.97 & \textbf{1040.63} & 1134.76 & 1191.44 \\ 
        G25 & 10.56 & \textbf{10.50} & 58.15 & 24.76 & 11.29 \\ 
        G26 & \textbf{1.05} & 1.10 & 2.36 & 2.34 & 1.39 \\ 
        G27 & \textbf{2.10} & 2.10 & 16.94 & 7.54 & 2.17 \\ 
        G28 & OOT & OOT & OOT & OOT & OOT \\ 
        G29 & 2.47 & 2.47 & 2.45 & 1.91 & \textbf{1.84} \\ 
        G30 & OOT & OOT & OOT & OOT & \textbf{878.37} \\ 
        \hline
        \end{tabular}
    }
\end{table}

\begin{table}[!ht]
    \caption{Pre-processing time in seconds on 20 graphs with $k$=2
    ($lb$ denotes the size of the computed heuristic $k$-plex)
    }
    \label{table:compare-prepro-k=2}
    \centering
    \small
    \scalebox{0.99}{
        \begin{tabular}{r|cc|cc|cc|cc}
        \hline
        \multirow{2}{*}{\textbf{ID}} & \multicolumn{2}{c|}{\textbf{\texttt{kPEX}}} & \multicolumn{2}{c|}{\textbf{\texttt{kPlexT}}} & \multicolumn{2}{c|}{\textbf{\texttt{kPlexS}}} & \multicolumn{2}{c}{\textbf{\texttt{DiseMKP}}} \\
         & \textbf{time} & \textbf{$lb$} & \textbf{time} & \textbf{$lb$} & \textbf{time} & \textbf{$lb$} & \textbf{time} & \textbf{$lb$} \\ 
         \hline
         G11 & 1.08 & \textbf{37} & 0.46 & \textbf{37} & \textbf{0.27} & 34 & 0.54 & 35 \\ 
        G12 & \textbf{0.52} & \textbf{63} & 0.77 & 62 & 1.32 & 21 & 1.78 & 23 \\ 
        G13 & 9.08 & \textbf{26} & \textbf{1.64} & 25 & 2.21 & 25 & 3.70 & 23 \\ 
        G14 & 25.60 & \textbf{54} & 5.17 & 52 & \textbf{2.91} & 51 & 6.12 & 52 \\ 
        G15 & 7.72 & \textbf{36} & \textbf{2.80} & \textbf{36} & 4.09 & 32 & 13.11 & 34 \\ 
        G16 & 1.16 & \textbf{19} & \textbf{0.91} & 17 & 1.96 & 9 & 4.07 & 9 \\ 
        G17 & 1.17 & \textbf{28} & \textbf{0.72} & 27 & 1.86 & 27 & 6.02 & 26 \\ 
        G18 & 0.33 & \textbf{70} & 0.42 & \textbf{70} & \textbf{0.29} & 69 & 0.55 & \textbf{70} \\ 
        G19 & 4.48 & \textbf{35} & 1.05 & 34 & \textbf{0.79} & 30 & 15.29 & 31 \\ 
        G20 & \textbf{4.02} & \textbf{15} & 4.10 & 14 & 9.06 & 10 & 141.38 & 4 \\ 
        G21 & 1.01 & \textbf{32} & \textbf{0.41} & 30 & 0.44 & 30 & 0.69 & 28 \\ 
        G22 & \textbf{3.36} & \textbf{31} & 5.58 & \textbf{31} & 20.15 & 17 & 21.32 & 18 \\ 
        G23 & 8.66 & \textbf{6} & \textbf{4.92} & 5 & 13.76 & 5 & 990.37 & 5 \\ 
        G24 & 1.75 & \textbf{32} & \textbf{0.82} & \textbf{32} & 0.98 & 30 & 1.58 & 30 \\ 
        G25 & \textbf{4.37} & \textbf{60} & 8.81 & \textbf{60} & 37.18 & \textbf{60} & 56.61 & 59 \\ 
        G26 & \textbf{3.01} & \textbf{872} & 3.57 & \textbf{872} & 3.76 & \textbf{872} & 3.20 & \textbf{872} \\ 
        G27 & \textbf{3.12} & \textbf{27} & 5.27 & \textbf{27} & 20.55 & 24 & 24.79 & 24 \\ 
        G28 & 102.61 & \textbf{11} & \textbf{72.20} & 10 & 201.25 & 9 & OOT & - \\ 
        G29 & \textbf{2.68} & 273 & 3.28 & \textbf{274} & 3.14 & 271 & 9.38 & 272 \\ 
        G30 & \textbf{108.37} & \textbf{52} & 144.45 & 51 & 287.50 & 10 & 1688.08 & 10 \\ 
        \hline
        \end{tabular}
    }
\end{table}
\begin{table}[!ht]
    \caption{Pre-processing time in seconds on 20 graphs with $k$=3
    ($lb$ denotes the size of the computed heuristic $k$-plex)
    }
    \label{table:compare-prepro-k=3}
    \centering
    \small
    \scalebox{0.99}{
        \begin{tabular}{r|cc|cc|cc|cc}
        \hline
        \multirow{2}{*}{\textbf{ID}} & \multicolumn{2}{c|}{\textbf{\texttt{kPEX}}} & \multicolumn{2}{c|}{\textbf{\texttt{kPlexT}}} & \multicolumn{2}{c|}{\textbf{\texttt{kPlexS}}} & \multicolumn{2}{c}{\textbf{\texttt{DiseMKP}}} \\
         & \textbf{time} & \textbf{$lb$} & \textbf{time} & \textbf{$lb$} & \textbf{time} & \textbf{$lb$} & \textbf{time} & \textbf{$lb$} \\ 
         \hline
         G11 & 0.82 & \textbf{44} & 0.45 & 42 & \textbf{0.28} & 40 & 0.55 & 41 \\ 
        G12 & \textbf{0.40} & \textbf{67} & 0.70 & 66 & 1.38 & 26 & 1.72 & 27 \\ 
        G13 & 7.58 & \textbf{30} & \textbf{1.89} & 29 & 2.28 & 28 & 3.72 & 27 \\ 
        G14 & 11.84 & \textbf{58} & 5.23 & \textbf{58} & \textbf{3.23} & 56 & 6.17 & 56 \\ 
        G15 & 4.70 & \textbf{41} & \textbf{2.24} & 40 & 4.22 & 40 & 11.86 & 39 \\ 
        G16 & 1.10 & \textbf{22} & \textbf{0.78} & 21 & 1.76 & 11 & 4.15 & 10 \\ 
        G17 & 0.84 & \textbf{33} & \textbf{0.70} & 32 & 1.79 & 31 & 5.91 & 29 \\ 
        G18 & 0.45 & \textbf{77} & 0.40 & 75 & \textbf{0.28} & 74 & 0.54 & 74 \\ 
        G19 & 0.97 & \textbf{39} & 0.98 & \textbf{39} & \textbf{0.74} & 34 & 14.34 & 34 \\ 
        G20 & \textbf{3.39} & \textbf{17} & 4.49 & 16 & 8.21 & 10 & 133.31 & 6 \\ 
        G21 & 0.99 & \textbf{35} & \textbf{0.44} & 34 & 0.45 & 33 & 0.68 & 33 \\ 
        G22 & \textbf{3.23} & \textbf{32} & 5.63 & 31 & 20.64 & 20 & 22.30 & 18 \\ 
        G23 & 13.17 & \textbf{7} & \textbf{5.79} & \textbf{7} & 11.91 & 6 & 1723.31 & 6 \\ 
        G24 & 1.70 & \textbf{35} & \textbf{0.84} & 34 & 1.00 & 31 & 1.58 & 31 \\ 
        G25 & \textbf{3.83} & \textbf{66} & 9.69 & \textbf{66} & 35.19 & 64 & 52.06 & 65 \\ 
        G26 & \textbf{3.24} & \textbf{875} & 3.51 & \textbf{875} & 3.57 & \textbf{875} & 3.32 & \textbf{875} \\ 
        G27 & \textbf{3.01} & \textbf{32} & 5.16 & \textbf{32} & 19.15 & 27 & 24.36 & 28 \\ 
        G28 & \textbf{77.67} & \textbf{13} & 152.26 & 12 & 320.38 & 9 & OOT & - \\ 
        G29 & \textbf{2.53} & \textbf{280} & 3.12 & \textbf{280} & 3.47 & \textbf{280} & 9.15 & 279 \\ 
        G30 & \textbf{118.35} & \textbf{57} & 143.21 & 55 & 259.23 & 12 & 1481.51 & 12 \\ 
        \hline
        \end{tabular}
    }
\end{table}
\begin{table}[!ht]
    \caption{Pre-processing time in seconds on 20 graphs with $k$=10
    ($lb$ denotes the size of the computed heuristic $k$-plex)
    }
    \label{table:compare-prepro-k=10}
    \centering
    \small
    \scalebox{0.99}{
        \begin{tabular}{r|cc|cc|cc|cc}
        \hline
        \multirow{2}{*}{\textbf{ID}} & \multicolumn{2}{c|}{\textbf{\texttt{kPEX}}} & \multicolumn{2}{c|}{\textbf{\texttt{kPlexT}}} & \multicolumn{2}{c|}{\textbf{\texttt{kPlexS}}} & \multicolumn{2}{c}{\textbf{\texttt{DiseMKP}}} \\
         & \textbf{time} & \textbf{$lb$} & \textbf{time} & \textbf{$lb$} & \textbf{time} & \textbf{$lb$} & \textbf{time} & \textbf{$lb$} \\ 
         \hline
         G11 & \textbf{0.25} & \textbf{67} & 0.37 & 65 & 0.28 & 65 & 0.49 & \textbf{67} \\ 
        G12 & \textbf{0.50} & \textbf{82} & 0.64 & 74 & 1.18 & 45 & 1.58 & 45 \\ 
        G13 & 2.55 & \textbf{54} & \textbf{1.42} & 52 & 2.55 & 52 & 3.91 & \textbf{54} \\ 
        G14 & 4.58 & \textbf{90} & 4.55 & 88 & \textbf{3.39} & 88 & 6.06 & 87 \\ 
        G15 & \textbf{1.84} & \textbf{65} & 1.85 & 64 & 3.73 & 64 & 8.75 & 62 \\ 
        G16 & \textbf{0.73} & \textbf{40} & 0.77 & 36 & 1.71 & 25 & 3.52 & 24 \\ 
        G17 & 0.64 & \textbf{53} & \textbf{0.59} & 48 & 1.66 & 48 & 5.04 & 49 \\ 
        G18 & \textbf{0.09} & \textbf{102} & 0.29 & 101 & 0.23 & 101 & 0.46 & 101 \\ 
        G19 & \textbf{0.28} & \textbf{53} & 5.25 & 47 & 1.18 & 44 & 1.75 & 44 \\ 
        G20 & \textbf{3.99} & \textbf{31} & 33.64 & 29 & 31.91 & 20 & 79.90 & 20 \\ 
        G21 & 0.53 & \textbf{57} & 0.55 & 54 & \textbf{0.42} & 54 & 0.61 & 54 \\ 
        G22 & \textbf{2.07} & \textbf{44} & 3.94 & 38 & 13.23 & 34 & 13.02 & 35 \\ 
        G23 & 13.47 & \textbf{21} & \textbf{5.90} & \textbf{21} & 13.03 & 20 & 1688.87 & 20 \\ 
        G24 & 0.89 & \textbf{57} & \textbf{0.63} & 55 & 0.95 & 55 & 1.26 & 54 \\ 
        G25 & \textbf{2.43} & \textbf{92} & 6.75 & 91 & 30.65 & 91 & 40.20 & 91 \\ 
        G26 & \textbf{1.10} & \textbf{891} & 4.33 & \textbf{891} & 4.75 & \textbf{891} & 4.20 & \textbf{891} \\ 
        G27 & \textbf{2.47} & \textbf{46} & 4.62 & 45 & 17.20 & 40 & 19.06 & 41 \\ 
        G28 & \textbf{241.44} & \textbf{27} & 1808.29 & 22 & 3247.72 & 21 & OOT & - \\ 
        G29 & \textbf{2.44} & \textbf{316} & 3.25 & \textbf{316} & 3.33 & \textbf{316} & 9.32 & 315 \\ 
        G30 & \textbf{99.21} & \textbf{82} & 124.70 & 77 & 229.23 & 26 & 887.60 & 26 \\ 
        \hline
        \end{tabular}
    }
\end{table}
\begin{table}[!ht]
    \caption{Pre-processing time in seconds on 20 graphs with $k$=15
    ($lb$ denotes the size of the computed heuristic $k$-plex)
    }
    \label{table:compare-prepro-k=15}
    \centering
    \small
    \scalebox{0.99}{
        \begin{tabular}{r|cc|cc|cc|cc}
        \hline
        \multirow{2}{*}{\textbf{ID}} & \multicolumn{2}{c|}{\textbf{\texttt{kPEX}}} & \multicolumn{2}{c|}{\textbf{\texttt{kPlexT}}} & \multicolumn{2}{c|}{\textbf{\texttt{kPlexS}}} & \multicolumn{2}{c}{\textbf{\texttt{DiseMKP}}} \\
         & \textbf{time} & \textbf{$lb$} & \textbf{time} & \textbf{$lb$} & \textbf{time} & \textbf{$lb$} & \textbf{time} & \textbf{$lb$} \\ 
         \hline
         G11 & \textbf{0.20} & \textbf{79} & 0.34 & 77 & 0.26 & 77 & 0.45 & 77 \\ 
        G12 & \textbf{0.46} & \textbf{89} & 0.62 & 78 & 1.16 & 54 & 1.47 & 55 \\ 
        G13 & 2.02 & \textbf{67} & \textbf{1.37} & 63 & 2.42 & 63 & 3.86 & 65 \\ 
        G14 & \textbf{2.85} & \textbf{108} & 4.38 & 107 & 3.41 & 107 & 5.90 & 107 \\ 
        G15 & 1.46 & \textbf{77} & \textbf{1.39} & 76 & 3.45 & 76 & 7.58 & 74 \\ 
        G16 & 0.86 & \textbf{51} & \textbf{0.83} & 47 & 1.55 & 33 & 3.01 & 33 \\ 
        G17 & \textbf{0.45} & \textbf{64} & 0.56 & 59 & 1.57 & 59 & 4.76 & 57 \\ 
        G18 & \textbf{0.06} & \textbf{116} & 0.25 & 115 & 0.21 & 115 & 0.41 & 115 \\ 
        G19 & \textbf{0.38} & \textbf{59} & 2.35 & 50 & 1.22 & 49 & 0.42 & 49 \\ 
        G20 & \textbf{4.80} & \textbf{41} & 28.86 & 37 & 28.31 & 30 & 59.46 & 30 \\ 
        G21 & 0.44 & \textbf{69} & 0.52 & 67 & \textbf{0.40} & 67 & 0.59 & 68 \\ 
        G22 & \textbf{1.68} & \textbf{49} & 3.49 & 42 & 10.83 & 42 & 10.67 & 42 \\ 
        G23 & 13.21 & \textbf{31} & \textbf{5.81} & \textbf{31} & 12.79 & 30 & 1713.30 & 30 \\ 
        G24 & 0.66 & \textbf{69} & \textbf{0.59} & \textbf{69} & 0.82 & \textbf{69} & 1.15 & 66 \\ 
        G25 & \textbf{2.26} & \textbf{101} & 6.66 & \textbf{101} & 29.13 & \textbf{101} & 37.67 & \textbf{101} \\ 
        G26 & \textbf{1.12} & \textbf{900} & 4.22 & \textbf{900} & 4.24 & \textbf{900} & 3.91 & \textbf{900} \\ 
        G27 & \textbf{1.78} & \textbf{53} & 3.71 & 51 & 13.72 & 51 & 17.02 & 50 \\ 
        G28 & \textbf{469.30} & \textbf{36} & 1307.77 & 31 & 3177.53 & 31 & OOT & - \\ 
        G29 & \textbf{2.32} & \textbf{332} & 3.62 & \textbf{332} & 3.72 & \textbf{332} & 9.04 & 331 \\ 
        G30 & \textbf{86.23} & \textbf{98} & 111.59 & 92 & 215.72 & 36 & 659.92 & 36 \\ 
        \hline
        \end{tabular}
    }
\end{table}
\begin{table}[!ht]
    \caption{Pre-processing time in seconds on 20 graphs with $k$=20
    ($lb$ denotes the size of the computed heuristic $k$-plex)
    }
    \label{table:compare-prepro-k=20}
    \centering
    \small
    \scalebox{0.99}{
        \begin{tabular}{r|cc|cc|cc|cc}
        \hline
        \multirow{2}{*}{\textbf{ID}} & \multicolumn{2}{c|}{\textbf{\texttt{kPEX}}} & \multicolumn{2}{c|}{\textbf{\texttt{kPlexT}}} & \multicolumn{2}{c|}{\textbf{\texttt{kPlexS}}} & \multicolumn{2}{c}{\textbf{\texttt{DiseMKP}}} \\
         & \textbf{time} & \textbf{$lb$} & \textbf{time} & \textbf{$lb$} & \textbf{time} & \textbf{$lb$} & \textbf{time} & \textbf{$lb$} \\ 
         \hline
         G11 & \textbf{0.17} & \textbf{89} & 0.31 & 88 & 0.24 & 88 & 0.41 & 87 \\ 
        G12 & \textbf{0.30} & \textbf{96} & 0.58 & 78 & 1.08 & 64 & 1.42 & 63 \\ 
        G13 & 2.36 & \textbf{79} & \textbf{1.33} & 76 & 2.40 & 76 & 3.60 & 77 \\ 
        G14 & \textbf{2.18} & \textbf{123} & 4.21 & 122 & 3.25 & 122 & 5.70 & 121 \\ 
        G15 & \textbf{1.33} & \textbf{88} & 1.67 & 85 & 3.35 & 85 & 6.55 & 86 \\ 
        G16 & 2.75 & \textbf{59} & 1.31 & 51 & \textbf{0.80} & 40 & 1.41 & 40 \\ 
        G17 & \textbf{0.46} & \textbf{74} & 0.57 & 67 & 1.51 & 67 & 4.54 & 67 \\ 
        G18 & \textbf{0.06} & \textbf{124} & 0.24 & 123 & 0.20 & 123 & 0.39 & 123 \\ 
        G19 & 0.33 & \textbf{64} & 1.08 & 54 & 1.01 & 54 & \textbf{0.31} & 54 \\ 
        G20 & \textbf{5.30} & \textbf{50} & 31.26 & 44 & 25.04 & 40 & 45.51 & 40 \\ 
        G21 & 0.64 & \textbf{79} & 0.49 & 76 & \textbf{0.39} & 76 & 0.57 & 78 \\ 
        G22 & \textbf{1.76} & \textbf{55} & 5.07 & 47 & 11.81 & 47 & 11.71 & 48 \\ 
        G23 & 12.89 & \textbf{41} & \textbf{5.90} & \textbf{41} & 13.27 & 40 & 1695.05 & 40 \\ 
        G24 & 0.77 & \textbf{78} & \textbf{0.58} & 77 & 0.76 & 77 & 1.08 & 75 \\ 
        G25 & \textbf{2.22} & \textbf{111} & 6.75 & 110 & 25.55 & 110 & 35.31 & 110 \\ 
        G26 & \textbf{1.05} & \textbf{910} & 4.25 & \textbf{910} & 4.29 & \textbf{910} & 2.94 & \textbf{910} \\ 
        G27 & \textbf{1.76} & \textbf{60} & 3.96 & 58 & 12.15 & 58 & 14.96 & 58 \\ 
        G28 & 1986.00 & \textbf{45} & \textbf{779.28} & 40 & 2487.34 & 40 & 947.97 & 40 \\ 
        G29 & \textbf{2.38} & \textbf{343} & 3.33 & \textbf{343} & 3.62 & \textbf{343} & 8.70 & 342 \\ 
        G30 & \textbf{76.54} & \textbf{112} & 98.14 & 104 & 200.47 & 46 & 554.62 & 45 \\ 
        \hline
        \end{tabular}
    }
\end{table}

\clearpage

\end{document}